\definecolor{Gray}{gray}{.9}
\setlist[description]{leftmargin=6pt,labelindent=6pt}
\newtheorem{lemma}{Lemma}
\newtheorem{theorem}{Theorem}
\newtheorem{corollary}{Corollary}
\newtheorem{definition}{Definition}
\newtheorem{example}{Example}
\newcommand{\Athena}{\textsc{Athena}\xspace}
\newcommand{\Metis}{\textsc{Metis}\xspace}
\newcommand{\Minerva}{\textsc{Minerva}\xspace}
\newcommand{\B}{{{B2}}\xspace}
\newcommand{\R}{{{R2}}\xspace}
\newcommand{\Bravo}{\textsc{Bravo}\xspace}
\begin{document}              
\pagenumbering{arabic} \baselineskip 15pt \date{}
\pagestyle{myheadings} \markboth{\footnotesize}{\footnotesize \Athena Ballot Polling Audits}
 \title{The \Athena Class of Risk-Limiting Ballot Polling Audits\thanks{This material is based upon work supported in part by NSF Awards 2015253 and 1421373}}
\author[1]{Filip Zag\'{o}rski\thanks{filip.zagorski@gmail.com, Author was partially supported by
Polish National Science Centre contract number DEC-2013/09/D/ST6/03927}}

\author[2]{Grant McClearn}
\author[2]{Sarah Morin}
\author[ ]{Neal McBurnett}
\author[2]{Poorvi  L. Vora\thanks{poorvi@gwu.edu}}

\affil[1]{Wroclaw University of Science and Technology}
\affil[2]{Department of Computer Science\\The George Washington University}

\date{\today}       
\maketitle                    
\begin{abstract}
The main risk-limiting ballot polling audit in use today, \Bravo, is designed for use when single ballots are drawn at random and a decision regarding whether to stop the audit or draw another ballot is taken after each ballot draw ({\em ballot-by-ballot (B2)} audits). On the other hand, real ballot polling audits draw many ballots in a single round before determining whether to stop ({\em round-by-round (R2)} audits). We show that \Bravo results in significant inefficiency when directly applied to real \R audits. We present the \Athena class of \R stopping rules, which we show are risk-limiting if the round schedule is pre-determined (before the audit begins). We prove that each rule is at least as efficient as the corresponding \Bravo stopping rule applied at the end of the round. We have open-source software libraries implementing most of our results. 

We show that \Athena halves the number of ballots required, for all state margins in the 2016 US Presidential election and a first round with $90\%$ stopping probability, when compared to \Bravo (stopping rule applied at the end of the round). We present simulation results supporting the $90\%$ stopping probability claims and our claims for the risk accrued in the first round. Further, \Athena reduces the number of ballots by more than a quarter for low margins, when compared to the \Bravo stopping rule applied on ballots in selection order. This implies that keeping track of the order when drawing ballots \R is not beneficial, because \Athena is more efficient even without information on selection order. These results are significant because current approaches to real ballot polling election audits use the \B \Bravo rules, requiring about twice as much work on the part of election officials. Applying the rules in selection order requires fewer ballots, but keeping track of the order, and entering it into audit software, adds to the effort.  

All our contributions are for audits with zero error of the second kind. Our approach relies on analytical expressions we derive for stopping probabilities. The results of these analytical expressions are verified by comparison with the percentiles Lindeman {\em et al.} previously obtained using \B \Bravo simulations \cite{bravo}. 

We believe our results may be applied in a straightforward fashion to other SPRTs with $\beta=0$ if the stopping condition is monotonic increasing with the number of winner ballots (Bayesian audits are an example) but proofs in this paper apply only to \Bravo. 

\end{abstract}
\section{Introduction}
\label{sec:intro}
The most popular examples of election tabulation ballot polling audits include \Bravo \cite{bravo} and Bayesian audits \cite{bayesian-audits}; these audits may be viewed as special cases/extensions of the {\em sequential probability ratio test} ({\em SPRT}), see \cite{Bayesian-RLA,RLA-2020}. When the decisions of whether to stop the audit or draw more ballots are taken after each ballot draw, and the stopping condition is satisfied exactly when the audit is stopped, these audits---as {\em SPRT}s---are {\em most efficient} audits. The term {\em most efficient} refers here, as elsewhere, to an audit requiring the smallest expected number of ballots given either hypothesis: a correct election outcome or an incorrect one, if the election is drawn from the assumed prior. The expectation is taken over the randomness of the ballot draws, and, in the case of Bayesian audits, also the randomness of the true tally (Bayesian audits treat the true tally as an unknown random variable). 

In real election audits, multiple ballots are drawn in a round before a decision is taken. This paper shows that \Bravo is not a most efficient test in this case, and proposes the \Athena class of more efficient tests, demonstrating significant decreases in first-round sizes, and proving that the tests are risk-limited if the round schedule is pre-determined (before the audit begins). This could be of consequence for election audits of the 2020 US Presidential election. 

\subsection{Problem}
We refer to audits where decisions are taken after each ballot draw as {\em ballot-by-ballot} or {\em B2} audits. The general audit, however, is a {\em round-by-round} or \R audit where, in the $j^{th}$ round, some ballots are drawn, after which a decision is taken regarding whether to (a) stop the audit and declare the election outcome correct, (b) stop the audit and go to a manual recount, or (c) draw the $(j+1)^{th}$ round. A \B audit is a special case of the \R audit, when a single ballot is drawn in each round. 

There are two ways to apply \B audit rules to an \R audit. Consider a total of $n_j$ ballots drawn after the $j^{th}$ round, of which $k_j$ are for the reported winner. 
\begin{itemize}
\item {\em End-of-round:} In this application, the \B stopping rule for $k_j$ winner ballots in a sample of $n_j$ ballots determines whether the audit will stop. 
\item {\em Selection-ordered-ballots:} In this application, ballot order is recorded and the \B stopping condition is tested $\forall n \leq n_j$. The audit stops if the \B condition is satisfied for any value of $n \leq n_j$. 
\end{itemize}
{\em Selection-ordered-ballots} is generally more efficient than {\em end-of-round} as a means of applying \B rules to \R audits, but requires the significant additional effort of preserving enough information to be able to recreate the subtotals of winner ballots in selection order. {\em End-of-round} relies only on the tallies and does not require selection order. As our paper shows, neither is a most efficient \R stopping rule. 

One may view the problem we address as lying somewhere between (a) the problem solved by Neyman-Pearson \cite{van-trees}: derive a single-use binary hypothesis testing rule satisfying certain error criteria, and (b) the problem solved by Wald \cite{wald}: derive a stopping condition for sequential sampling, satisfying certain error criteria, where the condition is tested draw-by-draw. We address the problem of sequential sampling in rounds, where the condition is tested after multiple draws. 

\subsection{Our Contributions}
Our contributions are as follows:
\begin{enumerate}
\item We derive analytical expressions for the risk and probability of stopping, given the history of rounds and the margin for the \Bravo audit. Treating the \B \Bravo audit as an \R audit with $n_j=j$, we verify that the expressions we derive predict the stopping percentiles originally obtained by Lindeman {\em et al.} using \Bravo simulations \cite[Table 1]{bravo}. The average of the absolute value of the fractional difference between our results and those of \cite{bravo} is 0.13\%. The largest difference has value $190$ ballots, corresponding to a fractional difference of 0.41 \%, in the estimate of the expected number of ballots drawn for a margin of 1\%. This difference could be due to small inaccuracies in our computational approach (such as rounding off errors or the finiteness of summations involved in the computations) or the finiteness of the number of simulations used to generate the results of \cite{bravo}. Our approach is easily extended to audits with stopping conditions that are monotone increasing in the number of ballots for the announced winner, such as Bayesian audits. The code for computing these expressions is available as a MATLAB library, released as open-source under the MIT License \cite{brla_explore}. 
\item We present the \Athena class of \R stopping rules for audits \Minerva and \Athena and prove that, if the round schedule is pre-determined (before the audit begins), \Minerva and \Athena are both risk-limiting and at least as efficient as the corresponding {\em end-of-round} \Bravo stopping rule. Another audit from the \Athena class, \Metis, is out of scope for this draft. 
\item We provide experimental results and software to support the use of the proposed audits: 
\begin{itemize}
\item To illustrate the efficiency improvements, we compute (without simulations, using the derived analytical expressions), for each state in the 2016 US Presidential election, risk limit $\alpha =0.1$ and a stopping probability of 0.9, first round sizes for {\em end-of-round} \Bravo and \Athena. We find that \Athena requires about half the number of ballots, across all margins. 
\item We compute first round sizes for {\em selection-ordered-ballots} \Bravo and find that \Athena requires about $15-29\%$ fewer ballots for the data of the 2016 US Presidential election, with the improvement being better for smaller margins. Thus \Athena is more efficient than {\em selection-ordered-ballots} \Bravo and does not require the additional book-keeping of recording selection ballot order. 
\item We present the results of simulations supporting our predictions of first round stopping probabilities and the risk-limiting properties of \Athena.
\item Our code for the audits is available as MATLAB and Python libraries \cite{Athena,brla_explore,r2b2}. All code is released as open source under the MIT license.
\end{itemize}
This contribution is important because a number of states have undertaken ballot polling pilots in the last year and plan to use ballot polling audits in November 2020. We hope that our results and code can help developers of auditing software. We also note that, in many scenarios, ballot comparison or batch comparison audits could be more feasible. One may also consider combinations of ballot comparison and ballot polling audits, such as described in \cite{suite}.  
\item The \Athena class of \R stopping rules is a class of \B rules when round size is one. Of theoretical interest, we prove that \B \Minerva (round size one) has the same stopping rule as \B \Bravo, as does \B \Athena for some values of its parameters. 
\end{enumerate}
We do not claim that the audits of the \Athena class are the most efficient \R audits with zero error of the second kind ($\beta=0$). The problem of finding the most efficient \R audits is open. 

Unlike the {\em SPRT} and other Martingale-based approaches, the stopping rules for audits of the \Athena class use information about the history of round sizes. For this reason the stopping condition for rounds other than the first one does not depend only on the cumulative sample size and number of winner ballots drawn, but also on the history of individual round sizes. 

We do not address some simple extensions of our work in this paper. For example, we do not consider audits with a limit on the total number of ballots drawn in the polling audit (if the audit fails to stop, a full sequential hand count would follow). Were we to do so, we could provide audits with larger stopping probabilities given the same risk limit. 

This paper focuses exclusively on \Bravo. However, we expect that our results extend to other risk-limiting {\em SPRT}s with $\beta=0$ and a stopping condition that is monotonic in the number of winner ballots drawn; examples include Bayesian audits. 

\subsection{Organization}
Section \ref{sec:model} presents the model and related work. Section \ref{sec:problem} motivates the problem with an example demonstrating that the application of \B rules to an \R audit results in inefficiencies. Section \ref{sec:informal} introduces the \Athena class of audits with examples and provides insight into why the audits are risk-limiting and more efficient than either \R application of \B \Bravo. Section \ref{sec:computing} describes the analytical approaches for computing probabilities for multiple-round audits. Section \ref{sec:algorithms} presents the \Minerva and \Athena audits, and Section \ref{sec:theorems} presents rigorous claims of their risk-limiting and efficiency properties in the form of Theorems and Lemmas. Section \ref{sec:results} presents experimental results. Section \ref{sec:conc} concludes.  Proofs are in the Appendix. 

\subsection{Acknowledgements}
We gratefully acknowledge comments on an early draft by: Matthew Bernhard, Amanda Glazer, Mark Lindeman, Jake Spertus, Mayuri Sridhar, Philip B. Stark, Damjan Vukcevic.  

This version is updated from the previous one to reflect only a couple, and not all, of their valuable suggestions. Importantly, this draft notes that the risk-limiting property is proven only when round sizes are pre-determined. We have also made some of the changes, including some improvements to our notation, and the inclusion of the number of distinct ballots in tables listing the number of ballots required for a first round with 90\% stopping probability for 2016 tallies. We plan to soon further update the manuscript to include estimated first round sizes for the 2021 tallies and address all other suggestions made by the reviewers. 

This research was sponsored in part by NSF Awards 2015253 and 1421373. 
\section{The Model}
\label{sec:model}
\noindent
We consider a plurality contest and assume ballots are drawn with replacement. We assume all ballots have a vote for either the winner or the loser; because ballots are sampled with replacement, our argument is easily extended to contests with multiple candidates and invalid ballots (as for \Bravo, for example, see \cite{RLA}). We denote by $w$ the true winner, $w_a$ the announced winner, $\ell _a$ the announced loser and $p$ the announced fractional tally for $w_a$ (typically based on preliminary, uncertified results). 

A polling audit will estimate whether $w_a$ is the true winner. We denote by $n_j$ the total number of ballots drawn at the end of the $j^{th}$ round, and by $k_j$ the corresponding total number of ballots for the winner. Hence the number of new ballots drawn in round $j$ is $n_j - n_{j-1}$, and the number of new votes for the winner drawn in round $j$ is $k_j - k_{j-1}$. If necessary, one may assume that $n_0, k_0=0$. We often refer to $[ n_1, n_2, \ldots, n_j, \ldots]$ as the {\em round schedule}. A \B audit is an \R audit with round size $n_j = j$. That is, the round schedule of a \B audit is $[1, 2, \ldots, j, \ldots]$. 


The total number of ballots drawn at any time during the audit is denoted $n$ (if the number of rounds drawn so far is $j$, $n=n_j$). The random variable representing the number of ballots drawn so far for the winner is represented by 
$K$ (and $K_j$ to represent the number of ballots drawn for the winner up to the $j^{th}$ round). We use $k^*$, $k_*$ and $\tilde{k}$ to represent specific numbers of winner ballots as well. 

The entire sample drawn up to the $j^{th}$ round, in sequence, forms the {\em signal} or the {\em observation}; the corresponding random variable is denoted $X_j$, the specific value $x_j$. The entire sample drawn so far is denoted $X$, its specific value $x$. We do not {\em a priori} assume a last round for the audit. The audit stops when it satisfies the stopping condition. 

\subsection{The Model}
\label{sec:general_audit}
\noindent
We model the audit as a binary hypothesis test: 
\begin{description}
\item Null hypothesis $H_0$: The election outcome is the closest possible incorrect outcome: $w \neq w_a$ and the fractional vote count for $w_a$ is $ \frac{1}{2}$. In particular, if the total number of valid votes is even, the election is a tie. If the total number of valid votes is odd, the margin is one in favor of $\ell _a$. In this case, we assume that the number of valid votes is large enough that the fractional vote count is sufficiently close to $\frac{1}{2}$. Henceforth, we will refer to both cases as being represented by a fractional vote count of $ \frac{1}{2}$. 
\item Alternate hypothesis $H_a$: The election outcome is correct: $w = w_a$ and the fractional vote count is as announced.  
\end{description}

After each round the test $\mathcal{A}$ takes $X$ as input and outputs one of the following: 
\begin{itemize}
\item {\em Correct:} The test estimates that $w=w_a$ and the audit should stop.
\item {\em Incorrect:} The test estimates that $w \neq w_a$. We stop drawing votes and proceed to perform a complete hand count to determine $w$.
\item {\em Undetermined (draw more samples):} We need to draw more ballots to improve the estimate.
\end{itemize}

When the audit stops, it can make one of two kinds of errors:
\begin{enumerate}
\item {\em Miss:} A {\em miss} occurs when $w \neq w_a$ but the audit misses this, and outputs {\em Correct}. We denote by $P_M$ the probability of a miss: 
\[P_M = Pr[\mathcal{A}(X) = \textit{Correct} \mid H_0] \] 
$P_M$ is the {\em risk} in risk limiting audits and the Type I error of the test.
\item {\em Unnecessary Hand Count:} Similarly, if $w=w_a$, but  the audit estimates that a hand count must follow, the hand count is unnecessary. We denote the probability of an {\em unnecessary hand count} by $P_U$: 
\[P_U = Pr[\mathcal{A}(X) = \textit{Incorrect} \mid H_a] \] 
$P_U$ is the Type II error.
\end{enumerate}

Like the \Bravo audit, this paper focuses on tests with $P_U = 0$. The risk, on the other hand, is an important (generally) non-zero value characterizing the quality of the audit. 

\subsection{Related Work}
\label{sec:rla}
\noindent
A {\em risk-limiting audit (RLA)} with {\em risk limit} $\alpha$---as described by, for example, Lindeman and Stark \cite{RLA}---is one for which the risk is smaller than $\alpha$ for all possible (unknown) true tallies in the election. For convenience when we compare audits, we refer to this audit as an $\alpha$-{\em RLA}.

\begin{definition}[Risk Limiting Audit ($\alpha$-RLA)]\label{def:rla}
 An audit $\mathcal{A}$ is a \textit{Risk Limiting Audit} with risk limit $\alpha$ iff for sample $X$
 \[
P[\mathcal{A}(X) = \textit{Correct} \mid H_0] \leq \alpha
 \]
\end{definition}

There are many audits that would satisfy the $\alpha$-{\em RLA} criterion, and not all would be desirable. For example, the constant audit which always outputs {\em Incorrect} always requires a hand count and is risk-limiting with $P_M=0 < \alpha$, $\forall \alpha$, $\forall p$. However, $P_U=1$, and the audit examines all votes each time; this is undesirable.

An example of an $\alpha$-{\em RLA} with $P_U=0$ and drawing fewer ballots is the \B~\Bravo audit \cite{bravo} which specifies round size increments of one. 

\begin{definition}[\Bravo]\label{def:bravo}  An audit $\mathcal{A}$ is the \B~$(\alpha, p)$-\Bravo audit iff the following stopping condition is tested at each ballot draw. If the sample $X$ is of size $n$ and has $k$ ballots for the winner,  
\begin{equation}
\mathcal{A}(S) =  \left\{ \begin{array}{ll} Correct & ~\sigma(k, p, n) \triangleq \frac{p^{k} (1-p)^{n-k}}{(\frac{1}{2})^n} \geq \frac{1}{\alpha}\\
Undetermined & ~else 
\end{array}
\right .
\label{eqn:bravo}
\end{equation}
Its p-value is ${\sigma (k, p, n)}^{-1}$. 
\end{definition}

\noindent
$\sigma(k, p, n)$ is the {\em likelihood ratio} of the drawn sequence~$X$. 
\noindent
The \B $(\alpha, p)$-\Bravo audit is an {\em SPRT} \cite{wald} with:
\begin{description}
\item $H_0$, the null hypothesis: the election is a tie
\item $H_a$, the alternate hypothesis: the fractional tally for the winner is $p$. 
\end{description}

\noindent
Implicit in Definition \ref{def:bravo} is the point that a sequence $X$ is tested only if it has not previously satisfied the test. If $\mathcal{A}(X_*) ~=~ Correct$ for some sequence $X_*$, all extensions $X_*^+$ of $X_*$ are defined as having passed the test.  Determining the stopping condition by evaluating $\mathcal{A}(X_*^+)$ does not satisfy the assumptions of the test, and the properties of the test do not necessarily apply. As we shall see in Section \ref{sec:problem}, this is relevant to {\em end-of-round} \Bravo. In fact, it is relevant to {\em end-of-round} applications of any \B audit that is an {\em SPRT}. 

\B \Bravo is a most efficient test given the hypotheses (if the stopping condition is satisfied exactly every time). Vora shows \cite{Bayesian-RLA} that \B $(\alpha, p)$-\Bravo is an $\alpha$-{\em RLA} because it assumes a tie for $H_0$, which is the wrong election outcome that is hardest to distinguish from the announced one, and hence defines the worst-case risk \cite{Bayesian-RLA}.


Other approaches, such as Rivest's CLIP Audit \cite{clip}, improve on \B \Bravo's efficiency subject to certain constraints (namely, of $\beta$ as defined in \cite{clip}). 

A prototype of \Athena mirrored the explicit risk allocation found in Stark's Conservative Statistical Post-Election Audits~\cite{stark2008} before ballots are examined for the audit, a list of increasing rounds $(n_1, n_2, ..., n_j)$,  and a list of corresponding risks $(\alpha_1, \alpha_2, ..., \alpha_j)$ are generated. Dispensing with auditor flexibility in favor of a predetermined list of rounds and corresponding risks facilitated the investigation of the convolution procedure that underlies a fundamental improvement of \Athena over \Bravo.

There is a line of work on \textit{group sequential testing}~\cite{groupSequential,heard2018choosing,fisher,ghosh} but all results that we were able to find begin with the assumption of a normal distribution and cannot be directly applied to the considered scenario of auditing elections.

\section{The Problem}
\label{sec:problem}
In this section we use an example to illustrate the problems of using \B rules for an \R audit. 

The \B $(\alpha, p)$-\Bravo audit, Definition \ref{def:bravo}, is the following ratio test (inequality (\ref{eqn:bravo})) performed after each draw:

\[ \sigma(k, p, n) = \frac{p^k(1-p)^{n-k}}{(\frac{1}{2})^n} \geq \frac{1}{\alpha}\]

Because $p > 1-p$ and the denominator above does not depend on $k$, $\sigma(k, p, n)$ is monotone increasing with $k$. There is hence a minimum value of $k$ for which the \B $(\alpha, p)$-\Bravo stopping condition is satisfied. That is, $ \exists~k_{min}(\Bravo, n, p, \alpha)$ such that the stopping condition of Definition \ref{def:bravo}, inequality (\ref{eqn:bravo}), is: 
\[ \mathcal{A}(S) ~=~ \textit{Correct} \Leftrightarrow k \geq  k_{min}(\Bravo, n, p, \alpha) \]

In fact it is easy to see that $k_{min}(\Bravo, n, p, \alpha)$  is a discretized straight line as a function of $n$, with slope and intercept determined by $p$ and $\alpha$ (see, for example, \cite{wald}). 

\begin{equation}
\label{eqn:kmin}
k_{min}(\Bravo, n, p, \alpha) = \left\lceil{m(\Bravo, p, \alpha) \cdot n + c(\Bravo, p, \alpha)}\right\rceil 
\end{equation}
where 
\[ m(\Bravo, p, \alpha) = \frac{\log{\frac{\frac{1}{2}}{1-p}}}{\log{\frac{p}{1-p}}} \]
\[c(\Bravo, p, \alpha) =  - \frac{\log{\alpha}}{\log{\frac{p}{1-p}}} \]
We drop one or more arguments of $k_{min}$, $c$ or $m$ when they are obvious.

\begin{example}[B2 \Bravo vs R2 \Bravo]
\label{eg:B2-vs-R2}
Let $\alpha=0.1$ and $p=0.75$, we get, from equation (\ref{eqn:kmin}): \[ k_{min}(\Bravo, n, 0.75, 0.1) \approx \lceil{0.6309n + 2.0959}\rceil \]

Consider ballots drawn in rounds of size $20, 40, 60, \ldots$ and the \Bravo condition being tested:  
\begin{itemize}
\item {\em End-of-Round}, which requires a record simply of the tally of the sample polled. 
\item {\em Selection-ordered-ballots}, requires a record of the vote on each ballot polled, in selection order. 
\end{itemize}
Note that the stopping condition is always the \Bravo stopping condition; the variation is in when it is checked. 

Figure \ref{fig:example} is a plot of $k_{min}(\Bravo, n, 0.75, 0.1) $ as a function of round size. It also shows the results of the tests above, performed on an example sequence. 
\begin{itemize}
\item For a hypothetical sequence, {\em selection-ordered-ballots} \Bravo checks the stopping condition at the blue squares till the stopping condition is satisfied, and the audit stops. It has information about the number of ballots for the winner and the total number of ballots drawn at each ballot draw. 
\item If the same sequence were to go through an  {\em end-of-round} \Bravo audit, the stopping condition would be checked only at the end of the round, denoted in the figure by black crosses. The audit only has information on vote tallies at the end of the round. 
\end{itemize}

\begin{figure}[h]
\centering
\includegraphics[scale=0.25]{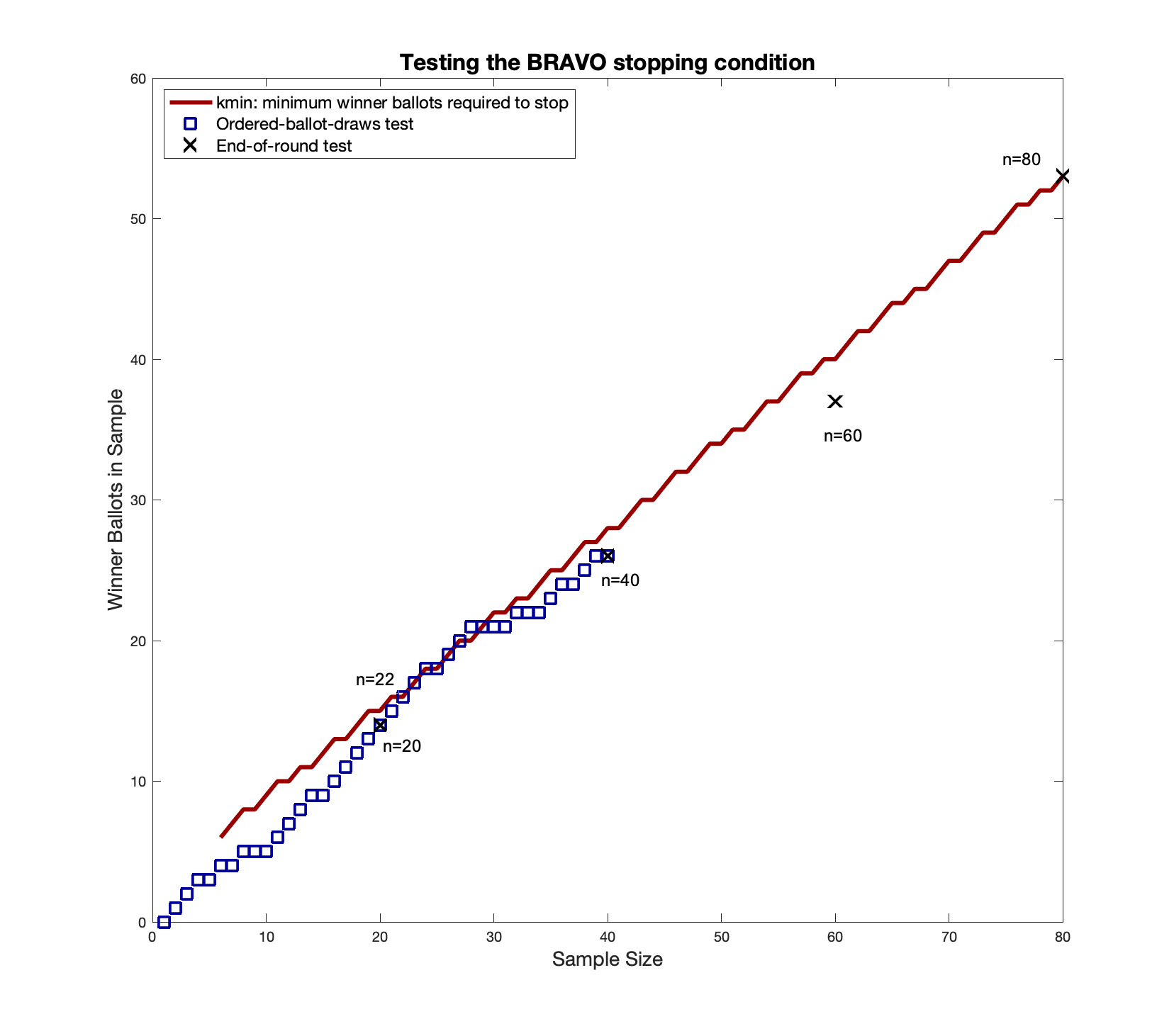}
\caption{Using \Bravo for a round-by-round audit with $p=0.75$, $\alpha=0.1$ and round size = $20$.}
\label{fig:example}
\end{figure}

We see that the stopping condition is satisfied during the second round, at $n=22$, but that it is no longer satisfied when it is tested at the end of that round, at $n=40$, or the following round, $n=60$. It is satisfied at the end of the fourth round, $n=80$, which is the number of ballots drawn in an {\em end-of-round} \Bravo audit. 

Thus: 
\begin{itemize}
\item \B \Bravo ends at $n=22$, and $22$ ballots are drawn. 
\item {\em End-of-round} \Bravo ends at $n=80$ and $80$ ballots are drawn.
\item {\em Selection-ordered-ballots} \Bravo ends at $n=22$, and $40$ ballots are drawn. 
\end{itemize}
\end{example}

The instance of {\em selection-ordered-ballots} \Bravo in our example would stop at the end of the second round after $40$ ballots are drawn. Such an audit is risk-limiting even though the condition is not satisfied at the $40^{th}$ draw. This is because every time a sequence $X$ satisfies the stopping condition, all extensions of it are defined as having passed the audit as well. In the event that the election outcome is incorrect, any sequence that passes the audit contributes to the risk. A risk-limiting audit ensures that the total risk contribution of all sequences that satisfy the audit is bounded above by the risk limit, whatever the underlying election. This accounting naturally includes risk contributions of all extensions of sequences that pass the audit as well. 
 
Note, however, that {\em selection-ordered-ballots} discards the extra information contained in the $18$ ballots drawn following the $22$-ballot draw. It ought to be possible to include this information, obtained at some cost, to better estimate the correctness of the election outcome. (Imagine telling election officials and the public that the p-value of the draw was small enough earlier, that it is not any more, and the math allows us to use the earlier value because if the election outcome is incorrect, it is accounted for in the risk limit). We need not be limited by the \B \Bravo rules which begin with a large disadvantage when used for \R audits, as they do not take into account that the ballots are drawn in rounds. 

\section{An Introduction to the \Athena Class of Audits}
\label{sec:informal}
In this section, we use an example to illustrate the workings of a proposed new \R audit \Minerva. In later sections, we provide more rigorous descriptions of the \Athena class of \R audits which we prove are risk-limiting and at least as efficient as {\em end-of-round} \Bravo. As we mentioned in section \ref{sec:intro}, our proof requires that the the round schedule be pre-determined (before the audit begins). For example, one may choose a factor $a$ such that $n_{j+1}-n_j = a(n_j-n_{j-1})$. 

\begin{example}[{\em End-of-Round} $(0.1, 0.75)$-\Bravo]\label{sec:informal_example}
We consider the {\em end-of-round} $(0.1, 0.75)$-\Bravo audit as in the previous section. Denote by $n_1$ the number of ballots drawn in the first round, and by $k_1$ those for the winner. 

Suppose $n_1=50$. Figure \ref{fig:tails} shows the probability distributions of $k_1$ for the two hypotheses:
\begin{description}
\item $H_a$: the election is as announced, with $p = 0.75$ (blue solid curve), and
\item $H_0$: the election is a tie (red dashed curve).
\end{description}

We will continue to refer to Figure \ref{fig:tails} in the following sections, when we will address the shaded areas. 

\begin{figure}[h]
\begin{centering}
 \includegraphics[scale=0.3]{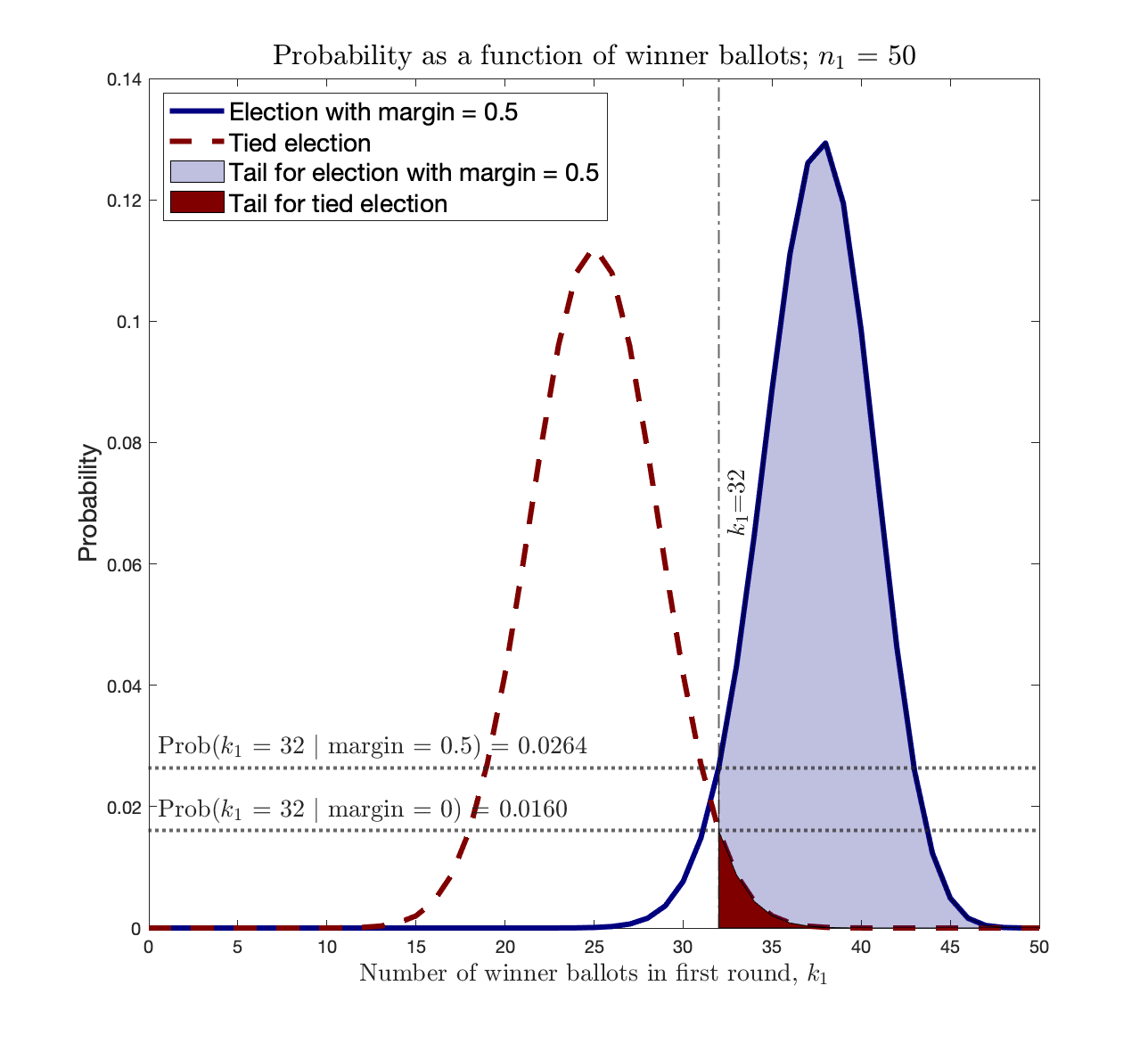}
 \caption{Probability Distribution of Winner Votes for $p=0.75$ and $n_1=50$: First Round. }
 \label{fig:tails}
 \end{centering}
\end{figure}

Suppose $k_1=32$. The \B $(0.1, 0.75)$-\Bravo stopping condition (see inequality(\ref{eqn:bravo})) tested {\em end-of-round} is: 
\begin{equation}
\label{eqn:bravo_prob_ratio2} 
\sigma(k_1, p, n_1) = \frac{p^{k_1} (1-p)^{n_1-k_1}}{(\frac{1}{2})^{n_1}} = \frac{{n_1 \choose k_1} p^{k_1} (1-p)^{n_1-k_1}}{{n_1 \choose k_1} (\frac{1}{2})^{n_1}} = \frac{Pr[k_1=32 \mid H_a] }{Pr[k_1=32 \mid H_0] }  \geq \frac{1}{\alpha}
\end{equation}

For our particular example, Figure \ref{fig:tails}, the likelihood ratio above is: 
\[ \sigma(32, 0.75, 50) = \frac{Pr[K_1=32 \mid H_a] }{Pr[K_1=32 \mid H_0] }  \approx \frac{0.0264}{0.0160} \approx  1.65 \not \geq \frac{1}{\alpha} = 10\]

And the sample does not pass the {\em end-of-round} \Bravo audit. Recall that the \B \Bravo p-value is the reciprocal of the above probability ratio. In this example, it is $\approx 0.6061 > \alpha = 0.1$. 

This is consistent with the fact that (see Example \ref{eg:B2-vs-R2}, Section~\ref{sec:problem}): 

\begin{equation}
\label{eqn:ex:bravo_kmin}
32 < k_{min}(\Bravo, 50, 0.75, 0.1) = \left\lceil{0.6309 \cdot 50 + 2.0959}\right\rceil = 34
\end{equation}
\end{example}

\subsection{The \Minerva Audit}

We propose the \Athena class of audits, which use the tails of the probability distribution functions to define the stopping condition. Here we provide an informal description of the simplest of the \Athena class, the \Minerva audit. 

\begin{example}[The \Minerva Audit]
\label{ex:minerva}
For the parameters of Example \ref{sec:informal_example}, $\alpha=0.1$, $p=0.75$, $n_1=50$ and $k_1=32$, we describe the \Minerva stopping condition, a comparison test of the ratio of the tails of the distributions: 
\begin{equation}
\label{eqn:minerva_prob_ratio}
\tau _1 (32, p, n_1) = \frac{Pr[K_1 \geq 32 \mid H_a, n_1] }{Pr[K_1 \geq 32 \mid H_0, n_1] } \geq \frac{1}{\alpha} 
\end{equation}
Compare this to the stopping condition for \Bravo, inequality (\ref{eqn:bravo_prob_ratio2}). 

Note that  $Pr[K_1 \geq 32 \mid H_a] $ is the stopping probability for round $1$ (the probability that the audit will stop in round $1$ given $H_a$) associated with deciding to stop at $k_1=32$---and not at smaller values. It is the tail of the solid blue curve, the translucent blue area in Figure \ref{fig:tails}. Similarly, $Pr[K_1=32 \mid H_0]$ is the associated risk. It is the tail of the red dashed curve denoting the tied election, and shaded red. 
 
For our example, the ratio of the tails of the two curves of Figure \ref{fig:tails} is (the values are not denoted in the figure): 
\[ \tau_1 (32, 0.75, 50) = \frac{Pr[K_1 \geq 32 \mid H_a, n_1] }{Pr[K_1 \geq 32 \mid H_0, n_1] }   \approx \frac{0.9713}{0.0325} \approx 29.89 > \frac{1}{\alpha} = 10 \]

And the sample passes the \Minerva audit. 
\end{example}

\subsection{The \Minerva audit is risk-limiting}
We will prove in Section~\ref{sec:theorems} that the \Minerva stopping condition is monotonic increasing as a function of $k_1$; one may understand this informally as follows. As explained in Section~\ref{sec:problem}, $\sigma(k_1, p, n_1)$ is monotone increasing with $k_1$. The \Minerva ratio $\tau_1(k_1, p, n_1)$ is a weighted average of the values of $\sigma(k, p, n_1)$ for $k \geq k_1$ and is also, hence, monotone increasing with $k_1$. If a sample with $32$ winner ballots of a total of $50$ ballots were to satisfy the stopping condition, so would all samples with $k_1 \geq 32$. 

Smaller values of $k_1$ are associated with larger tails in both curves of Figure \ref{fig:tails}; and the tails denote the stopping probability (given $H_a$, the translucent blue tail of the solid blue curve) and the risk (given $H_0$, the solid red tail of the dashed red curve). The smaller the value of $k_1$, hence, the larger the associated stopping probability and risk. We could simply choose the smallest acceptable $k_1$ (denoted as $k_{min, 1}$) so that the associated risk is $\alpha$, but then we could not plan to ever go to another round because we would have exhausted the risk budget in the first round. For the Minerva audit, we choose $k_1$ so that the risk is no larger than $\alpha$ times the stopping probability. This allows us to go on to an indefinite number of rounds. 

Let $R_j$ and $S_j$ be informally defined as follows (more formal definitions follow in Sections~\ref{sec:algorithms}~and~\ref{sec:computing}): 
\[ R_j = Pr[\Minerva \text{~audit~stops~in~round~$j$~and~no~earlier} \mid H_0] \] 
and 
\[ S_j = Pr[\Minerva \text{~audit~stops~in~round~$j$~and~no~earlier} \mid H_a] \] 
We define $R_j$ and $S_j$ more carefully in section \ref{sec:theorems} and describe how to compute these values in section \ref{sec:computing}.  Loosely speaking, they denote the risk associated with the $j^{th}$ round ($R_j$) and the stopping probability  of the $j^{th}$ round ($S_j$) respectively. 

The \Minerva stopping condition is: 
\[ \frac{R_j}{S_j} \leq \alpha \Rightarrow R_j \leq \alpha \cdot S_j \]
If $R$ is the risk of the audit and $S$ its stopping probability,  
\[ R = \sum _j R_j \leq \alpha \sum _j S_j \leq \alpha \cdot S \leq \alpha \]
because $S$, the stopping probability of the audit, is no larger than $1$. 

In other words, the total risk of the audit is the sum of the risks of each individual round. The stopping condition ensures that each of these risks is no larger than $\alpha$ times the corresponding stopping probability. Adding all the risks gives us the total risk, which is no larger than $\alpha$ times the total stopping probability. Because the total stopping probability cannot be larger than one, the total risk cannot be larger than $\alpha$, and \Minerva is risk-limiting.

\subsection{\Minerva is at least as efficient as {\em end-of-round} \Bravo}
In this section, we further examine the audit of our previous examples to understand the behavior of the ratios of \Bravo and \Minerva, $\sigma(k_1, p, n_1)$ and $\tau_1(k_1, p, n_1)$ respectively. 
\begin{example}[\Bravo vs. \Minerva Ratios]\label{sec:make-up}
For the parameters of Examples \ref{sec:informal_example} and \ref{ex:minerva}: $p=0.75$, $\alpha=0.1$ and $n_1=50$, Figure \ref{fig:ratios_both} presents the likelihood ratio for {\em end-of-round} \Bravo (green solid line), $\sigma(k_1, 0.75, 50)$, and the tail ratio for \Minerva (orange dashed line), $\tau _1(k_1, 0.75, 50)$, on a log scale. An audit satisfies the stopping condition when its ratio equals or exceeds $\alpha ^{-1} = 10$. 
\begin{figure}[h!]
\begin{centering}
 \includegraphics[scale=0.35]{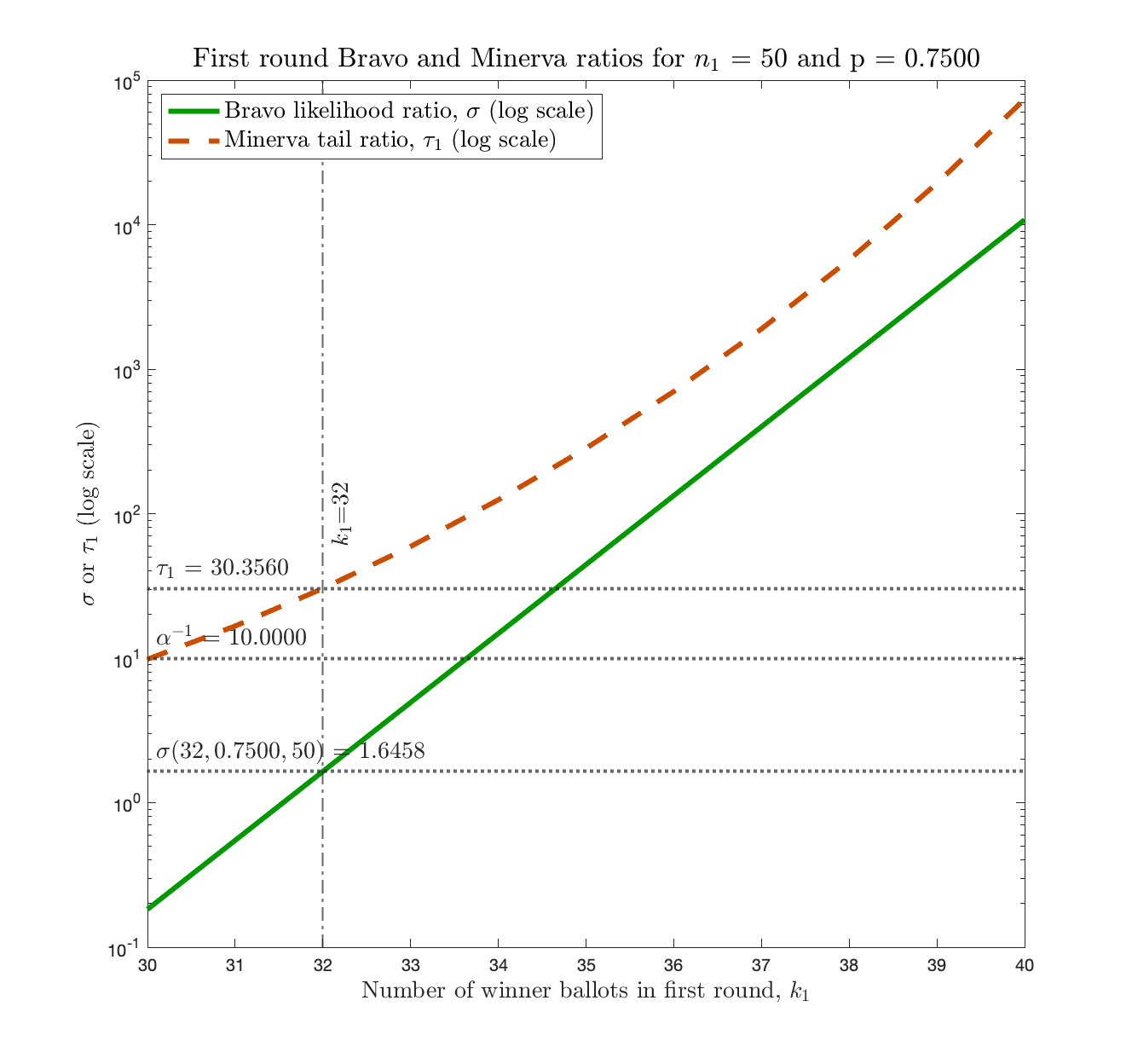}
 \caption{BRAVO and Minerva comparison tests for $p=0.75$ and $n_1=50$: First Round. }
 \label{fig:ratios_both}
\end{centering}
\end{figure}

We see that 
\[ \sigma(k_1, 0.75, 50) < \tau _1(k_1, 0.75, 50) \]

This means that any sample satisfying {\em end-of-round} \Bravo will also satisfy \Minerva. In fact, it will often be the case that the \Minerva condition will be satisfied and the {\em end-of-round} \Bravo one will not. 

We have seen earlier (see equation (\ref{eqn:ex:bravo_kmin}), Example \ref{sec:informal_example}) that 
\[ k_{min}(\Bravo, 50, 0.75, 0.1) = 34\]
and {\em end-of-round} \Bravo will stop for $k_1 \geq 34$ and no smaller values of $k_1$. On the other hand, we see from Figure \ref{fig:ratios_both} that \Minerva would stop additionally for $k_1=31, 32, 33$. 
\end{example}
The reason for \Minerva stopping at smaller values of $k_1$ is as follows. Consider $k_1=32$. While 
\[ \sigma(32, 0.75, 50) \approx 1.6458 < 10 \]
$\sigma(k_1, 0.75, 50)$ can be much larger for larger values of $k_1$. The \Minerva ratio, $\tau _1$, at $k_1=32$ is a weighted average of all the values of $\sigma(k_1, 0.75, 50)$ for $k_1 \geq 32$, allowing the larger values of $\sigma(k_1, 0.75, 50)$ to ``make up'' for the smaller ones; in fact, $\tau _1 = 30.356$ for $k_1=32$. 

In other words, because the {\em end-of-round} \Bravo ratio increases as $k_1$ increases, the weighted average, the \Minerva ratio, will always be larger than the {\em end-of-round} \Bravo ratio except if $k$ is the largest possible number of winner votes, in which case the two ratios will be equal.  Equivalently, the \Minerva p-value will always be smaller except when $k$ is the largest possible number of winner votes, and the p-values are equal. Thus, \Minerva is at least as efficient. 

\subsection{The \Athena audit}
In this section we present an example introducing the \Athena audit. 
\begin{example}
We can see from Figure \ref{fig:ratios_both} that the \Minerva tail ratio at $k=31$ is larger than $\alpha ^{-1}$. However, the {\em end-of-round} \Bravo ratio is smaller than $1$: \[ \sigma(31, 0.75, 50) < 1 \Rightarrow Pr[K_1 = 31 \mid H_a] < Pr[K_1 = 31 \mid H_0] \]
which means that the observation $K_1=31$ is more likely given $H_0$ (the election is a tie) than it is given $H_a$ (the election is as announced)! 

This is technically not an issue; it simply means that \Minerva can stop in such a situation and be risk-limiting. We could also choose to enforce an additional stopping condition of a lower bound on the likelihood ratio of the sample. If the sample $x$ satisfies the $\Minerva$ condition and is of size $n$ with $k^*$ votes for the winner, the additional stopping condition would be:  \[ \frac{Pr[K=k^* \mid H_a]}{Pr[K=k^* \mid H_0]} = \sigma(k^*, p, n) \geq \frac{1}{\delta}\]
for some $\delta$. We term this combination of two stopping conditions the \Athena audit. 

A reasonable choice is $\delta=1$ (the observation is at least as likely given $H_a$ as it is given $H_0$). We would, of course, not desire $\delta < \alpha$, because we would then be requiring the satisfaction of the \Bravo condition with risk limit $\delta < \alpha$. 

We can see from Figure \ref{fig:ratios_both} that the \Athena condition for $\delta=1$ is satisfied for $k^* \geq 32$ and no smaller values of $k_1$. Recall that \Minerva stops for $k_1 \geq 31$. Thus, \Minerva would stop for $k_1=31$ and \Athena would not. 
\end{example}

In our experiments we have observed samples satisfying \Minerva but not \Athena when the election margin is wide, as in our example. Hence, clearly, \Athena is not as efficient as \Minerva, because it imposes an additional condition. One may think of \Minerva as determining whether the election outcome is correct, and \Athena determining, in addition, if the election tally is close enough to the announced tally.

\section{Computing Risks and Stopping Probabilities for Multiple-Round Audits}
\label{sec:computing}
In this section we describe how probability distributions may be computed in multiple round audits with monotone stopping conditions; that is, audits where the stopping condition is represented through the use of $k_{min}$. We use examples to demonstrate how the probability distributions may be computed for rounds $2$ and above. 

\begin{example}[Testing the Stopping Condition]
\label{ex:lopping}
Consider an election with $p=0.75$ and a risk limit of $\alpha = 0.1$. Suppose the first round size is $n_1=50$ and the draw results in $k_1=30$ ballots for the announced winner. Recall that (see equation (\ref{eqn:ex:bravo_kmin}), Example \ref{sec:informal_example})
\[ k_{min}(\Bravo, 50, 0.75, 0.1) = 34. \]
and (see Figure \ref{fig:ratios_both}, Example \ref{sec:make-up})
\[k_{min}(\Minerva, 50, 0.75, 0.1) = 31\]

Thus the sample passes neither the \Minerva nor the {\em end-of-round} \Bravo audit. 

Now suppose we draw $50$ more ballots to get $n_2=100$ ballots in all, of which $k_2$ are for the winner. We will need to compute the probability distribution on $k_2$ to determine the ratio of the tails for the \Minerva stopping condition. 

Note that the probability distribution of $k_2$ is {\em not} the binomial distribution for a sample size of $100$. In fact, if the audit did not stop in the first round, $k_1 < 31$ for \Minerva, which means that $k_2$ can be no larger than $80$, even if all $50$ ballots in the second round are for the announced winner. Similarly, $k_2$ for the {\em end-of-round} \Bravo audit can be no larger than $83$. 

If the audit continues, the maximum number of ballots before new ones are drawn is $33$ for \Bravo and $30$ for \Minerva. The probability distributions before the new sample is drawn are as shown in Figures \ref{fig:bravo:lopped} and \ref{fig:minerva:lopped}, and may be denoted as: 
\[ Pr[k_1 \wedge (\mathcal{A_B}(X_1) \neq ~Correct) \mid H_a], ~~ Pr[ k_1 \wedge (\mathcal{A_B}(X_1) \neq ~Correct) \mid H_0] ~~ \]
and
\[ Pr[k_1 \wedge (\mathcal{A_M}(X_1) \neq ~Correct) \mid H_a], ~~ Pr[k_1 \wedge (\mathcal{A_M}(X_1) \neq ~Correct) \mid H_0]\]
where $\mathcal{A}_B$ and $\mathcal{A}_M$ denote the {\em end-of-round} \Bravo and \Minerva audits for the given parameters. 

\begin{figure}[h!]
\begin{centering}
 \includegraphics[scale=0.25]{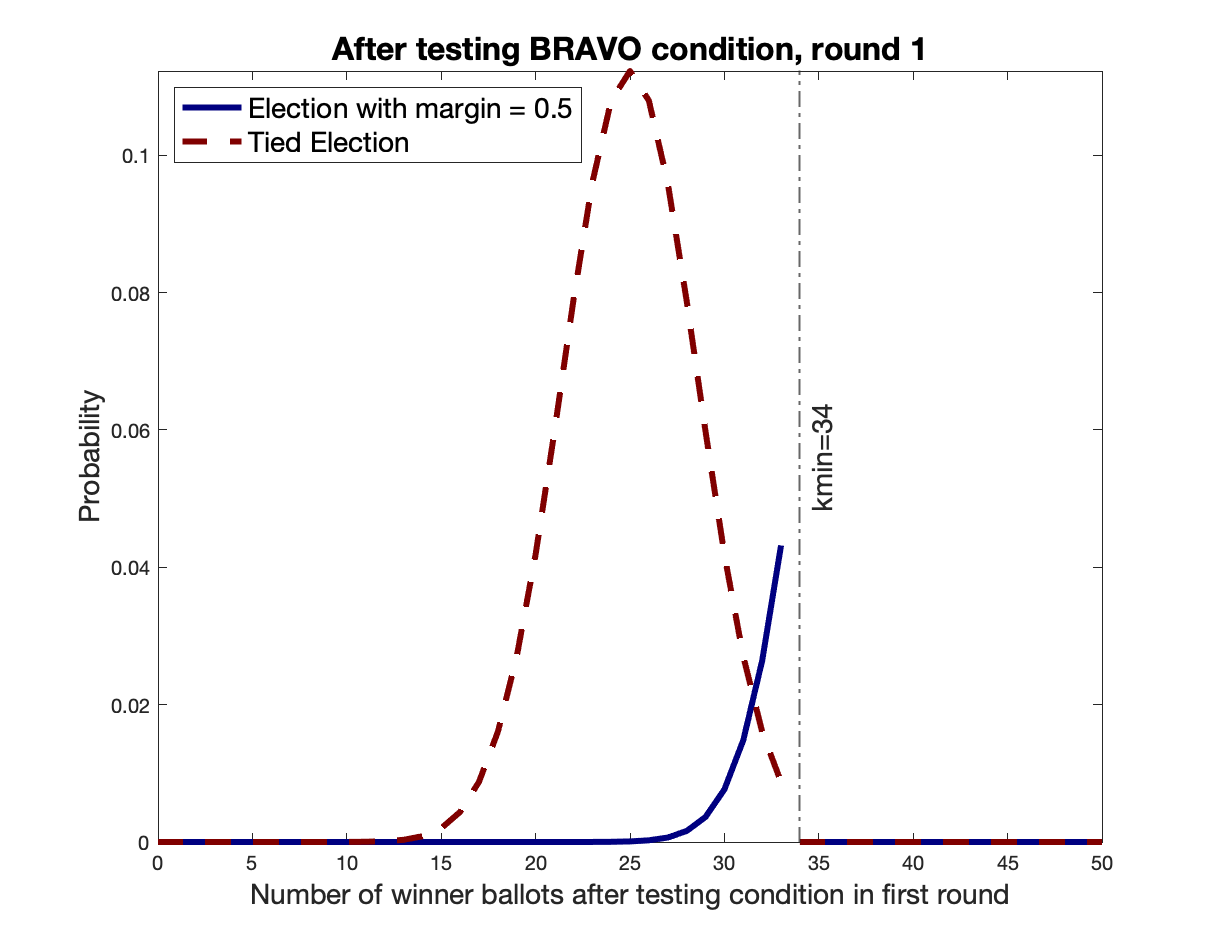}
 \caption{Probability Distribution of Winner Ballots for {\em end-of-round} \Bravo: $p=0.75$, $n_1=50$: After Testing the Stopping Condition for the First Round. }
 \label{fig:bravo:lopped}
 \end{centering}
\end{figure}

\begin{figure}[h!]
\begin{centering}
 \includegraphics[scale=0.25]{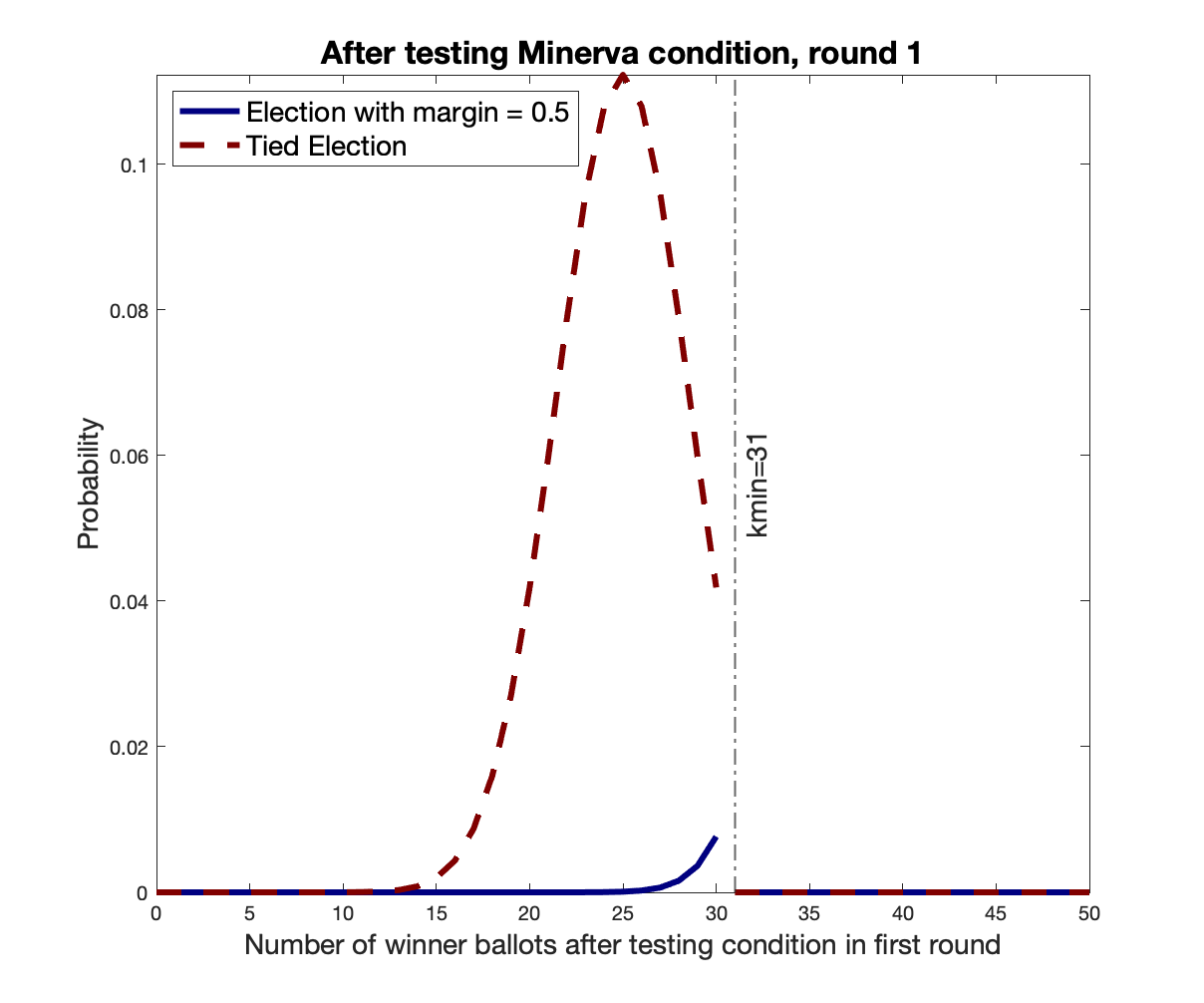}
 \caption{Probability Distribution of Winner Ballots for \Minerva: $p=0.75$, $n_1=50$: After Testing the Stopping Condition for the First Round. }
 \label{fig:minerva:lopped}
 \end{centering}
\end{figure}
\end{example}

The ``discarded'' tails, in both cases, represent the probabilities that the audit stops. When this is conditional on $H_a$, we refer to it as the stopping probability of the round ($S_1$), large values are good. When it is conditional on $H_0$, it is the worst-case risk corresponding to the round ($R_1$), large values are bad. Recall that our stopping condition bounds the worst-case risk for the round to be no larger than a fraction $\alpha$ of the stopping probability.

Using the above probability distributions, we can now compute the distribution of ballots for the announced winner in the sample of size $100$, which we obtain after drawing $50$ more ballots. 

\begin{example}[Second Round Distribution]
Continuing with Example \ref{ex:lopping}, we consider an election with $p=0.75$, risk limit $\alpha = 0.1$ and round sizes $n_1=50, n_2=100$. We wish to compute the probability distribution for $K_2$, the number of votes for the announced winner after drawing the second round of ballots. 
Recall that (see equation (\ref{eqn:ex:bravo_kmin}), Example \ref{sec:informal_example})
\[ k_{min}(\Bravo, 50, 0.75, 0.1) = 34. \]
and (see Figure \ref{fig:ratios_both}, Example \ref{sec:make-up})
\[k_{min}(\Minerva, 50, 0.75, 0.1) = 31\]

First consider the {\em end-of-round} \Bravo audit. After the first round stopping condition is tested, and the audit stopped if the condition is satisfied, the number of votes for the winner is at most $33$. Let $K_1$ be the number of votes for the winner. Then $K_1$ lies between $0$ and $33$. It is distributed as in Figure \ref{fig:bravo:lopped}, and we denote the distribution by $f(k_1 \mid H_0)$ for the null hypothesis (tied election, represented by the red dashed line) and $f(k_1 \mid H_a)$ for the alternate hypothesis (election is as announced, represented by the blue solid line). 

There would be a total of $K_2=k_2$ winner ballots in the sample after the second draw if $k_2-k_1$ winner ballots were drawn among the $50$ new ballots drawn in round 2. $k_2-k_1$ is a random variable, and its distribution is the binomial distribution for the draw of size $50$. 

If we denote the distribution of $K_2$ as $g$, it is:  
\[ g(k_2 \mid H) = \sum _{k_1=max\{0, k_2-50\}}^{min\{k_{min}-1,k_2\}} f(k_1 \mid H) \cdot binomial(k_2-k_1, 50, H) \]
where $binomial(j, n, H)$ is the probability of drawing $j$ votes for the announced winner in a sample of size $n$, when the fractional vote for the announced winner is $\frac{1}{2}$ for $H=H_0$ and $p$ for $H=H_a$. 

The above expression is known as the convolution of the two functions, and is denoted: 
\[ g(\cdot \mid H) = f(\cdot \mid H) \circledast   binomial(., 50, H) \]
where $\circledast$ represents the convolution operator and $H$ the hypothesis. The convolution of two functions can be computed efficiently using Fourier Transforms; this result is the convolution theorem. 

After drawing the second sample, the probability distributions for \Bravo are as in Figure \ref{fig:bravo:conv}. 
\begin{figure}[h!]
\begin{centering}
 \includegraphics[scale=0.25]{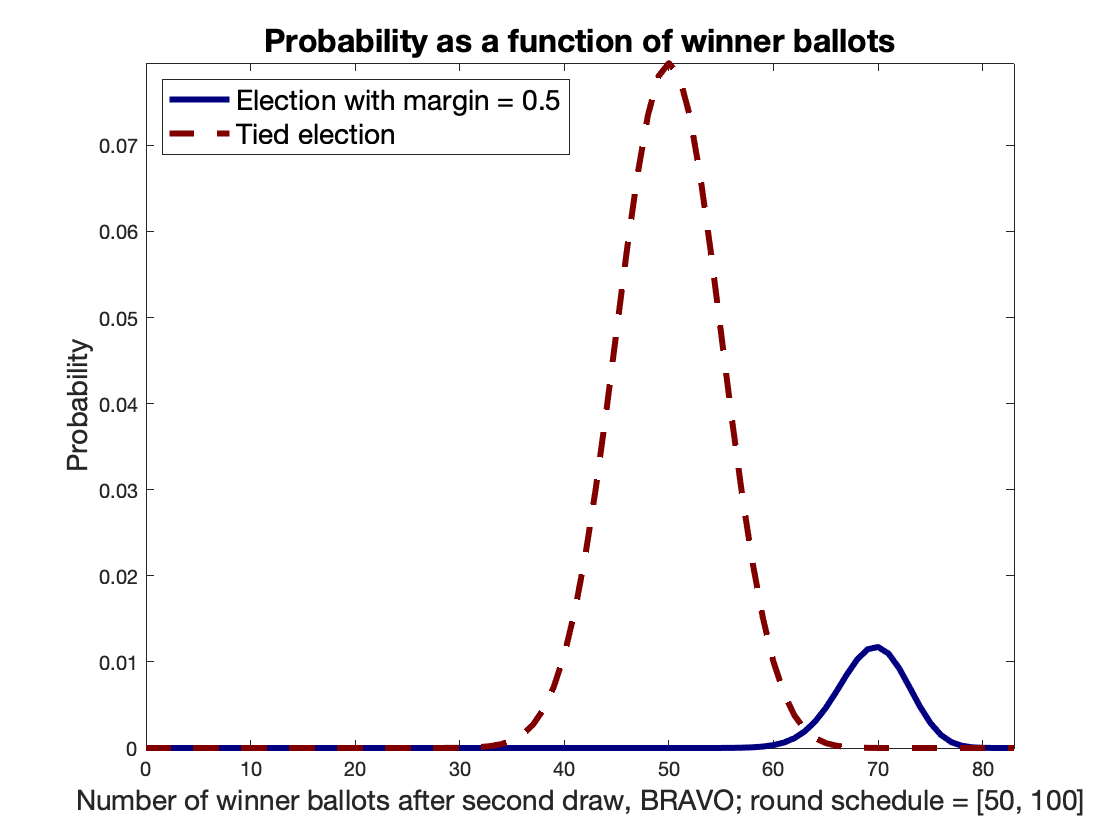}
 \caption{Probability Distribution of Winner Ballots for {\em end-of-round} \Bravo: $p=0.75$, $n_1=50$, $n_2=100$: After Drawing the Second Round. }
 \label{fig:bravo:conv}
 \end{centering}
\end{figure}

Similarly, for \Minerva, $k_1$ lies between $0$ and $30$. $k_2$ is similarly the convolution of the function(s) represented in Figure \ref{fig:minerva:lopped} and the binomial distribution corresponding to a draw of $50$ ballots for the respective hypotheses. After drawing the second sample, the probability distributions for \Minerva are as in Figure \ref{fig:minerva:conv}. 
\begin{figure}[h!]
\begin{centering}
 \includegraphics[scale=0.25]{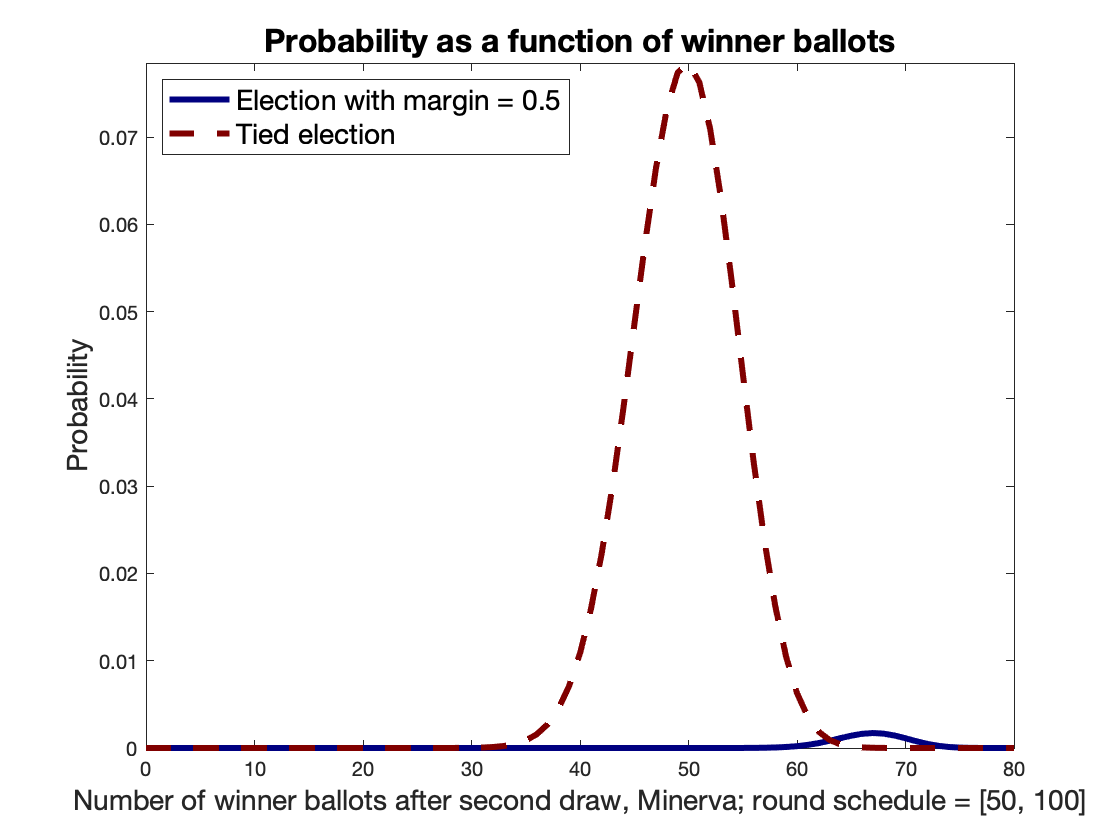}
 \caption{Probability Distribution of Winner Ballots for \Minerva: $p=0.75$, $n_1=50$, $n_2=100$: After Drawing the Second Round. }
 \label{fig:minerva:conv}
 \end{centering}
\end{figure}
\end{example}

In order to compute probability distributions for the next round, we would first compute the value of $k_{min, 1}$ for this round using the tail ratio, then zero the probability distributions for the value of $k_{min, 1}$ and above, and then perform a convolution with the binomial distribution corresponding to the size of the next draw. And so on. 

Probability distributions for \B audits may be computed similarly, with the round schedule: $(1, 2, \ldots, i, \ldots)$.  We used this approach to compute percentiles for the \Bravo stopping probabilities; see Section \ref{sec:results} for the results. 

\section{The \Athena Class of Audits}
\label{sec:algorithms}
In this section we rigorously describe \Minerva and \Athena, two audits from the new \Athena class of risk-limiting audits. The stopping condition for \Bravo is a comparison test for the ratio of probabilities of the number of winner ballots. On the other hand, the stopping conditions for the \Athena class are comparison tests for the ratio of the complementary cumulative distribution functions (cdfs). For the \Athena class of audits, the stopping condition for a given round does depend on previous round sizes, which are required to compute the complementary cdfs, but not on future round sizes. 

 \subsection{The \Minerva audit}
Given the \B $(\alpha, p)$-\Bravo test we define the corresponding \R \Minerva test by its stopping condition, which is a comparison test of the ratio of the complementary cdfs of samples that did not satisfy the stopping condition for any previous round. We expect that similar \R \Minerva tests can be defined for other {\em SPRT}s with zero error of the second kind whose probability ratio is a monotonic increasing function of $k$, such as Bayesian audits, but do not address these in this paper. Note that the round schedule is predetermined before the audit begins. 

\begin{definition}[$ (\alpha, p, (n_1, n_2, \ldots, n_j, \ldots) ) $-\Minerva]
 \label{def:minerva}
Given \B $(\alpha, p)$-\Bravo and round sizes $n_1, n_2, \ldots, n_j, \ldots$, the corresponding \R \Minerva stopping rule for the $(j+1)^{th}$ round is:
\begin{equation}
\mathcal{A}(X_{j+1})=  \left\{ \begin{array}{ll} Correct & ~~~\tau_{j+1}(k_{j+1}, p, (n_1, n_2, \ldots, n_j, n_{j+1}), \alpha ) \geq \frac{1}{\alpha}\\
& \\
Undetermined~(draw~more~samples) & ~~~else \\
\end{array}
\right .
\label{eqn:minerva-test}
\end{equation}
where $\tau _{j+1}$ is the {\em complementary cumulative distribution ratio} for the ${j+1}^{th}$ round:
\begin{equation}
\tau_{j+1}(k_{j+1}, p, (n_1, n_2, \ldots, n_j, n_{j+1}), \alpha )  =  \frac{Pr[K_{j+1} \geq k_{j+1} \wedge \forall_{i\leq j} ({\mathcal{A}}(X_i) ~\neq Correct) \mid H_a, n_{j+1}]}{Pr[K_{j+1} \geq k_{j+1} \wedge \forall_{i\leq j} ({\mathcal{A}}(X_i) ~\neq Correct) \mid tie, n_{j+1}]} ~~~j \geq 1
\label{eqn:minerva_sigma}
\end{equation}
and, as with \B $(\alpha, p)$-\Bravo, $H_a$, the alternate hypothesis, is that the fractional tally for the winner is $p$. 
\end{definition}
Clearly, for $j=0$ and the first round, 
\[\tau _{1}(k_1, p, n_1) =  \frac{Pr[K_1 \geq k_{1} \mid H_a, n_1]}{Pr[K_1 \geq k_{1} \mid tie, n_1]}\]

\subsection{The \Athena audit}
In addition to comparing the ratio of complementary cumulative distribution functions as in \Minerva, the \Athena audit also enforces a lower bound on the {\em probability ratio}, $\sigma$, of the \B $(\alpha, p)$-\Bravo test. 

\begin{definition}[$ (\alpha, \delta, p, (n_1, n_2, \ldots, n_j, \ldots) ) $-\Athena]
\label{def:athena}
Given \B $(\alpha, p)$-\Bravo, round sizes $n_1, n_2, \ldots, n_j, \ldots$ and parameter $\delta$, the corresponding \Athena stopping rule for the $(j+1)^{th}$ round is:
\begin{equation}
\mathcal{A}(X_{j+1})=  \left\{ \begin{array}{ll} Correct & ~~~\omega_{j+1}(k_{j+1}, p, (n_1, n_2, \ldots, n_j, n_{j+1}) )  \geq \frac{1}{\alpha}  \\
& \wedge \sigma(k_{j+1}, p, n_{j+1}) \geq \frac{1}{\delta} \\
& \\
Undetermined~(draw~more~samples) & ~~~else \\
\end{array}
\right .
\label{eqn:athena-test}
\end{equation}
where $\omega _{j+1}$ is the {\em complementary cumulative distribution ratio} for the $(j+1)^{th}$ round:
\begin{equation}
\omega_{j+1}(k_{j+1}, p, (n_1, n_2, \ldots, n_j, n_{j+1}), \alpha, \delta ) =  \frac{Pr[K_{j+1} \geq k_{j+1} \wedge \forall_{i\leq j} ({\mathcal{A}}(S_i) ~\neq Correct) \mid H_a, n_{j+1}]}{Pr[K_{j+1} \geq k_{j+1} \wedge \forall_{i\leq j} ({\mathcal{A}}(X_i) ~\neq Correct) \mid tie, n_{j+1}]} ~~~ j \geq 1
\label{eqn:athena-ratio}
\end{equation}

\[\sigma(k_{j+1}, p, n_{j+1}) =  \frac{p^{k_{j+1}}(1-p)^{n_{j+1}-{k_{j+1}}}}{(\frac{1}{2})^{n_{j+1}}} ~~~ j \geq 1\]

and, as with \B $(\alpha, p)$-\Bravo, $H_a$, the alternate hypothesis, is that the fractional tally for the winner is $p$. 
\end{definition}

Clearly, for $j=0$ and the first round, 
\[\omega _{1}(k_1, p, n_1) =  \frac{Pr[K_1 \geq k_{1} \mid H_a, n_1]}{Pr[K_1 \geq k_{1} \mid tie, n_1]}\]

We further define the risk and stopping probability associated with each round. 
\begin{definition}[$S_j$]
\label{def:stopping_j}
The probability of stopping in the $j^{th}$ round for audit ${\mathcal A}$ is defined as: 
\[ S_j = Pr[({\mathcal{A}}(X_j) ~= Correct)  \wedge \forall_{i < j} ({\mathcal{A}}(X_i) ~\neq Correct) \mid H_a, n_{j}] \]
\end{definition}

\begin{definition}[$R_j$]
\label{def:risk_j}
The risk of the $j^{th}$ round of audit ${\mathcal A}$ is defined as: 
\[ R_j = Pr[({\mathcal{A}}(X_j) ~= Correct)  \wedge \forall_{i < j} ({\mathcal{A}}(X_i) ~\neq Correct) \mid H_0, n_{j}] \]
\end{definition}  

\section{Risk-Limiting Properties of the \Athena Class of Audits}
\label{sec:theorems}
In this section we present the risk-limiting and efficiency properties of \Minerva and \Athena. We begin with an outline of our approach. Rigorous statements follow and proofs are in the Appendix. 

\subsection{An outline of the proofs}
In this section we provide an outline of the claims and proofs. 

Using induction on the number of rounds, we prove a couple of interesting properties, including, at the core, that the likelihood ratio of $K_j$ (total number of winner ballots) in the $j^{th}$ round is 
\[ \frac{Pr[K_j=k \wedge \forall_{i < j} ({\mathcal{A}}(X_i) ~\neq Correct) \mid H_a, n_{j}]}{Pr[K_j = k \wedge \forall_{i < j} ({\mathcal{A}}(X_i) ~\neq Correct) \mid H_0, n_{j}]}  = \sigma(k, p, n) \triangleq \frac{p^k(1-p)^{n-k}}{(\frac{1}{2})^{n}} \]
That is, for $k$ winner ballots in round $j$, when sequences are restricted to those that did not satisfy stopping conditions in previous rounds, the likelihood ratio is simply $\sigma(k, p, n)$, independent of any additional constraints on sequence order and the past or future round schedule. This property leads to the result that the test in the $j^{th}$ round is a comparison test for $k_j$. 

For the base case, it is easily shown that the likelihood ratio in the first round is $\sigma(k, p, n)$, as there is no previous round:
\[ \frac{Pr[K_1=k \mid H_a, n]}{Pr[K_1=k \mid H_0, n]} = \frac{{n \choose k} p^k(1-p)^{n-k}}{{n \choose k} (\frac{1}{2})^{n}} = \sigma(k, p, n) \]
$\sigma(k, p, n)$ is easily seen to be monotone increasing with $k$ because $p > \frac{1}{2}$. 

The induction step proceeds as follows. Suppose the likelihood ratio for the number of winner ballots in round $j$ is $\sigma(k, p, n)$. The ratios used for the stopping conditions of the $j^{th}$ round in \Minerva and \Athena
\[ \tau_{j}(k, p, (n_1, n_2, \ldots, n_j), \alpha )\] and 
\[ \omega_{j}(k, p, (n_1, n_2, \ldots, n_j), \alpha,\delta )\]
respectively, are weighted sums of $\sigma(k_j, p, n)$ for $k_j \geq k$. Because $\sigma(k_j, p, n)$ is monotone increasing with $k_j$, the respective stopping conditions are monotone increasing with $k$, and can be expressed as comparison tests for $k$. 

\Athena has two conditions. The second one is a comparison test for the likelihood ratio, and hence also equivalent to a comparison test for the number of winner ballots. The overall comparison test is the stricter of the two, and is also a comparison test. Thus the stopping condition for round $j$ is a test of the form $k \geq k_{min, j}$ for a $k_{min, j}$ that depends on the audit, its parameters including previous round sizes, the election parameters and the risk limit. As described in section \ref{sec:informal}, one can use convolution to compute the probability distributions for $k_{j+1}$. 

To determine the nature of the likelihood ratio for round $j+1$, we proceed as follows. The likelihood ratio for $k$ in round $j$ is assumed to be $\sigma(k, p, n)$. Hence, the distribution of $k$ in round $j$ given $H_a$ is a multiple of $p^k(1-p)^{n-k}$, and that given $H_0$ is the same multiple of $(\frac{1}{2})^n$. The multiplying factor itself is a function of $k$, current and previous round sizes, $p$ and $\alpha$, as it captures the previous comparison tests on the number of winner ballots. On convolution, when one obtains the distributions for round $j+1$, the multiplying factors change, but the one for the distribution given $H_a$ is the same as that for the distribution given $H_0$. Loosely speaking, the number of ways of obtaining a sequence with $k$ winner ballots in a round of size $n$, given previous round sizes (and hence previous comparison tests for winner ballots), is independent of the hypothesis. Thus the likelihood ratio of $k$ in round $j+1$ remains $\sigma(k, p, n)$. 

It is now easy to show that the audits are risk-limiting. The tail (beginning at $k_{min, j}$) of the probability distribution of $K_j$ in the $j^{th}$ round conditional on hypothesis $H_a$ ($H_0$) is the stopping probability (risk) associated with the $j^{th}$ round. Both \Minerva and \Athena audits require that the tail corresponding to $H_0$ (the risk corresponding to round $j$) be no more than $\alpha$ times the tail corresponding to $H_a$ (the stopping probability), thus ensuring that the sum of all the risk contributions of the rounds is no more than $\alpha$ times the total stopping probability, and hence no more than $\alpha$. 

Both $\Minerva$ and $\Athena$ are more efficient than {\em end-of-round} \Bravo. This is because the ratios $\sigma$, $\tau_j$ and $\omega_j$ are all compared to the same value, $\frac{1}{\alpha}$. For $k$ winner ballots, the ratio for \Bravo is $\sigma(k, p, n)$, which is monotone increasing with $k$. The ratios for \Athena and \Minerva, $\omega_j$ and $\tau_j$ respectively, are weighted sums of $\sigma(k_j, p, n)$ for $k_j \geq k$ and hence strictly larger, unless $k$ is the largest value with non-zero probability (it would be value of of $k_{min}-1$ for the previous round, plus the size of the current draw), when they are equal. Thus, if a sample satisfies the {\em end-of-round} \Bravo condition, it also satisfies the conditions on $\omega_j$ and $\tau_j$. The \Athena audit includes a second test, which is a comparison test for $\sigma(k, p, n)$. If $\delta \geq \alpha$, this too is satisfied if the {\em end-of-round} \Bravo condition is satisfied. 

We state the above claims more formally in the rest of this section, and prove them in the Appendix. We also prove that the \Minerva and \Athena ($\delta \geq \alpha$) stopping conditions, when round increments are specified to be one, are equivalent to the \B \Bravo condition, though p-values are not the same except for samples where the audits stop. 

\subsection{Notation}

We establish some shorthand notation which will be useful. 

For ease of notation, when the audit and its parameters: round schedule $(n_1, n_2, \ldots)$, risk limit $\alpha$, fractional vote for the winner $w_a$ are fixed, we denote: 
\[S_{j} (k_{j}) \triangleq Pr[K_j \geq k_{j} \wedge \forall_{i < j} ({\mathcal{A}}(X_i) ~\neq Correct) \mid H_a, n_1, \ldots, n_{j}] \]
\[R_{j} (k_{j}) \triangleq Pr[K_j \geq k_{j} \wedge \forall_{i < j} ({\mathcal{A}}(X_i) ~\neq Correct) \mid H_0, n_1, \ldots, n_{j}] \]
Thus $\frac{S_j(k_j)}{R_j(k_j)}$ is the ratio of the complementary cdfs in round $j$ when the number of winner ballots drawn is $k_j$, and the sequence did not satisfy the stopping condition in a previous round. 

Similarly, 
\[s_{j} ( k_{j}) \triangleq Pr[K_j = k_{j} \wedge \forall_{i < j} ({\mathcal{A}}(X_i) ~\neq Correct) \mid H_a, n_1, \ldots, n_{j}] \]
and 
\[r_{j} (k_{j}) \triangleq Pr[K_j = k_{j} \wedge \forall_{i < j} ({\mathcal{A}}(X_i) ~\neq Correct) \mid H_0, n_1, \ldots, n_{j}] \]
and $\frac{s_j(k_j)}{r_j(k_j)}$ is the likelihood ratio of $k_j$ winner ballots in round $j$ when the sequence did not satisfy the stopping condition in a previous round. 

Note also the following simple observation: 
\begin{equation}
\begin{split}
S_{j} (k_{j}) & = \sum_{k =k_j}^{n_j} s_j({k}) \\
& \\
R_{j} (k_{j}) & = \sum_{k =k_j}^{n_j} r_j({k}) \\
\end{split}
\end{equation}

Recall that, when we do not refer to parameters at all, $S_{j}$ corresponds to the stopping probability of the $j^{th}$ round and is not a function of the sample drawn, but of the audit. Similarly for $R_{j}$, $R$ and $S$. (See Definitions \ref{def:stopping_j} and \ref{def:risk_j}). 

\subsection{Properties of the \Minerva and \Athena complementary cdf ratios}
\label{sec:ccdf_props}
In this section we prove interesting properties of the \Minerva and \Athena ratios that are necessary to prove that the audits are risk-limiting. 

Note that the \B $(\alpha, p)$-\Bravo stopping condition is based on $\sigma(k, p, n)$: 
\[ \sigma(k, p, n) = \frac{p^k(1-p)^{n-k}}{(\frac{1}{2})^n} \]
where the history of round size is completely captured in the total number of ballots drawn, $n$.

We prove that $\sigma(k, p, n)$ is also the likelihood ratio of winner ballots in all rounds of the \Minerva and \Athena audits, even though round sizes are not constrained in any way.  We additionally prove other interesting properties. 

\begin{theorem}
\label{thm:sigma_minerva}
For the $(\alpha, p, (n_1, n_2, \ldots, n_j, \ldots) ) $-\Minerva test, if the round schedule is pre-determined (before the audit begins), the following are true for $j=1, 2, 3, \ldots$
\begin{enumerate}
\item 
\[\frac{s_j(k_j)}{r_j(k_j)} = \sigma(k_{j}, p, n_{j})\] when $r_j(k_j)$ and $s_j(k_j)$ are defined and non-zero. 
\item $\tau_{j}(k_{j}, p, (n_1, n_2, \ldots, n_j), \alpha )$ is monotone increasing as a function of $k_{j}$. 
\item $\exists k_{min, j}(\Minerva,  (n_1, n_2, \ldots, n_j), p, \alpha)$ such that 
\[ {\mathcal{A}}(X_j) ~= Correct \Leftrightarrow K_j \geq k_{min, j}(\Minerva,  (n_1, n_2, \ldots, n_j), p, \alpha)\]
\end{enumerate}
\end{theorem}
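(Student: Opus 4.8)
The plan is to establish the three claims together by induction on the round index $j$, treating claim (1) as the engine and deriving claims (2) and (3) from it within each round. The governing observation is that, for \Minerva with a pre-determined schedule, the event ``the audit has not stopped before round $j$'' is a deterministic function of the observed winner-ballot counts $k_1,\ldots,k_{j-1}$ alone, and hence does not depend on which of $H_0$ or $H_a$ holds. This hypothesis-independence of the survival event is the conceptual heart of the whole theorem.

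For claim (1) I would argue directly rather than through convolution. Fix a round $j$ and a target count $k$. Because the schedule $(n_1,\ldots,n_j)$ is fixed in advance, the sample drawn through round $j$ is an ordered sequence of exactly $n_j$ ballots. Whether such a sequence contributes to $s_j(k)$ or $r_j(k)$ is decided by (i) $K_j=k$ and (ii) the survival predicate $\forall_{i<j}\,\mathcal{A}(X_i)\neq Correct$; by the governing observation, both conditions depend only on the sequence's running tallies, not on the hypothesis. Hence the set of qualifying sequences is identical under $H_0$ and $H_a$; call its cardinality $c_j(k)$. Each qualifying sequence has probability $p^{k}(1-p)^{n_j-k}$ under $H_a$ and $(\tfrac12)^{n_j}$ under $H_0$, so $s_j(k)=c_j(k)\,p^{k}(1-p)^{n_j-k}$ and $r_j(k)=c_j(k)\,(\tfrac12)^{n_j}$ with a common factor $c_j(k)$. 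Dividing gives $s_j(k)/r_j(k)=\sigma(k,p,n_j)$ whenever $c_j(k)\neq0$, which is exactly claim (1). Equivalently, one can run the convolution of Section~\ref{sec:computing}: the ``lopping'' at $k_{min,i}$ and the convolution with the round-$i$ binomial act through the same combinatorial kernel on the $H_a$ and $H_0$ distributions, preserving this common-factor form; that is the inductive incarnation of the same fact.

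Given claim (1) at round $j$, claims (2) and (3) are immediate. Substituting $s_j(k)=\sigma(k,p,n_j)\,r_j(k)$ into the definition of $\tau_j$ yields
\[
\tau_j(k_j,p,(n_1,\ldots,n_j),\alpha)=\frac{S_j(k_j)}{R_j(k_j)}=\frac{\sum_{k\geq k_j}\sigma(k,p,n_j)\,r_j(k)}{\sum_{k\geq k_j}r_j(k)},
\]
a weighted average of the values $\sigma(k,p,n_j)$ over $k\geq k_j$ with nonnegative weights $r_j(k)$. Since $p>\tfrac12$ makes $\sigma(\cdot,p,n_j)$ strictly increasing in its first argument, raising $k_j$ by one deletes the smallest value from the averaged set and can only (weakly) increase the average; this gives the monotonicity of claim (2). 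Claim (3) then follows because $\{\,k_j:\tau_j(k_j)\geq 1/\alpha\,\}$ is upward closed in $k_j$, so it is either empty (take $k_{min,j}:=n_j+1$) or has a least element $k_{min,j}$, and $\mathcal{A}(X_j)=Correct\Leftrightarrow K_j\geq k_{min,j}$.

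The delicate point, and where I would spend the most care, is the base of the argument: verifying that the survival predicate is genuinely hypothesis-independent and unambiguously defined. This is a recursion rather than a plain fact, because the stopping test in round $i$ is phrased through $\tau_i$, whose definition quantifies over survival in rounds $<i$. I would therefore let the induction hypothesis carry, alongside claim (1), the statement that $\mathcal{A}(X_i)=Correct$ is a fixed Boolean function of $(k_1,\ldots,k_i)$ (via claims (2)--(3) at round $i$), so that ``survival through round $j-1$'' is well-defined as a predicate on tallies before claim (1) for round $j$ is invoked. The pre-determined-schedule hypothesis is precisely what keeps each $n_j$ a constant, so that the sample space of round $j$ is the set of fixed-length sequences on which the clean per-sequence probabilities $p^{k}(1-p)^{n_j-k}$ and $(\tfrac12)^{n_j}$ hold; without it the length $n_j$ could itself depend on the data and the common-factor factorization would break.
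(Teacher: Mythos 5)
Your proposal is correct and shares the paper's overall skeleton: induction on $j$, with claim (1) as the engine and claims (2) and (3) derived from it exactly as you do, via the weighted-average argument for $\tau_j$ and the upward-closedness of $\{k_j : \tau_j(k_j)\geq 1/\alpha\}$. Where you differ is in how claim (1) is established. The paper's inductive step defines the ``lopped'' distributions $s_m^*, r_m^*$ (zeroed at and above $k_{min,m}$), writes them with a common multiplying factor $A(k_m)$ against $p^{k_m}(1-p)^{n_m-k_m}$ and $(\tfrac12)^{n_m}$ respectively, and shows by bookkeeping that convolution with the round-$(m{+}1)$ binomial preserves this common-factor form with a new factor $B(k_{m+1})$. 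You instead count ordered length-$n_j$ sequences directly: the set of sequences satisfying $K_j=k$ and the survival predicate is hypothesis-independent, each has per-sequence probability $p^k(1-p)^{n_j-k}$ or $(\tfrac12)^{n_j}$, and the common cardinality $c_j(k)$ plays the role of the paper's $A$/$B$. These are the same fact in two guises --- indeed the paper's proof-outline section states your counting interpretation as the intuition and then formalizes it by convolution --- but your version is more elementary and makes one point explicit that the paper leaves implicit: the survival predicate is only well-defined as a function of the tallies once claims (2)--(3) are available for earlier rounds, so the induction hypothesis must carry the Boolean-function-of-tallies statement alongside claim (1). That is a genuine strengthening of the exposition. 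One small remark: your ``can only (weakly) increase'' understates what holds; since $\sigma(\cdot,p,n_j)$ is strictly increasing, deleting the smallest term from the weighted average strictly increases it (this is the paper's Lemma on tail ratios), though weak monotonicity already suffices for claim (3).
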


Similarly: 

\begin{theorem}
\label{thm:sigma_athena}
For the $ (\alpha, \delta, p, (n_1, n_2, \ldots, n_j, \ldots) ) $-\Athena test, if the round schedule is pre-determined (before the audit begins), the following are true for $j=1, 2, 3, \ldots$:
\begin{enumerate}
\item 
\[\frac{s_j(k_j)}{r_j(k_j)} = \sigma(k_{j}, p, n_{j})\] when $r_j(k_j)$ and $s_j(k_j)$ are defined and non-zero. 
\item $\omega_{j}(k_{j}, p, (n_1, n_2, \ldots, n_j), \alpha, \delta)$ is monotone increasing as a function of $k_{j}$.
\item $\exists k_{min, j}(\Athena,  (n_1, n_2, \ldots, n_j), p, \alpha, \delta)$ such that 
\[ {\mathcal{A}}(X_j) ~= Correct \Leftrightarrow K_j \geq k_{min, j}(\Athena,  (n_1, n_2, \ldots, n_j), p, \alpha)\]
\end{enumerate}
\end{theorem}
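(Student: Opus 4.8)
The plan is to prove all three parts simultaneously by induction on $j$, mirroring the argument for Theorem~\ref{thm:sigma_minerva} almost verbatim; the only genuinely new ingredient is the two-condition structure of the \Athena rule, which enters only in Part~3. Within each induction step the three parts must be established in a fixed order dictated by their dependencies: Part~1 at round $j$ relies on the stopping conditions of all earlier rounds being threshold tests in $K_i$ (i.e.\ Part~3 at rounds $<j$); Part~2 at round $j$ follows from Part~1 at round $j$; and Part~3 at round $j$ follows from Part~2 at round $j$ together with the elementary fact that $\sigma(k,p,n_j)$ is monotone increasing in $k$ (since $p>\tfrac12$). The key structural observation that makes the \Athena proof no harder than the \Minerva one is that Part~1 never cares \emph{why} the earlier rounds stopped; it only uses that the surviving set at each prior round is \emph{some} threshold set $\{K_i < k_{min,i}\}$, whether that threshold arises from $\omega$ alone (as in \Minerva) or from $\max$ of the $\omega$- and $\sigma$-thresholds (as in \Athena).

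For the base case $j=1$ there is no prior round, so $s_1(k)=\binom{n_1}{k}p^k(1-p)^{n_1-k}$ and $r_1(k)=\binom{n_1}{k}(\tfrac12)^{n_1}$, giving $s_1(k)/r_1(k)=\sigma(k,p,n_1)$ immediately; Parts~2 and~3 then follow as below. The heart of the induction step is Part~1. By the induction hypothesis (Part~3 at round $j$), a sequence survives round $j$ exactly when $K_j<k_{min,j}$, an event determined by the sequence itself and hence \emph{independent of the hypothesis}. Let $N_j(k_j)$ denote the (hypothesis-independent) number of surviving sequences with $K_j=k_j$; the induction hypothesis (Part~1) then forces $s_j(k_j)=N_j(k_j)\,p^{k_j}(1-p)^{n_j-k_j}$ and $r_j(k_j)=N_j(k_j)(\tfrac12)^{n_j}$. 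Convolving each with the binomial for the fresh draw of size $m=n_{j+1}-n_j$ under its own hypothesis, the powers of $p,(1-p)$ in the $H_a$ case (resp.\ of $\tfrac12$ in the tie case) combine and factor out, leaving a common combinatorial factor $M_{j+1}(k)=\sum_{k_j<k_{min,j}}N_j(k_j)\binom{m}{k-k_j}$. Hence $s_{j+1}(k)=M_{j+1}(k)\,p^{k}(1-p)^{n_{j+1}-k}$ and $r_{j+1}(k)=M_{j+1}(k)(\tfrac12)^{n_{j+1}}$, whose ratio is $\sigma(k,p,n_{j+1})$, completing Part~1 for round $j+1$.

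Part~2 follows from Part~1 by the standard ``monotone likelihood ratio implies monotone tail ratio'' lemma: since $\omega_{j+1}(k)=S_{j+1}(k)/R_{j+1}(k)=\big(\sum_{k'\ge k}s_{j+1}(k')\big)/\big(\sum_{k'\ge k}r_{j+1}(k')\big)$ is an $r_{j+1}$-weighted average of the values $\sigma(k',p,n_{j+1})$ over the tail $k'\ge k$, and these values increase in $k'$, increasing $k$ drops the smallest terms from the average and so raises $\omega_{j+1}$. For Part~3 I would combine the two thresholds: by Part~2 the condition $\omega_{j+1}\ge\tfrac1\alpha$ is equivalent to $K_{j+1}\ge k^{(\omega)}_{j+1}$ for some $k^{(\omega)}_{j+1}$, while $\sigma(k,p,n_{j+1})\ge\tfrac1\delta$ is equivalent to $K_{j+1}\ge k^{(\sigma)}_{j+1}$ by monotonicity of $\sigma$; the conjunction defining the \Athena stopping condition is therefore the single threshold test $K_{j+1}\ge k_{min,j+1}=\max\{k^{(\omega)}_{j+1},k^{(\sigma)}_{j+1}\}$.

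I expect the main obstacle to be the rigorous bookkeeping in the Part~1 induction step rather than any deep idea. Two points need care. First, I must justify that the survival event is hypothesis-independent, which in turn requires that each earlier round's stopping set is \emph{exactly} a threshold set; this is precisely what the simultaneous induction supplies, so the parts cannot be proved in isolation. Second, the ``pre-determined round schedule'' hypothesis is essential here and must be used explicitly: because $n_{j+1}$ (hence $m$) and the thresholds $k_{min,i}$ are fixed in advance, the combinatorial factor $M_{j+1}(k)$ genuinely does not depend on the realized sample or on the hypothesis; if round sizes could adapt to observed data the factorization of $s_{j+1}$ and $r_{j+1}$ into a common $M_{j+1}(k)$ would fail. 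Once these two points are handled, the remaining algebra is the routine convolution identity carried out above.
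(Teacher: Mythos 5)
Your proposal is correct and follows essentially the same route as the paper: the paper proves this theorem by invoking the induction used for the \Minerva version (truncate below $k_{min,j}$, factor the truncated distributions as a common hypothesis-independent multiplier times $p^{k}(1-p)^{n-k}$ or $(\tfrac12)^{n}$, convolve with the binomial for the fresh draw, then apply the monotone-tail-ratio lemma), and handles the second \Athena condition exactly as you do, by taking the maximum of the $\omega$- and $\sigma$-thresholds. Your explicit bookkeeping of the combinatorial factor $M_{j+1}(k)$ and your remarks on where the pre-determined round schedule is used match the paper's argument in substance.
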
 

\subsection{\Minerva and \Athena are risk-limiting}
\label{sec:rlas}
Now we may state the results on the risk limiting properties of the audits. 

\begin{theorem}\label{thm:minervaRLA}
$ (\alpha, H_a, (n_1, n_2, \ldots, n_j, \ldots) ) $-\Minerva is an $\alpha$-{\em RLA} if the round schedule is pre-determined (before the audit begins). 
\end{theorem}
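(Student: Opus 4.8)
The plan is to reduce the risk-limiting claim to the per-round inequality $R_j \le \alpha S_j$ and then sum over rounds, exactly as sketched informally in Section~\ref{sec:informal}, but now making every step rigorous by leaning on Theorem~\ref{thm:sigma_minerva}. First I would invoke part (3) of Theorem~\ref{thm:sigma_minerva}: for a pre-determined schedule the round-$j$ stopping condition is equivalent to $K_j \ge k_{min,j}$ for a threshold $k_{min,j}$ that depends only on the audit, its parameters, and the past round sizes $n_1,\ldots,n_j$ (never on future rounds). Consequently the event that \Minerva stops in round $j$ and no earlier is exactly the event that $K_j \ge k_{min,j}$ while $\mathcal{A}(X_i) \neq \textit{Correct}$ for all $i<j$, so by Definitions~\ref{def:stopping_j} and~\ref{def:risk_j} together with the shorthand of Section~\ref{sec:theorems} we may identify $S_j = S_j(k_{min,j})$ and $R_j = R_j(k_{min,j})$.

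Next I would extract the per-round bound directly from the stopping rule. Because $\tau_j$ is monotone increasing in $k_j$ (part (2) of Theorem~\ref{thm:sigma_minerva}) and $k_{min,j}$ is by definition the least value with $\tau_j \ge \tfrac{1}{\alpha}$, evaluating the condition at the threshold gives
\[ \tau_j(k_{min,j}, p, (n_1,\ldots,n_j), \alpha) = \frac{S_j(k_{min,j})}{R_j(k_{min,j})} \ge \frac{1}{\alpha}, \]
whence $R_j = R_j(k_{min,j}) \le \alpha\, S_j(k_{min,j}) = \alpha S_j$. (The degenerate case $R_j(k_{min,j}) = 0$, where $\tau_j$ is not finite, is handled separately since there $R_j \le \alpha S_j$ holds trivially.)

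Finally I would assemble the total risk. The events ``stops in round $j$ and no earlier,'' for $j = 1, 2, \ldots$, are pairwise disjoint, and their union is precisely the event that \Minerva ever outputs \textit{Correct}. Hence under $H_0$ the risk is $P_M = \sum_j R_j$, while under $H_a$ the same disjoint events give $\sum_j S_j \le 1$. Combining these with the per-round bound,
\[ P_M \;=\; \sum_j R_j \;\le\; \alpha \sum_j S_j \;\le\; \alpha, \]
so $Pr[\mathcal{A}(X) = \textit{Correct} \mid H_0] \le \alpha$ and \Minerva satisfies Definition~\ref{def:rla}.

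The main obstacle is not the arithmetic above but justifying that the risk genuinely decomposes as $\sum_j R_j$ over disjoint first-stopping events, which rests on the ``extensions'' convention of Definition~\ref{def:bravo}: once a prefix satisfies the condition the audit stops and every extension is deemed to have passed, so no sample is charged to two rounds and the first-stopping decomposition is exact. This is also where the pre-determined-schedule hypothesis is essential --- it guarantees that each $k_{min,j}$, and therefore each $R_j$ and $S_j$, is a fixed quantity independent of the realized sample, so the per-round inequality can be applied round by round and summed without circularity.
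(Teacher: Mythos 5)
Your proposal is correct and follows essentially the same route as the paper's proof: identify $R_j = R_j(k_{min,j})$ and $S_j = S_j(k_{min,j})$ via part (3) of Theorem~\ref{thm:sigma_minerva}, obtain $R_j \le \alpha S_j$ from the stopping condition evaluated at the threshold, and sum the disjoint first-stopping events to get $R = \sum_j R_j \le \alpha \sum_j S_j \le \alpha$. Your additional remarks on the degenerate case $R_j(k_{min,j})=0$ and on why the pre-determined schedule is needed are sensible elaborations of steps the paper leaves implicit, not a different argument.
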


Exactly the same approach may be used to prove: 
\begin{theorem}\label{thm:athenaRLA}
$ (\alpha, \delta, H_a, (n_1, n_2, \ldots, n_j, \ldots) ) $-\Athena is an $\alpha$-{\em RLA} if the round schedule is pre-determined (before the audit begins). 
\end{theorem}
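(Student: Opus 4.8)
The plan is to follow \emph{exactly} the argument used for \Minerva (Theorem~\ref{thm:minervaRLA}), reducing the risk-limiting claim for \Athena to a telescoping sum over the per-round risks. The starting point is Theorem~\ref{thm:sigma_athena}, which I would invoke to conclude that, under a pre-determined round schedule, the \Athena stopping condition in round $j$ is equivalent to a comparison test $K_j \geq k_{min,j}$ for a threshold $k_{min,j} = k_{min,j}(\Athena, (n_1,\ldots,n_j), p, \alpha, \delta)$ that is fixed before any ballots are drawn. Writing $R_j$ and $S_j$ as in Definitions~\ref{def:risk_j} and~\ref{def:stopping_j}, and using the shorthand $R_j = R_j(k_{min,j})$, $S_j = S_j(k_{min,j})$, the event ``stop in round $j$ and no earlier'' is exactly $\{K_j \geq k_{min,j}\}$ intersected with $\{\forall_{i<j}\ \mathcal{A}(X_i)\neq Correct\}$.

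First I would establish the per-round bound $R_j \leq \alpha S_j$. The \Athena threshold is $k_{min,j} = \max(k_{min,j}^{\omega}, k_{min,j}^{\sigma})$, the larger of the thresholds induced by the complementary-cdf test $\omega_{j}\geq\tfrac{1}{\alpha}$ and by the likelihood-ratio test $\sigma\geq\tfrac{1}{\delta}$; both are comparison tests because $\omega_j$ and $\sigma$ are monotone increasing in $k_j$ (Theorem~\ref{thm:sigma_athena}, part~2, together with the monotonicity of $\sigma$ established in Section~\ref{sec:problem}). Since $\omega_j$ is monotone increasing and $k_{min,j} \geq k_{min,j}^{\omega}$, the complementary-cdf condition is still satisfied at the \Athena threshold, i.e.\ $\omega_j(k_{min,j}) = S_j/R_j \geq \tfrac{1}{\alpha}$, whence $R_j \leq \alpha S_j$. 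The extra $\sigma$ test can only raise the threshold and hence only shrink $R_j$ further; it never jeopardizes this inequality.

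Finally I would sum over rounds. The events ``stop in round $j$ and no earlier'', for $j = 1, 2, \ldots$, are pairwise disjoint, so under $H_0$ their probabilities add to the total risk $P_M = P[\mathcal{A}(X)=Correct \mid H_0] = \sum_j R_j$, and under $H_a$ they add to the total stopping probability $S = \sum_j S_j \leq 1$. Combining with the per-round bound gives
\[ P_M = \sum_j R_j \leq \alpha \sum_j S_j = \alpha S \leq \alpha, \]
which is precisely the $\alpha$-\emph{RLA} condition of Definition~\ref{def:rla}.

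The substantive content is entirely inherited from Theorem~\ref{thm:sigma_athena}; given that theorem, the remaining risk-limiting argument is a short accounting step. The one point that requires care---and the only place the two \Athena tests interact---is verifying that raising the threshold from $k_{min,j}^{\omega}$ to $\max(k_{min,j}^{\omega}, k_{min,j}^{\sigma})$ preserves $\omega_j(k_{min,j}) \geq \tfrac{1}{\alpha}$; this is immediate from monotonicity, which is exactly why the \Minerva proof transfers verbatim and why the extra condition cannot enlarge the accrued risk. I would also emphasize that the pre-determined round schedule is essential, since it is what makes each $k_{min,j}$ (and hence the likelihood-ratio identity of Theorem~\ref{thm:sigma_athena}) well-defined independently of the realized sample.
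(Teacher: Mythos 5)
Your proposal is correct and follows essentially the same route as the paper, which simply states that the \Minerva argument (per-round bound $R_j \leq \alpha S_j$ from the comparison-test form of the stopping rule, then summing disjoint round events against $\sum_j S_j \leq 1$) carries over verbatim. Your explicit check that raising the threshold to $\max(k_{min,j}^{\omega}, k_{min,j}^{\sigma})$ preserves $\omega_j(k_{min,j}) \geq \frac{1}{\alpha}$ by monotonicity is the one detail the paper leaves implicit, and you handle it correctly.
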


\subsection{Properties of \B versions of \Minerva and \Athena}
In this section we study the relationship between \B \Bravo and \Minerva with each round consisting of a single ballot draw. We make the following observation: samples satisfying the stopping condition of $(\alpha, p)$-\Bravo, performed ballot-by-ballot, are exactly those satisfying that of the $(\alpha, H_a, (1, 2, 3, \ldots, j, \ldots))$-\Minerva audit, where $H_a$ is the hypothesis that the winner's fractional tally is $p$. The p-values of the two audits, however, differ except at their values of $k_{min}$. 
 
 \begin{theorem}
 \label{thm:B2Minerva}
The \B $(\alpha, p)$-\Bravo audit stops for a sample of size $n_j$ with $k_j$ ballots for the winner, if and only if the $(\alpha, H_a, (1, 2, 3, \ldots, j, \ldots))$-\Minerva audit stops.  
 \end{theorem}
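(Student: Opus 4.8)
The plan is to reduce the statement to an equality of stopping thresholds and prove that equality by induction on the round index $j$, which in this schedule equals the sample size ($n_j=j$). By Theorem \ref{thm:sigma_minerva}, the unit-round \Minerva audit is a comparison test: there is a threshold $k^{M}_{min,j}$ with $\mathcal{A}(X_j)=Correct \Leftrightarrow K_j \geq k^{M}_{min,j}$, while \B \Bravo stops at step $j$ iff $K_j \geq k_{min}(\Bravo,j,p,\alpha)$. Once I show $k^{M}_{min,j}=k_{min}(\Bravo,j,p,\alpha)$ for every $j$, the two audits have identical survival and stopping events at every step and the claimed equivalence is immediate. I will therefore carry an induction hypothesis coupling the two audits: for all $i\leq j-1$ the thresholds agree, so the surviving distributions $r_i,s_i$ of Section \ref{sec:theorems} are common to both and supported on $K_i\leq k_{min}(\Bravo,i,p,\alpha)-1$. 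The easy inclusion (\Bravo\ stops $\Rightarrow$ \Minerva\ stops) follows from $\tau_j(k)=S_j(k)/R_j(k)=\big(\sum_{k'\geq k}\sigma(k',p,j)\,r_j(k')\big)\big/\big(\sum_{k'\geq k}r_j(k')\big)$, which, since $\sigma(\cdot,p,j)$ is increasing (Theorem \ref{thm:sigma_minerva}), is a weighted average of values $\sigma(k',p,j)\geq\sigma(k,p,j)$; hence $\tau_j(k)\geq\sigma(k,p,j)$ and for $k\geq k_{min}(\Bravo,j,p,\alpha)$ we get $\tau_j(k)\geq\tfrac1\alpha$, giving $k^{M}_{min,j}\leq k_{min}(\Bravo,j,p,\alpha)$.

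The reverse inclusion is the main obstacle: I must show \Minerva\ does not stop for any $k<k_{min}(\Bravo,j,p,\alpha)$, a statement that is false for general round schedules and must exploit the one-ballot-per-round structure. By monotonicity of $\tau_j$ it suffices to treat $k=k_{min}(\Bravo,j,p,\alpha)-1$. Because each round draws a single ballot, the surviving mass $r_j$ has support top $k_{min}(\Bravo,j-1,p,\alpha)$, and $k_{min}(\Bravo,\cdot)$ is a ceiling of a line of slope $m\in(0,1)$ (equation (\ref{eqn:kmin})), so it increments by $0$ or $1$ at step $j$. If it increments, then $k=k_{min}(\Bravo,j,p,\alpha)-1$ is exactly the support top, the tail collapses to a single point, and $\tau_j(k)=\sigma(k,p,j)<\tfrac1\alpha$. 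The delicate case is a \emph{flat} step $k_{min}(\Bravo,j,p,\alpha)=k_{min}(\Bravo,j-1,p,\alpha)$, where the tail at $k$ contains the two points $k$ and $k+1=k_{min}(\Bravo,j,p,\alpha)$ and $\tau_j(k)$ is a weighted average straddling $\tfrac1\alpha$.

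I expect to settle the flat case with the one-step Bernoulli recurrence for the surviving distributions. In this case reaching $K_j=k+1$ as a survivor forces $K_{j-1}=k$ (arriving from $K_{j-1}=k+1=k_{min}(\Bravo,j-1,p,\alpha)$ would already have stopped), so $r_j(k+1)=\tfrac12 r_{j-1}(k)$ and $r_j(k)=\tfrac12\big(r_{j-1}(k)+r_{j-1}(k-1)\big)$, with the analogous $p,(1-p)$ weights under $H_a$. Substituting these into $R_j(k)-\alpha S_j(k)$, using $s_{j-1}=\sigma\,r_{j-1}$ and the elementary identity $p\,\sigma(k-1,p,j-1)=\tfrac12\,\sigma(k,p,j)$, I expect the expression to factor as
\[ R_j(k)-\alpha S_j(k) = r_{j-1}(k)\big[1-\alpha\,\sigma(k,p,j-1)\big] + \tfrac{1}{2}\,r_{j-1}(k-1)\big[1-\alpha\,\sigma(k,p,j)\big]. \]
Both $k<k_{min}(\Bravo,j-1,p,\alpha)$ and $k<k_{min}(\Bravo,j,p,\alpha)$ force $\sigma(k,p,j-1)<\tfrac1\alpha$ and $\sigma(k,p,j)<\tfrac1\alpha$, so the two bracketed factors are positive and $R_j(k)>\alpha S_j(k)$, i.e. $\tau_j(k)<\tfrac1\alpha$ and \Minerva\ does not stop. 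This gives $k^{M}_{min,j}\geq k_{min}(\Bravo,j,p,\alpha)$, closing the induction step; the base case $j=1$ is the single-ballot first round, where the untruncated top of support again makes $\tau_1$ coincide with $\sigma$. The only remaining care is the bookkeeping for the degenerate subcase $r_{j-1}(k)=0$ (then $r_j(k+1)=0$, the tail is once more a single point, and $\tau_j(k)=\sigma(k,p,j)<\tfrac1\alpha$ directly) and for confirming that equal thresholds keep the survival events of the two audits coupled, so that the final ``stops iff stops'' equivalence is literal.
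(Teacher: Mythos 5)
Your proof is correct and follows essentially the same route as the paper's: both exploit the one-ballot-per-round recurrence on the truncated (surviving) distributions to reduce the tail at the critical $k$ to at most two points, and both show the resulting two-point weighted average of $\sigma(k,p,j-1)$ and $2p\,\sigma(k-1,p,j-1)=\sigma(k,p,j)$ lies strictly below $\tfrac{1}{\alpha}$. The paper packages this as ``only the new support top can trigger stopping, and there $\tau_j=\sigma$,'' whereas you make the induction on threshold equality and the increment/flat case split on $k_{min}(\Bravo,\cdot)$ explicit, but the decisive computation is the same.
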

 
 \begin{corollary}
Given $\alpha, p, \delta$ such that $\delta \geq \alpha$, the \B $(\alpha, p)$-\Bravo audit stops for a sample of size $n_j$ with $k_j$ ballots for the winner, if and only if the $(\alpha, \delta, H_a, (1, 2, 3, \ldots, j, \ldots))$-\Athena audit stops.  
 \end{corollary}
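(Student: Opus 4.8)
Because the final statement is a corollary of Theorem~\ref{thm:B2Minerva}, the plan is to bootstrap from that theorem rather than redo the distributional analysis. First I would reduce the claim to a threshold identity: by part~3 of Theorem~\ref{thm:sigma_athena} the \B $(\alpha,\delta,p,(1,2,\ldots,j,\ldots))$-\Athena audit stops in round $j$ exactly when $K_j \ge k_{min,j}(\Athena,(1,\ldots,j),p,\alpha,\delta)$, and by equation~(\ref{eqn:kmin}) the \B $(\alpha,p)$-\Bravo audit stops exactly when $K_j \ge k_{min}(\Bravo,j,p,\alpha)$; since both are ballot-by-ballot threshold rules, the corollary is equivalent to $k_{min,j}(\Athena) = k_{min}(\Bravo,j,p,\alpha)$ for every $j \ge 1$. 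Theorem~\ref{thm:B2Minerva} (together with part~3 of Theorem~\ref{thm:sigma_minerva}) supplies the companion fact $k_{min,j}(\Minerva,(1,\ldots,j),p,\alpha) = k_{min}(\Bravo,j,p,\alpha)$, against which I would prove the \Athena identity by induction on $j$.

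For the inductive step I would assume that \Athena and \Minerva stop on exactly the same samples in every round $i<j$. The point of this hypothesis is that the \Athena ratio $\omega_j$ of~(\ref{eqn:athena-ratio}) and the \Minerva ratio $\tau_j$ of~(\ref{eqn:minerva_sigma}) are conditioned on the events $\forall_{i<j}(\mathcal{A}(X_i)\neq Correct)$ for their respective audits; once those survivor events coincide, the two conditional distributions of $K_j$ (under $H_a$ and under a tie) are identical, whence $\omega_j(k_j) = \tau_j(k_j)$ for every $k_j$. Consequently the first \Athena condition, $\omega_j(k_j)\ge 1/\alpha$, holds precisely when \Minerva stops, i.e. precisely when $k_j \ge k_{min}(\Bravo,j,p,\alpha)$.

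It then remains to check that the second \Athena condition, $\sigma(k_j,p,j)\ge 1/\delta$, never removes any of those stops, and this is where $\delta\ge\alpha$ enters. Since $\sigma(k,p,j)$ is monotone increasing in $k$ (noted in Section~\ref{sec:problem}), for any $k_j \ge k_{min}(\Bravo,j,p,\alpha)$ we have $\sigma(k_j,p,j) \ge \sigma(k_{min}(\Bravo,j,p,\alpha),p,j) \ge 1/\alpha \ge 1/\delta$, so the extra test is automatically satisfied. Therefore \Athena stops in round $j$ iff $k_j \ge k_{min}(\Bravo,j,p,\alpha)$, which is exactly $k_{min,j}(\Athena)=k_{min}(\Bravo,j,p,\alpha)$, closing the induction; the base case $j=1$ has no prior rounds, so $\omega_1=\tau_1$ holds outright and the same argument applies.

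I expect the only genuinely delicate point to be the coincidence $\omega_j\equiv\tau_j$: a priori the two audits could carve out different survivor sets in earlier rounds, which would make their round-$j$ conditional distributions—and hence the ratios—differ, so the induction hypothesis (equal stopping through round $j-1$) must be stated and used carefully rather than taken for granted. The remaining ingredients are routine: the monotonicity of $\sigma$ in $k$ and the single inequality $1/\delta\le 1/\alpha$. One could instead prove the corollary from scratch, mirroring the argument for Theorem~\ref{thm:B2Minerva} with the added $\sigma\ge 1/\delta$ test, but leveraging the theorem reduces the survivor-set comparison to a one-line conditioning argument.
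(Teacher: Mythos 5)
Your proof is correct and follows essentially the same route as the paper's: the first \Athena condition reduces to the \B \Minerva/\Bravo equivalence of Theorem~\ref{thm:B2Minerva}, and the second condition $\sigma \geq \frac{1}{\delta}$ is implied by the first whenever $\delta \geq \alpha$, by monotonicity of $\sigma$. The only difference is one of care rather than of method: the paper disposes of the first step with ``as in Theorem~\ref{thm:B2Minerva}'', whereas your explicit induction showing that the \Athena and \Minerva survivor sets coincide round by round (so that $\omega_j \equiv \tau_j$) spells out the same point more rigorously.
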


\subsection{Strong RLAs}
We define a new audit property which characterizes the difference between \Minerva and \Athena. 
\begin{definition}[Strong Risk Limiting Audit ($\alpha, \delta$)-RLA]\label{def:srla}
 An audit procedure $\mathcal{A}$ is an ($\alpha, \delta$)-\textit{Strong Risk Limiting Audit} if it is a Risk Limiting Audit 
 with risk level $\alpha$ and, if, for every accepted sample, the likelihood-ratio is bounded below by $\frac{1}{\delta}$:
 \[
 \mathcal{A}(X) = correct \Rightarrow \frac{Pr[X \mid H_a]}{Pr[X \mid H_0]} \geq \frac{1}{\delta}
 \]
\end{definition}

It is easy to see that: 

\begin{lemma}
\B \Bravo, {\em end-of-round} \Bravo and {\em selection-ordered-ballots} \Bravo are $(\alpha, \alpha)$-strong RLAs.
\end{lemma}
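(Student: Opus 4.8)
The plan is to check the two clauses of Definition~\ref{def:srla} in turn: first that each of the three audits satisfies Definition~\ref{def:rla} with risk limit $\alpha$, and then that every sample on which the audit outputs \emph{Correct} has sequence likelihood ratio at least $\frac{1}{\alpha}$, so that taking $\delta=\alpha$ suffices.

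The likelihood-ratio clause is essentially immediate from the remark following Definition~\ref{def:bravo}: for a fixed sequence $X$ of $n$ ballots containing $k$ winner ballots, the ordered-sequence probabilities are $Pr[X \mid H_a]=p^{k}(1-p)^{n-k}$ and $Pr[X \mid H_0]=(\tfrac12)^{n}$, so that
\[ \frac{Pr[X \mid H_a]}{Pr[X \mid H_0]} = \frac{p^{k}(1-p)^{n-k}}{(\tfrac12)^{n}} = \sigma(k,p,n), \]
with no binomial factor because $X$ is a specific ordered sequence. I would then observe that in each of the three variants the output \emph{Correct} is \emph{witnessed} by a sequence on which the \B \Bravo inequality~(\ref{eqn:bravo}), $\sigma(k,p,n)\ge\frac1\alpha$, is checked directly: for \B \Bravo it is the prefix at the draw where the condition first fires; for \emph{end-of-round} \Bravo it is the end-of-round sample of size $n_j$, since the condition is tested precisely on $(k_j,n_j)$; and for \emph{selection-ordered-ballots} \Bravo it is the prefix of length $n\le n_j$ that satisfies the condition. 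For that witnessing sequence $\sigma \ge \frac1\alpha$, and by the displayed identity its likelihood ratio equals $\sigma$, giving the bound with $\delta=\alpha$.

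For the RLA clause I would rely on the martingale structure of $\sigma$. Under $H_0$ the process $\sigma(K_n,p,n)$ is a nonnegative martingale with initial value $1$ (a one-line computation: $\tfrac12 p+\tfrac12(1-p)=\tfrac12$ cancels the extra factor), so Ville's maximal inequality gives $Pr[\sup_n \sigma(K_n,p,n)\ge\frac1\alpha \mid H_0]\le\alpha$. \B \Bravo stops exactly on this supremum event, so it is an $\alpha$-RLA; \emph{end-of-round} and \emph{selection-ordered-ballots} \Bravo output \emph{Correct} only on subevents of it (the former testing the ratio at the checkpoints $n_j$, the latter at all prefixes up to the last scheduled round), hence have risk at most $\alpha$ as well. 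Since the paper already treats these as established $\alpha$-RLAs, I could alternatively just cite that fact and skip the martingale argument.

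The main obstacle is conceptual rather than computational: pinning down what ``accepted sample'' means for \emph{selection-ordered-ballots} \Bravo. The paper's own Example~\ref{eg:B2-vs-R2} shows a run that fires at $n=22$ but whose full round-end sample at $n=40$ has $\sigma<\frac1\alpha$; so if one (wrongly) read $X$ in Definition~\ref{def:srla} as the full end-of-round sample, the likelihood-ratio clause would fail. The argument must therefore invoke the extension convention stated after Definition~\ref{def:bravo}---a sequence is declared \emph{Correct} because it, or one of its prefixes, satisfied~(\ref{eqn:bravo})---and apply the likelihood bound to the witnessing prefix rather than to its extensions. Once this identification is fixed, the three cases are uniform and the lemma follows.
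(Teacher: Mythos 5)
Your proposal is correct, and in fact supplies more than the paper does: the paper states this lemma with only the preface ``It is easy to see that'' and gives no proof in the Appendix, so there is nothing to compare against line by line. Your two-clause decomposition is the natural one. The likelihood-ratio clause is exactly the intended observation (the ordered-sequence likelihood ratio is $\sigma(k,p,n)$ with no binomial factor, and acceptance is always witnessed by a sequence on which inequality~(\ref{eqn:bravo}) is checked directly), and your martingale/Ville argument for the RLA clause is a correct self-contained substitute for the paper's citation of prior work establishing that \B \Bravo is an $\alpha$-RLA; citing that fact, as you note, would also suffice. Most importantly, you correctly isolate the one genuine subtlety: for {\em selection-ordered-ballots} \Bravo the end-of-round sample in Example~\ref{eg:B2-vs-R2} has $\sigma < \frac{1}{\alpha}$ even though the audit accepts, so Definition~\ref{def:srla} holds only if the ``accepted sample'' is read as the witnessing prefix (consistent with the paper's statement that the audit ``ends at $n=22$'' even though $40$ ballots are drawn). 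Flagging that the lemma is literally false under the other reading, rather than silently adopting the favorable one, is the right call; the paper itself does not address this.
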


\subsection{Efficiency}
\label{sec:efficiency}

In this section we present an efficiency result for \Minerva and \Athena (for $\delta \geq \alpha)$.  

\begin{theorem}
\label{thm:efficiency}
Given sample $X$ of size $n_j$ with $k_j$ samples for the winner, 
\[ \mathcal{A}_B(X) ~=~ Correct \Rightarrow \mathcal{A}_A(X) ~=~Correct \]
where $\mathcal{A}_B$ denotes the $(\alpha, p)$-\Bravo test and $\mathcal{A}_A$ the $(\alpha, p, (n_1, n_2, \ldots, n_j, \ldots) ) $-\Minerva test or the \\ $(\alpha, \delta, p, (n_1, n_2, \ldots, n_j, \ldots) ) $-\Athena test for $\delta \geq \alpha$ if the round schedule is pre-determined (before the audit begins). 
\end{theorem}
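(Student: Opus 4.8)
The plan is to reduce the entire statement to the single inequality $\tau_j(k_j) \geq \sigma(k_j, p, n_j)$ (and the identical one for $\omega_j$), and then combine it with the fact that all three stopping conditions are compared against the same threshold $\frac{1}{\alpha}$. First I would dispose of a bookkeeping issue. If the sample $X$ already satisfied the \Minerva (resp.\ \Athena) stopping condition in some earlier round $i < j$, then $\mathcal{A}_A(X) = Correct$ by the extension convention of Definition \ref{def:bravo}, and the implication is trivial; so I may assume $X$ is a \emph{surviving} sample, i.e.\ $\forall_{i<j}\, \mathcal{A}(X_i) \neq Correct$. For such a sample the observed count $k_j$ lies in the support of the surviving distribution, so $r_j(k_j) > 0$ under $H_0$ and the ratio $\tau_j(k_j) = S_j(k_j)/R_j(k_j)$ used in Definition \ref{def:minerva} is well-defined.

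Next I would invoke the likelihood-ratio identity of Theorem \ref{thm:sigma_minerva}, part 1, namely $s_j(k) = \sigma(k,p,n_j)\, r_j(k)$ for every $k$ in the surviving support. Substituting into $S_j(k_j) = \sum_{k=k_j}^{n_j} s_j(k)$ and $R_j(k_j) = \sum_{k=k_j}^{n_j} r_j(k)$ gives
\[
\tau_j(k_j) = \frac{\sum_{k=k_j}^{n_j} \sigma(k,p,n_j)\, r_j(k)}{\sum_{k=k_j}^{n_j} r_j(k)},
\]
which exhibits $\tau_j(k_j)$ as a convex combination of the values $\{\sigma(k,p,n_j) : k \geq k_j\}$ with nonnegative weights $r_j(k)/R_j(k_j)$. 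Since $p > \tfrac{1}{2}$, the ratio $\sigma(k,p,n_j)$ is strictly increasing in $k$, so every term of this average is at least $\sigma(k_j,p,n_j)$; hence $\tau_j(k_j) \geq \sigma(k_j,p,n_j)$, with equality only when $k_j$ is the largest count carrying positive weight.

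With this inequality the theorem follows immediately. For \Minerva: if $\mathcal{A}_B(X) = Correct$ then $\sigma(k_j,p,n_j) \geq \frac{1}{\alpha}$, so $\tau_j(k_j) \geq \sigma(k_j,p,n_j) \geq \frac{1}{\alpha}$ and the \Minerva condition holds. For \Athena the same computation applies to $\omega_j$, which has the identical form and for which Theorem \ref{thm:sigma_athena} supplies the same likelihood-ratio identity (over \Athena's own surviving support), giving $\omega_j(k_j) \geq \frac{1}{\alpha}$; the second \Athena conjunct of Definition \ref{def:athena}, $\sigma(k_j,p,n_j) \geq \frac{1}{\delta}$, then holds because $\delta \geq \alpha$ forces $\frac{1}{\delta} \leq \frac{1}{\alpha} \leq \sigma(k_j,p,n_j)$. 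Both conditions being met, $\mathcal{A}_A(X) = Correct$.

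The step I would be most careful about is the support/well-definedness issue handled in the first paragraph: I must verify that conditioning on survival does not make $R_j(k_j)$ vanish at the observed $k_j$, and that the weighted-average argument is carried out over exactly the surviving support that defines $\tau_j$ and $\omega_j$ (which differ between the two audits because their $k_{min,i}$ differ). Everything else is a direct assembly of Theorems \ref{thm:sigma_minerva} and \ref{thm:sigma_athena} with the elementary fact that a weighted average of an increasing sequence, taken over a tail, is bounded below by its first term.
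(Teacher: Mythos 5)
Your proposal is correct and follows essentially the same route as the paper's proof: both arguments exhibit $\tau_j$ (and $\omega_j$) as a weighted average of the monotone increasing $\sigma(k,p,n_j)$ over $k \geq k_j$, conclude $\tau_j(k_j) \geq \sigma(k_j,p,n_j)$, and dispose of \Athena's second conjunct via $\delta \geq \alpha$. You simply spell out details the paper leaves implicit (the extension convention for samples that stopped earlier, and the well-definedness of the ratio on the surviving support).
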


From this it follows that \Minerva and \Athena (for $\delta \geq \alpha)$ is each at least as efficient as the corresponding {\em end-of-round} application of \B rules. In section \ref{sec:results_first_round} we demonstrate that \Athena and \Minerva can be considerably more efficient. 

\section{Experimental Results}
\label{sec:results}
In this section we describe our experimental results. We first present our analytical results for percentiles of the \Bravo stopping condition, and compare them with those reported by Lindeman {\em et al.} \cite[Table 1]{bravo}. We then describe our estimates of first round sizes, comparing \Athena ($\delta=1$) to both {\em end-of-round} \Bravo and {\em selection-ordered-ballots} \Bravo. Finally, we present simulation results. 

\subsection{\B BRAVO Percentile Verification}
\label{sec:verification}
We used the approach described in Section \ref{sec:computing} to generate the probability distributions for \B \Bravo using various election margins to see how our estimates compared to those obtained by Lindeman {\em et al.} \cite[Table 1]{bravo}. They used $10,000$ simulations. 

Table \ref{tab:bravo_verification} presents our values. Values in parentheses are from \cite[Table 1]{bravo}, where they differ. Also listed in the table is Average Sample Number (ASN), which is computed using a standard theoretical estimate (and not using our analytical expressions, nor simulations). It provides a baseline to compare with the values for the Expected Ballots column. Some of the difference between our values and those of  \cite[Table 1]{bravo} is likely due to rounding off. Further, we notice that both our values and those of \cite[Table 1]{bravo}, when they differ from ASN, are lower than ASN. In our case, the difference is likely due to the fact that we compute our probability distributions for only up to 6ASN draws, using a finite summation to estimate the probability distributions, and we model the discrete character of the problem, which is not captured by ASN. The largest difference between our values and those of \cite[Table 1]{bravo} is $190$ ballots, corresponding to a fractional difference of 0.41 \%, in the estimate of the expected number of ballots drawn for a margin of 1\%. Our value is further from ASN. The average of the absolute value of the fractional difference between our results and those of \cite{bravo} is 0.13\%. The differences between our values and those obtained with simulations could be because $10,000$ simulations may not be sufficiently accurate at the lower margins, where most of the errors are. It could also be because our finite summation is not sufficient at low margin. 
\begin{table}
\centering
\scriptsize
\begin{tabular}{||l|r|r|r|r|r|r|r||} \hline \hline
Margin & $25^{th}$ & $50^{th}$ & $75^{th}$ & $90^{th}$ & $99^{th}$ & Expected Ballots & ASN \\ \hline \hline
0.4 & 12	& 22	& 38& 	60&	131 &	 29.47 &	 30.03 \\ 
& & & & 	&	&	 (30) &	 \\ \hline	
0.3	&23	&38	&66	&108	 &236 &	52.83& 53.25	\\ 
	&	&	&	& &&	(53) & \\ \hline	
0.2	& 49	 &84	& 149	& 244	&538	  &118.00& 118.88  \\ 	
& &	& 	&	&	  &(119) &  \\ \hline	
0.18	&77	&131 &	231	& 381	& 842 & 183.60 & 184.89 \\ 	 
	&	& &		& 	& (840) & (184) & \\ \hline	 
0.1	& 193&	332	&587	 & 974	& 2,155 & 466.47  & 469.26   \\ 
& &	& & 	&  (2,157) & (469) &  \\ \hline
0.08	& 301	& 518	& 916	& 1,520	& 3,366 & 726.95	& 730.80 \\ 
	& 	& 	& 	& 	&  &  (730)	&  \\ \hline
0.06	& 531& 	914 & 1,621 &	2,698	 & 5,976& 1,287.60 &	1,294.62  \\ 
& &  & (1,619) &		 (2,700) & (5,980) & (1,294) &	\\ \hline
	
0.04	& 1,190 & 2,051 & 	3,637 & 6,055 & 13,433  & 2,887.28  & 2,901.97  \\ 
	& (1,188) & & 	 &  (6,053) & (13,455) & (2,900) & \\ \hline 

0.02	& 4,727  &	8,161 & 	14,493  &	24,155  &	53,646 &	11,506.45 & 11,561.66 \\ 
	&  (4,725) &	(8,157) & 	(14,486) & (24,149) &	(53,640) &		(11,556) & \\ \hline

0.01	& 18,845 &	32,566  &	57,856 & 96,469 &	214,385 &	45,935.85 & 46,150.44 \\
 & (18,839) &	(32,547) &	(57,838) & (96,411) &	 (214,491) &	(46,126) & \\ \hline
\end{tabular}
\caption{Computed Estimates of \B \Bravo Stopping Probabilities. Values in parentheses are those from \cite[Table 1]{bravo} that differ.}
\label{tab:bravo_verification}
\end{table}
\subsection{First-round estimates}
\label{sec:results_first_round}
In this section we report the results of our estimates for first round sizes for 90\% stopping probability for both {\em end-of-round} \Bravo and \Athena ($\delta=1$), for the announced statewide results of the 2016 US Presidential election. Our results are presented in Table \ref{tab:distinct}. 

We constructed a table of stopping probability as a function of round size for a given margin, where the stopping probability of a round is the tail corresponding to the $k_{min}$ value for that round size. We observed that the stopping probability is not a monotone increasing function of round size. This is because, if $k_{min}$ increases with round size (it does not decrease, but it may remain the same), the stopping probability may decrease slightly. For our first-round-size computations reported in section \ref{sec:results}, we use the more conservative estimates: given a desired stopping probability $\rho$, we chose round sizes such that all larger rounds stopped with probability at least $\rho$. For small margins, smaller than $0.05$ (except the states of Michigan and New Hampshire, which had the smallest margin), we did not construct the entire table, but began looking for the values by checking if the values of $k$ with the requisite tail size satisfied the stopping condition. Finally, for the states of Michigan and New Hampshire, we approximated round size by estimating the binomial as a gaussian. 

We first examined the relationship between {\em end-of-round} \Bravo and \Athena ($\delta=1$) first-round sizes. We estimated the stopping probability and the first-round sizes for {\em end-of-round} \Bravo as described above. We use the round size beyond which the stopping probability is at least $90\%$ for both {\em end-of-round} \Bravo and \Athena, thus our round-size estimates are conservative. We scaled these estimates by the ratio of total ballots cast to the number of valid ballots in the contest between the two leading candidates, Trump and Clinton. Better estimates would result from taking into consideration every possible margin for every round size. Our approach, however, is sufficient for the purposes of a rough comparison (we have developed software for the more accurate approach; it is being tested). Of course, some of these sizes are too large for consideration in a real audit; in particular, the round size for New Hampshire is more than the number of ballots cast in the election. 

It is noteworthy that, across all margins, \Athena first round sizes are about half those of {\em end-of-round} \Bravo. We note that the number of distinct ballots drawn (thanks to Philip B. Stark for the suggestion) behaves similarly with margin, except for the smallest margins, when the number of ballots drawn is so large that the number of distinct ballots drawn differs from the number of ballot draws. 

We also estimate first round sizes for $90\%$ stopping probability for {\em selection-ordered-ballots} \Bravo by treating it as a multiple-round audit. We use the approach described in Section \ref{sec:computing}, for which our verification results were presented in Section \ref{sec:verification}. Our results are presented in Table \ref{tab:distinct_SB}. We currently omit estimates for states with margins smaller than $0.01$. In the other states, we observe that the improvement on using \Athena ($\delta=1$) is $15\%-29\%$, with greater improvements for smaller margins. Recall that, unlike {\em selection-ordered-ballots} \Bravo, \Athena does not require that the ballots be noted in selection order; sample tallies are sufficient. 

We present the same data (number of ballot draws, not number of distinct ballots) in the form of plots. Figure \ref{fig:first_round_sizes} plots the first round sizes of \Athena, {\em end-of-round} \Bravo and {\em selection-ordered-ballots} \Bravo on a log scale as a function of margin. One can observe that the \Athena round sizes are the smallest, and the {\em end-of-round} the largest. Figure \ref{fig:first_round_ratios} plots the \Athena round size as a fraction of the corresponding {\em end-of-round} \Bravo and {\em selection-ordered-ballots} \Bravo round sizes. There is a small variation with margin, with the \Athena round sizes being smaller fractions for smaller margins (that is, the improvement from using \Athena is larger for smaller margins). Note that a couple of states with the largest margins do not have the largest ratios. This is likely because the round sizes are very small, and hence a difference of a single ballot changes the ratio considerably. 


\begin{table}[h!]
\centering
\scriptsize
\begin{tabular}{||l|c|r|r|r|r|c|c||}
\hline
\hline
State & Margin & \multicolumn{2}{|c|}{EoR \Bravo} & \multicolumn{2}{|c|}{\Athena} & \multicolumn{2}{|c||}{\Athena size as a fraction} \\
 &  & \multicolumn{2}{|c|}{} & \multicolumn{2}{|c|}{} & \multicolumn{2}{|c||}{ of EoR \Bravo size} \\
& & Draws & Distinct Ballots & Draws & Distinct Ballots &  Draws & Distinct Ballots\\
\hline
Alabama & 0.2875 & 181 & 181 & 94 & 94 & 0.5193 & 0.5193 \\ 
\rowcolor{Gray}
Alaska & 0.1677 & 590 & 590 & 295 & 295 & 0.5000 & 0.5000 \\ 
Arizona & 0.0378 & 10,732 & 10,710 & 5,204 & 5,199 & 0.4849 & 0.4854 \\ 
Arkansas & 0.2857 & 187 & 187 & 96 & 96 & 0.5134 & 0.5134 \\ 
\rowcolor{Gray}
California & 0.3226 & 148 & 148 & 79 & 79 & 0.5338 & 0.5338 \\ 
Colorado & 0.0537 & 5,475 & 5,470 & 2,676 & 2,675 & 0.4888 & 0.4890 \\ 
Connecticut & 0.1428 & 748 & 748 & 374 & 374 & 0.5000 & 0.5000 \\ 
\rowcolor{Gray}
Delaware & 0.1200 & 1,057 & 1,056 & 523 & 523 & 0.4948 & 0.4953 \\ 
DistrictOfColumbia & 0.9139 & 15 & 15 & 8 & 8 & 0.5333 & 0.5333 \\ 
Florida & 0.0124 & 96,608 & 96,115 & 46,563 & 46,449 & 0.4820 & 0.4833 \\ 
\rowcolor{Gray}
Georgia & 0.0532 & 5,266 & 5,263 & 2,567 & 2,567 & 0.4875 & 0.4877 \\ 
Hawaii & 0.3488 & 128 & 128 & 68 & 68 & 0.5312 & 0.5312 \\ 
Idaho & 0.3662 & 120 & 120 & 64 & 64 & 0.5333 & 0.5333 \\ 
\rowcolor{Gray}
Illinois & 0.1804 & 474 & 474 & 242 & 242 & 0.5105 & 0.5105 \\ 
Indiana & 0.2023 & 374 & 374 & 187 & 187 & 0.5000 & 0.5000 \\ 
Iowa & 0.1013 & 1,520 & 1,520 & 753 & 753 & 0.4954 & 0.4954 \\ 
\rowcolor{Gray}
Kansas & 0.2222 & 318 & 318 & 162 & 162 & 0.5094 & 0.5094 \\ 
Kentucky & 0.3134 & 155 & 155 & 79 & 79 & 0.5097 & 0.5097 \\ 
Louisiana & 0.2034 & 365 & 365 & 182 & 182 & 0.4986 & 0.4986 \\ 
\rowcolor{Gray}
Maine & 0.0319 & 15,202 & 15,049 & 7,358 & 7,322 & 0.4840 & 0.4865 \\ 
Maryland & 0.2803 & 197 & 197 & 98 & 98 & 0.4975 & 0.4975 \\ 
Massachusetts & 0.2930 & 180 & 180 & 93 & 93 & 0.5167 & 0.5167 \\ 
\rowcolor{Gray}
Michigan & 0.0024 & 2,618,926 & 2,018,381 & 1,259,688 & 1,107,933 & 0.4810 & 0.5489 \\ 
Minnesota & 0.0166 & 56,680 & 56,139 & 27,421 & 27,294 & 0.4838 & 0.4862 \\ 
Mississippi & 0.1818 & 453 & 453 & 224 & 224 & 0.4945 & 0.4945 \\ 
\rowcolor{Gray}
Missouri & 0.1964 & 401 & 401 & 201 & 201 & 0.5012 & 0.5012 \\ 
Montana & 0.2222 & 320 & 320 & 164 & 164 & 0.5125 & 0.5125 \\ 
Nebraska & 0.2710 & 213 & 213 & 110 & 110 & 0.5164 & 0.5164 \\ 
\rowcolor{Gray}
Nevada & 0.0259 & 22,943 & 22,711 & 11,110 & 11,056 & 0.4842 & 0.4868 \\ 
NewHampshire & 0.0039 & 1,007,590 & 552,067 & 475,357 & 351,311 & 0.4718  &0.6364 \\ 
NewJersey & 0.1457 & 703 & 703 & 350 & 350 & 0.4979 & 0.4979 \\ 
\rowcolor{Gray}
NewMexico & 0.0930 & 1,888 & 1,886 & 934 & 934 & 0.4947 & 0.4952 \\ 
NewYork & 0.2354 & 272 & 272 & 140 & 140 & 0.5147 & 0.5147 \\ 
NorthCarolina & 0.0381 & 10,330 & 10,319 & 5,000 & 4,998 & 0.4840 & 0.4843 \\ 
\rowcolor{Gray}
NorthDakota & 0.3962 & 98 & 98 & 55 & 55 & 0.5612 & 0.5612 \\ 
Ohio & 0.0854 & 2,077 & 2,077 & 1,018 & 1,018 & 0.4901 & 0.4901 \\ 
Oklahoma & 0.3861 & 101 & 101 & 55 & 55 & 0.5446 & 0.5446 \\ 
\rowcolor{Gray}
Oregon & 0.1231 & 1,068 & 1,068 & 535 & 535 & 0.5009 & 0.5009 \\ 
Pennsylvania & 0.0075 & 265,245 & 259,621 & 127,792 & 126,477 & 0.4818 & 0.4872 \\ 
RhodeIsland & 0.1662 & 562 & 562 & 280 & 280 & 0.4982 & 0.4982 \\ 
\rowcolor{Gray}
SouthCarolina & 0.1492 & 683 & 683 & 344 & 344 & 0.5037 & 0.5037 \\ 
SouthDakota & 0.3194 & 154 & 154 & 79 & 79 & 0.5130 & 0.5130 \\ 
Tennessee & 0.2725 & 206 & 206 & 106 & 106 & 0.5146 & 0.5146 \\ 
\rowcolor{Gray}
Texas & 0.0943 & 1,706 & 1,706 & 833 & 833 & 0.4883 & 0.4883 \\ 
Utah & 0.2477 & 329 & 329 & 165 & 165 & 0.5015 & 0.5015 \\ 
Vermont & 0.3037 & 180 & 180 & 91 & 91 & 0.5056 & 0.5056 \\ 
\rowcolor{Gray}
Virginia & 0.0565 & 4,790 & 4,788 & 2,329 & 2,329 & 0.4862 & 0.4864 \\ 
Washington & 0.1757 & 525 & 525 & 265 & 265 & 0.5048 & 0.5048 \\ 
WestVirginia & 0.4432 & 76 & 76 & 41 & 41 & 0.5395 & 0.5395 \\ 
\rowcolor{Gray}
Wisconsin & 0.0082 & 229,503 & 220,878 & 110,622 & 108,592 & 0.4820 & 0.4916 \\ 
Wyoming & 0.5141 & 59 & 59 & 29 & 29 & 0.4915 & 0.4915 \\ \hline
\end{tabular}
\caption{Comparison of {\em end-of-round} (EoR) \Bravo and \Athena First-Round Sizes for Statewide 2016 US Presidential Contests, for $\delta=1.0$ and a stopping probability of $0.9$, contd.}
\label{tab:distinct}
\end{table} 
\clearpage
\newpage


\begin{table}[h!]
\centering
\scriptsize
\begin{tabular}{||l|c|r|r|r|r|c|c||}
\hline
\hline
State & Margin & \multicolumn{2}{|c|}{EoR \Bravo} & \multicolumn{2}{|c|}{\Athena} & \multicolumn{2}{|c||}{\Athena size as a fraction} \\
 &  & \multicolumn{2}{|c|}{} & \multicolumn{2}{|c|}{} & \multicolumn{2}{|c||}{ of SB \Bravo size} \\
& & Draws & Distinct Ballots & Draws & Distinct Ballots &  Draws & Distinct Ballots\\
\hline
Alabama & 0.2875 & 122 & 122 & 94 & 94 & 0.7705 & 0.7705 \\ 
\rowcolor{Gray}
Alaska & 0.1677 & 396 & 396 & 295 & 295 & 0.7449 & 0.7449 \\ 
Arizona & 0.0378 & 7,227 & 7,217 & 5,204 & 5,199 & 0.7201 & 0.7204 \\ 
Arkansas & 0.2857 & 128 & 128 & 96 & 96 & 0.7500 & 0.7500 \\ 
\rowcolor{Gray}
California & 0.3226 & 99 & 99 & 79 & 79 & 0.7980 & 0.7980 \\ 
Colorado & 0.0537 & 3,687 & 3,685 & 2,676 & 2,675 & 0.7258 & 0.7259 \\ 
Connecticut & 0.1428 & 502 & 502 & 374 & 374 & 0.7450 & 0.7450 \\ 
\rowcolor{Gray}
Delaware & 0.1200 & 716 & 716 & 523 & 523 & 0.7304 & 0.7304 \\ 
DistrictOfColumbia & 0.9139 & 10 & 10 & 8 & 8 & 0.8000 & 0.8000 \\ 
Florida & 0.0124 & 65,051 & 64,827 & 46,563 & 46,449 & 0.7158 & 0.7165 \\ 
\rowcolor{Gray}
Georgia & 0.0532 & 3,555 & 3,554 & 2,567 & 2,567 & 0.7221 & 0.7223 \\ 
Hawaii & 0.3488 & 86 & 86 & 68 & 68 & 0.7907 & 0.7907 \\ 
Idaho & 0.3662 & 83 & 83 & 64 & 64 & 0.7711 & 0.7711 \\ 
\rowcolor{Gray}
Illinois & 0.1804 & 318 & 318 & 242 & 242 & 0.7610 & 0.7610 \\ 
Indiana & 0.2023 & 254 & 254 & 187 & 187 & 0.7362 & 0.7362 \\ 
Iowa & 0.1013 & 1,024 & 1,024 & 753 & 753 & 0.7354 & 0.7354 \\ 
\rowcolor{Gray}
Kansas & 0.2222 & 215 & 215 & 162 & 162 & 0.7535 & 0.7535 \\ 
Kentucky & 0.3134 & 104 & 104 & 79 & 79 & 0.7596 & 0.7596 \\ 
Louisiana & 0.2034 & 247 & 247 & 182 & 182 & 0.7368 & 0.7368 \\ 
\rowcolor{Gray}
Maine & 0.0319 & 10,238 & 10,169 & 7,358 & 7,322 & 0.7187 & 0.7200 \\ 
Maryland & 0.2803 & 132 & 132 & 98 & 98 & 0.7424 & 0.7424 \\ 
Massachusetts & 0.2930 & 122 & 122 & 93 & 93 & 0.7623 & 0.7623 \\ 
\rowcolor{Gray}
Michigan & 0.0024 & - & - & 1,259,688 & 1,107,933 & - & - \\ 
Minnesota & 0.0166 & 38,185 & 37,939 & 27,421 & 27,294 & 0.7181 & 0.7194 \\ 
Mississippi & 0.1818 & 302 & 302 & 224 & 224 & 0.7417 & 0.7417 \\ 
\rowcolor{Gray}
Missouri & 0.1964 & 267 & 267 & 201 & 201 & 0.7528 & 0.7528 \\ 
Montana & 0.2222 & 217 & 217 & 164 & 164 & 0.7558 & 0.7558 \\ 
Nebraska & 0.2710 & 144 & 144 & 110 & 110 & 0.7639 & 0.7639 \\ 
\rowcolor{Gray}
Nevada & 0.0259 & 15,462 & 15,357 & 11,110 & 11,056 & 0.7185 & 0.7199 \\ 
NewHampshire & 0.0039 & - & - & 475,357 & 351,311 & - & - \\ 
NewJersey & 0.1457 & 478 & 478 & 350 & 350 & 0.7322 & 0.7322 \\ 
\rowcolor{Gray}
NewMexico & 0.0930 & 1,276 & 1,275 & 934 & 934 & 0.7320 & 0.7325 \\ 
NewYork & 0.2354 & 186 & 186 & 140 & 140 & 0.7527 & 0.7527 \\ 
NorthCarolina & 0.0381 & 6,961 & 6,956 & 5,000 & 4,998 & 0.7183 & 0.7185 \\ 
\rowcolor{Gray}
NorthDakota & 0.3962 & 70 & 70 & 55 & 55 & 0.7857 & 0.7857 \\ 
Ohio & 0.0854 & 1,403 & 1,403 & 1,018 & 1,018 & 0.7256 & 0.7256 \\ 
Oklahoma & 0.3861 & 69 & 69 & 55 & 55 & 0.7971 & 0.7971 \\ 
\rowcolor{Gray}
Oregon & 0.1231 & 724 & 724 & 535 & 535 & 0.7390 & 0.7390 \\ 
Pennsylvania & 0.0075 & - & - & 127,792 & 126,477 & - & - \\ 
RhodeIsland & 0.1662 & 382 & 382 & 280 & 280 & 0.7330 & 0.7330 \\ 
\rowcolor{Gray}
SouthCarolina & 0.1492 & 460 & 460 & 344 & 344 & 0.7478 & 0.7478 \\ 
SouthDakota & 0.3194 & 102 & 102 & 79 & 79 & 0.7745 & 0.7745 \\ 
Tennessee & 0.2725 & 138 & 138 & 106 & 106 & 0.7681 & 0.7681 \\ 
\rowcolor{Gray}
Texas & 0.0943 & 1,150 & 1,150 & 833 & 833 & 0.7243 & 0.7243 \\ 
Utah & 0.2477 & 220 & 220 & 165 & 165 & 0.7500 & 0.7500 \\ 
Vermont & 0.3037 & 122 & 122 & 91 & 91 & 0.7459 & 0.7459 \\ 
\rowcolor{Gray}
Virginia & 0.0565 & 3,229 & 3,228 & 2,329 & 2,329 & 0.7213 & 0.7215 \\ 
Washington & 0.1757 & 355 & 355 & 265 & 265 & 0.7465 & 0.7465 \\ 
WestVirginia & 0.4432 & 51 & 51 & 41 & 41 & 0.8039 & 0.8039 \\ 
\rowcolor{Gray}
Wisconsin & 0.0082 & - & - & 110,622 & 108,592 & - & - \\ 
Wyoming & 0.5141 & 40 & 40 & 29 & 29 & 0.7250 & 0.7250 \\ \hline 
\end{tabular}
\caption{Comparison of {\em selection-ordered} (SB) \Bravo and \Athena First-Round Sizes for Statewide 2016 US Presidential Contests, for $\delta=1.0$ and a stopping probability of $0.9$, contd.}
\label{tab:distinct_SB}
\end{table} 
\clearpage
\newpage

\begin{figure}[h!]
\begin{centering}
 \includegraphics[width=\columnwidth]{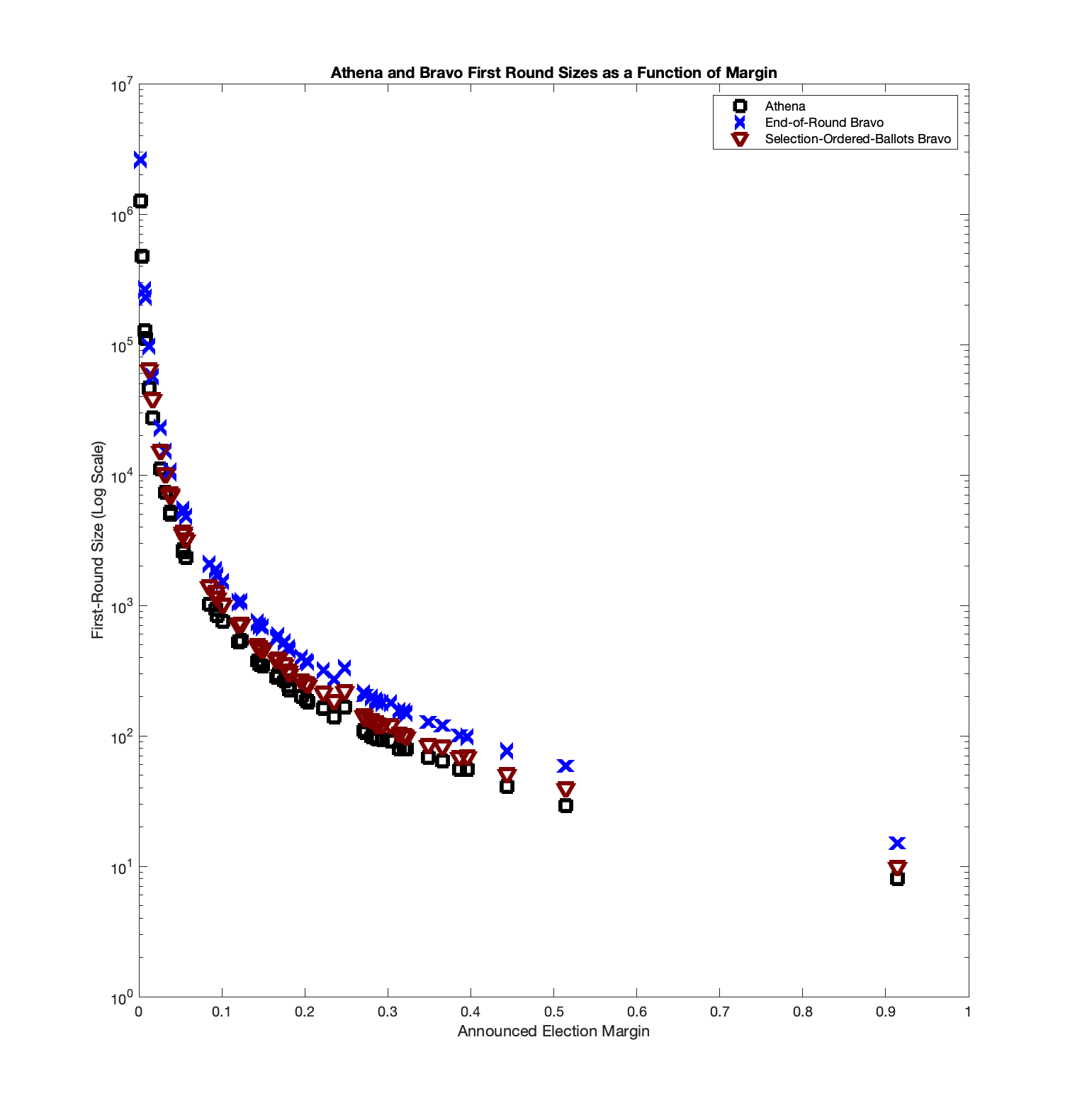}
 \caption{First-Round Sizes for 90\% stopping probability: {\em End-of-Round} \Bravo, {\em Selection-Ordered-Ballots} \Bravo and \Athena as a function of statewide margins of the 2016 US Presidential contest. }
 \label{fig:first_round_sizes}
 \end{centering}
\end{figure}
\clearpage
\newpage

\begin{figure}[h!]
\begin{centering}
 \includegraphics[width=\columnwidth]{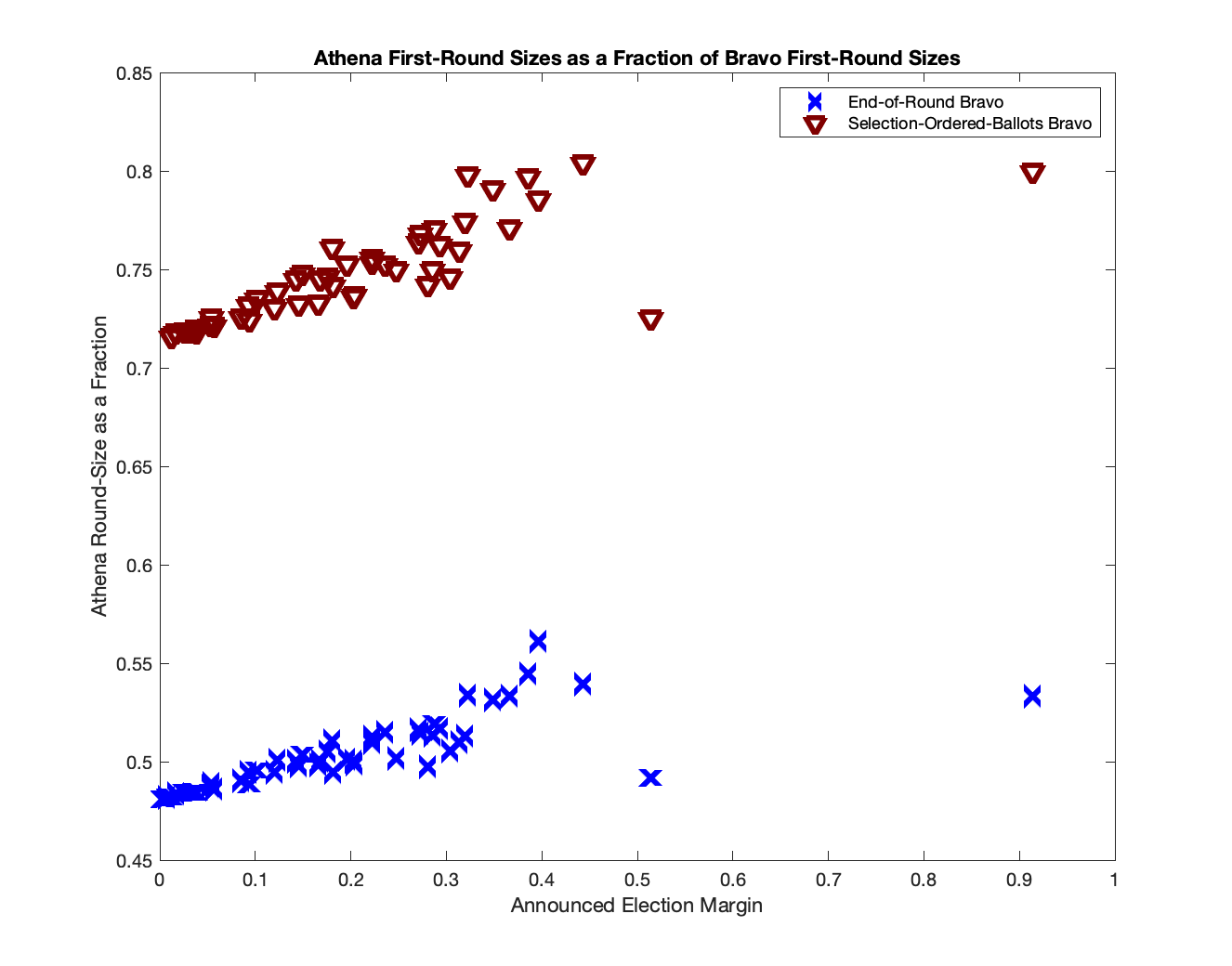}
 \caption{First-Round Sizes: \Athena first-round sizes for 90\% stopping probability as a fraction of those of {\em End-of-Round} \Bravo and {\em Selection-Ordered-Ballots}, for the statewide margins of the 2016 US Presidential contest. }
 \label{fig:first_round_ratios}
 \end{centering}
\end{figure}

\subsection{First-round Simulations}
\label{sec:sim}
We observed that the stopping conditions ($k_{min}$ values) for both \Athena ($\delta=1$) and \Minerva are identical for the \Athena first round sizes presented in Table \ref{tab:distinct}. This is because, for round sizes with large \Minerva stopping probabilities, the value of $k$ is a very good representative of the underlying distribution, and the likelihood ratio for $k \geq k_{min}$ is larger than $1$. 

We performed $100,000$ simulations of \Minerva for each of the round sizes and corresponding margins (except for a couple of the low margin states, Michigan and New Hampshire); the results are presented in Table \ref{tab:sim}. The simulations used the declared tallies, and hence included ballots that were not votes for the two main candidates.

We observed that the empirical stopping probabilities for each state were slightly larger than $90\%$. Additionally, we observed that the risk of the first round for each state was smaller than $9\%$, and hence that the stopping probability to risk ratio was larger than $\alpha^{-1} = 10$, which is as required by the stopping condition for \Minerva. Future work will include larger-scale and more complete simulations, as the risk-limited nature of the audit would need to be verified over multiple audit rounds. 
\newpage
\begin{table}[H]
\centering
\scriptsize
\begin{tabular}{|lccc|}
\hline
\hline
State & Margin & Round  Risk & Round Stopping Probability \\
\hline
Alabama & 0.2875 & 0.0776 & 0.9152 \\ 
\rowcolor{Gray}
Alaska & 0.1677 & 0.0817 & 0.9067 \\ 
Arizona & 0.0378 & 0.0888 & 0.9019 \\ 
Arkansas & 0.2857 & 0.0790 & 0.9107 \\ 
\rowcolor{Gray}
California & 0.3226 & 0.0730 & 0.9167 \\ 
Colorado & 0.0537 & 0.0876 & 0.9009\\ 
Connecticut & 0.1428 & 0.0821 & 0.9049  \\ 
\rowcolor{Gray}
Delaware & 0.1200 & 0.0844 & 0.9036 \\ 
District Of Columbia & 0.9139 & 0.0416 & 0.9478  \\ 
Florida & 0.0124 & 0.0886 & 0.9007\\ 
\rowcolor{Gray}
Georgia & 0.0532 & 0.0870 & 0.9018 \\ 
Hawaii & 0.3488 & 0.0728 & 0.9156  \\ 
Idaho & 0.3662 & 0.0761 & 0.9144 \\ 
\rowcolor{Gray}
Illinois & 0.1804 & 	0.0791 & 0.9097 \\ 
Indiana & 0.2023 & 0.0811 & 0.9083 \\ 
Iowa & 0.1013 & 0.0823 & 0.9051  \\ 
\rowcolor{Gray}
Kansas & 0.2222 & 0.0777 & 0.9074 \\ 
Kentucky & 0.3134 & 0.0748 & 0.9085 \\ 
Louisiana & 0.2034 & 0.0818 & 0.9074 \\ 
\rowcolor{Gray}
Maine & 0.0319 & 0.0896 & 0.9010 \\ 
Maryland & 0.2803 & 0.0800 & 0.9082 \\ 
Massachusetts & 0.2930 & 0.0736 & 0.9041 \\ 
\rowcolor{Gray}
Michigan & 0.0024 & - & - \\ 
Minnesota & 0.0166 & 0.0894 & 0.9008 \\ 
Mississippi & 0.1818 & 0.0836 & 0.9078 \\ 
\rowcolor{Gray}
Missouri & 0.1964 & 0.0793 & 0.9067 \\ 
Montana & 0.2222 & 0.0769 & 0.9080 \\ 
Nebraska & 0.2710 & 0.0739 & 0.9088 \\ 
\rowcolor{Gray}
Nevada & 0.0259 & 	0.0881 & 0.9006 \\ 
New Hampshire & 0.0039 & - & -  \\ 
New Jersey & 0.1457 & 0.0842 & 0.9034 \\ 
\rowcolor{Gray}
New Mexico & 0.0930 & 0.0852 & 0.9039 \\ 
New York & 0.2354 & 0.0771 & 0.9075 \\ 
North Carolina & 0.0381 & 0.0874 & 0.9018 \\ 
\rowcolor{Gray}
North Dakota & 0.3962 & 0.0705 & 0.9192  \\ 
Ohio & 0.0854 & 0.0858 & 0.9043 \\ 
Oklahoma & 0.3861 & 0.0732 & 0.9210 \\ 
\rowcolor{Gray}
Oregon & 0.1231 & 0.0830 & 0.9050\\ 
Pennsylvania & 0.0075 & 0.0896 & 0.9006 \\ 
Rhode Island & 0.1662 & 0.0811 & 0.9050 \\ 
\rowcolor{Gray}
South Carolina & 0.1492 & 0.0813 & 0.9065 \\ 
South Dakota & 0.3194 & 	0.0725 & 0.9097 \\ 
Tennessee & 0.2725 & 0.0752 & 0.9091 \\ 
\rowcolor{Gray}
Texas & 0.0943 & 	0.0861 & 0.9021 \\ 
Utah & 0.2477 & 0.0774 & 0.9084 \\ 
Vermont & 0.3037 & 0.0788 & 0.9076 \\ 
\rowcolor{Gray}
Virginia & 0.0565 & 0.0879 & 0.901 \\ 
Washington & 0.1757 & 0.0812 & 0.9079\\ 
West Virginia & 0.4432 & 0.0645 & 0.9126 \\ 
\rowcolor{Gray}
Wisconsin & 0.0082 & 0.089 & 0.9006  \\ 
Wyoming & 0.5141 & 0.0712 & 0.9039 \\ \hline 
\end{tabular}
\caption{\Minerva Simulation Results for First-Round Sizes for Statewide 2016 US Presidential Contests; stopping probability of $0.9$ and $\alpha=0.1$.}
\label{tab:sim}
\end{table}

Figures \ref{fig:sim_sprob}, \ref{fig:sim_risk} and \ref{fig:sim_ratio}  present the stopping probability, the risk and the ratio of stopping probability to risk as a function of margin for all states except DC, which has a very large margin. Larger margins have very small round sizes, and the difference of a few ballots makes greater impact. This explains the points at margins of 0.5141 (Wyoming) and 0.4432 (West Virginia). 

\begin{figure}[h!]
\begin{centering}
 \includegraphics[width=\columnwidth]{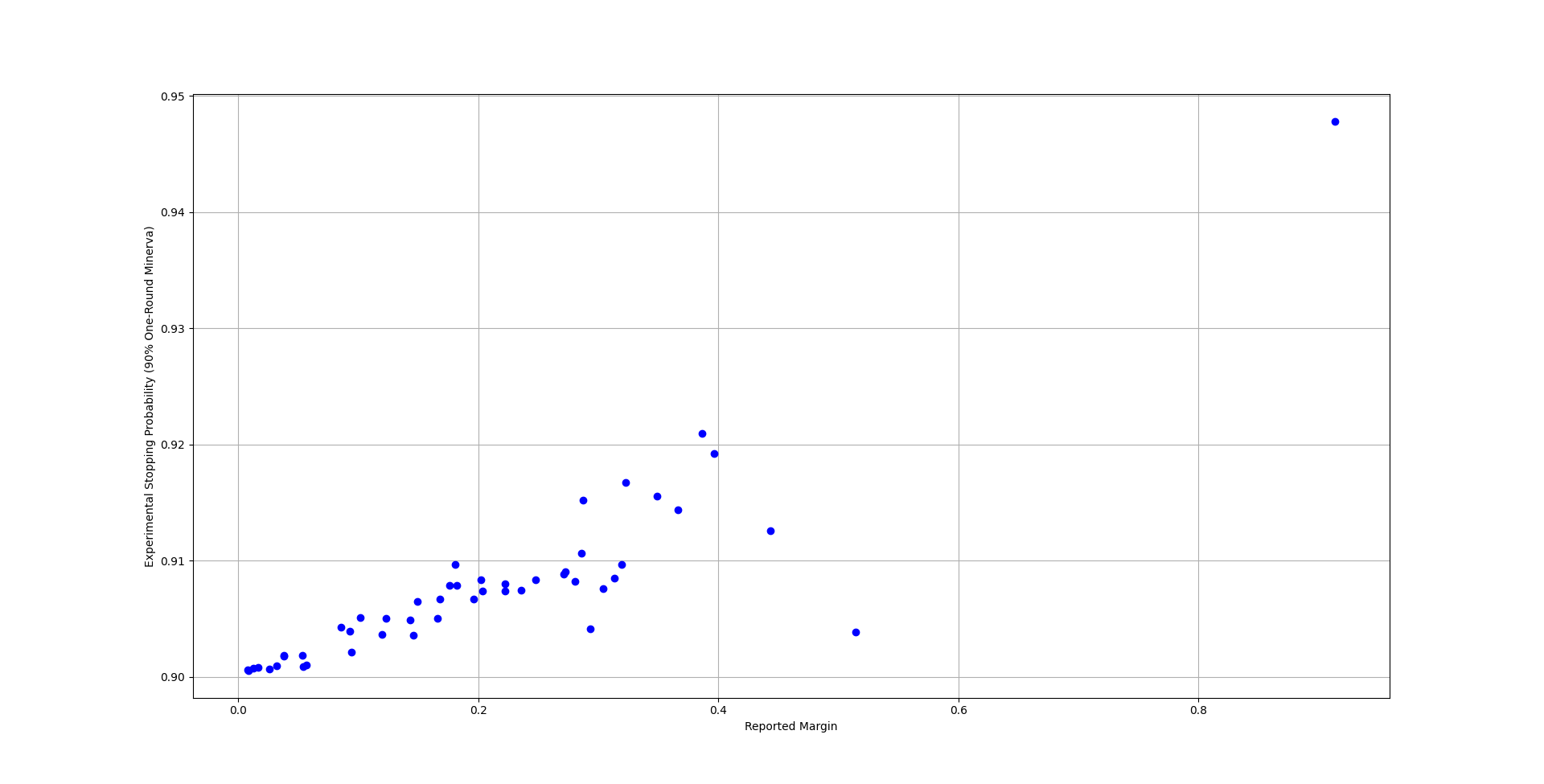}
 \caption{Simulation Results: Stopping Probability for Predicted \Athena First Rounds for the 2016 US Presidential contest, Table \ref{tab:distinct}. }
 \label{fig:sim_sprob}
 \end{centering}
\end{figure}

\begin{figure}[h!]
\begin{centering}
 \includegraphics[width=\columnwidth]{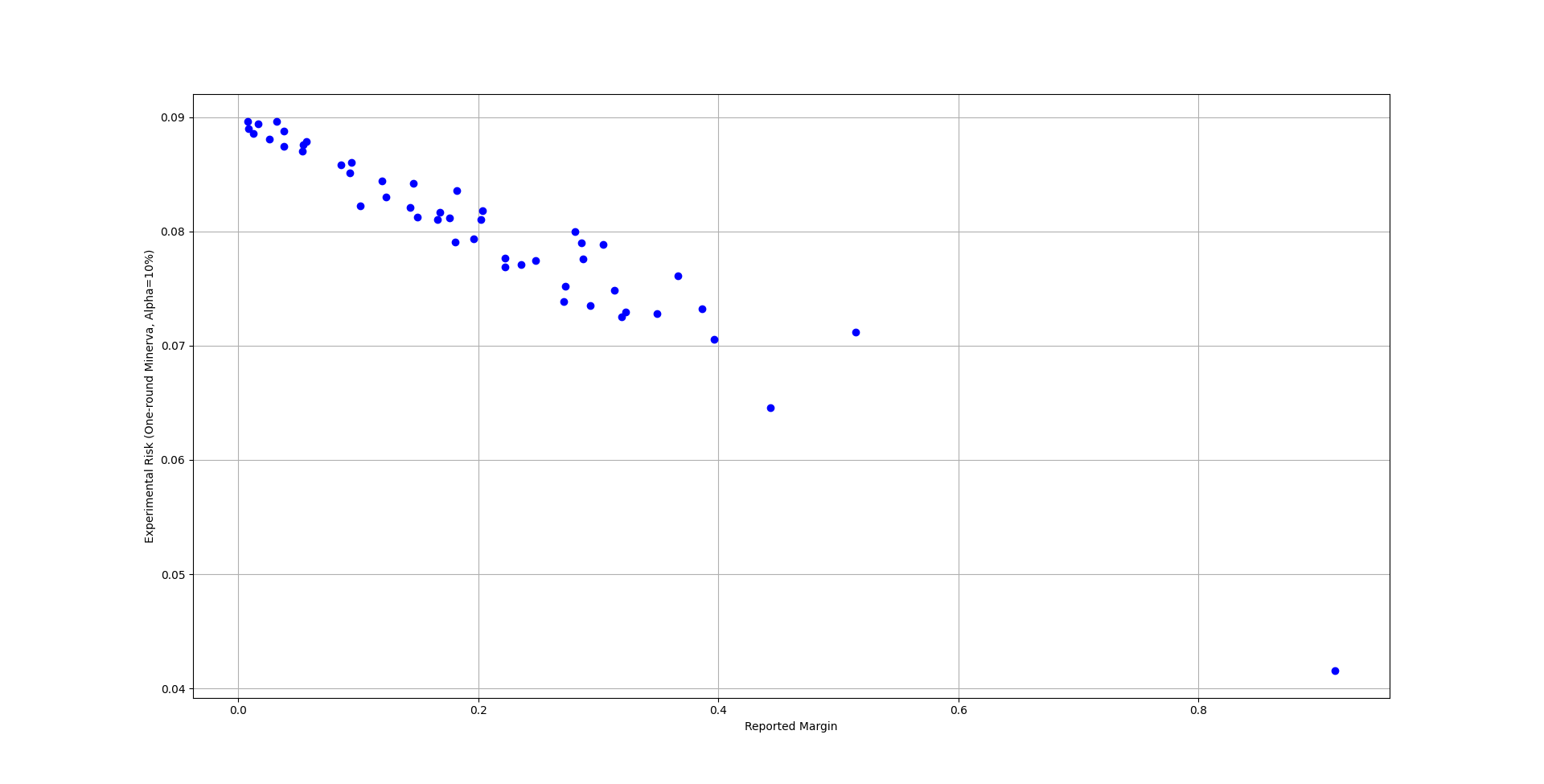}
 \caption{Simulation Results: Risk for Predicted \Athena First Rounds for the 2016 US Presidential contest, Table \ref{tab:distinct}. }
 \label{fig:sim_risk}
 \end{centering}
\end{figure}

\begin{figure}[h!]
\begin{centering}
 \includegraphics[width=\columnwidth]{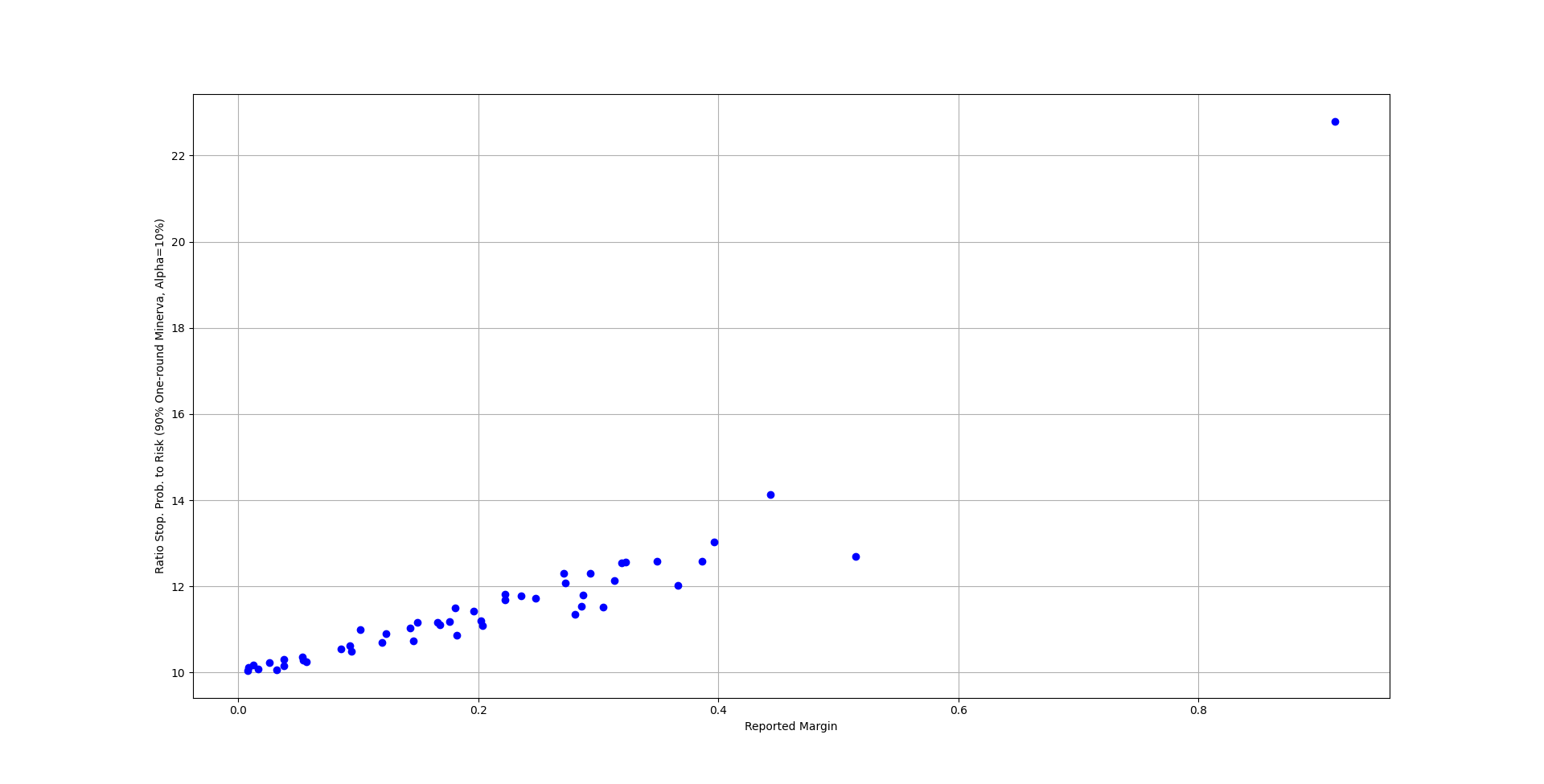}
 \caption{Simulation Results: Ratio of Stopping Probability to Risk for Predicted \Athena First Rounds for the 2016 US Presidential contest, Table \ref{tab:distinct}. }
 \label{fig:sim_ratio}
 \end{centering}
\end{figure}

\section{Conclusion}
\label{sec:conc}
We describe inefficiencies with the use of audits developed for ballot-by-ballot decisions in round-to-round procedures, such as are in use in real audits today. We propose new audits, \Minerva and \Athena, which we prove are risk-limiting and at least as efficient as audits that apply the ballot-by-ballot decision rules at the end of the round. 

We describe an approach to computing stopping probabilities and risks of audits with stopping conditions that are monotone increasing with the number of ballots for the winner in the sample. We demonstrate its accuracy in reproducing the empirically-obtained percentile values from \cite[Table 1]{bravo}, and find that the average fractional discrepancy is $0.13\%$. 

We predict first round sizes (for 90\% stopping probability) for all states in the US Presidential election of 2016 for {\em end-of-round} \Bravo and \Athena ($\delta=1$). We find that our proposed audits require half the ballots across all margins. We similarly compare first round sizes to {\em ordered-ballot-draw} \Bravo as well, finding $15-29\%$ improvements, with the larger improvements corresponding to smaller margins. We thus see that the additional effort of retaining information on ballot order, required by {\em selection-ordered-ballots} \Bravo, is not beneficial as the \Athena class of audits does not require it. 

We hope to present a third audit of the \Athena class, \Metis, which is more efficient for multiple-round audits, in a future draft of this manuscript. 

Large first-round sizes for polling audits of low margin contests should not deter election officials from performing audits. Other options exist besides those reported in Section \ref{sec:results}, which presents results for ballot polling audits only. Ballot comparison audits are far more efficient in terms of number of ballots needed for the audit; if cast vote records (CVRs) which can be efficiently matched with the corresponding paper ballot are easily available, or their creation requires less effort than the random sampling of a large number of ballots, they should be considered, especially for low margin contests. It might also be possible to perform a combination of ballot polling audits and ballot comparison audits---such as described by Ottoboni {\em et al.} in the paper on {\em SUITE} \cite{suite}---to reduce effort. 

We provide open-source software for computing probability distributions and for the \Minerva and \Athena audits, hoping it can help developers of election auditing software. We also hope our work can help election officials planning audits. 
\bibliography{audits}
\appendix
\section{Proofs}
\subsection{Preliminaries}
\label{sec:prelims}
Before we prove the Theorems, we need the following general results from basic algebra. 
\begin{lemma}
\label{thm:tail_increasing}
Given a monotone increasing sequence: $\frac{a_1}{b_1}, \frac{a_2}{b_2}, \ldots, \frac{a_n}{b_n}$, for $a_i, b_i > 0$, the sequence: 
\[ z_i = \frac{\sum_{j=i}^n a_j}{\sum_{j=i}^n b_j} \]
is also monotone increasing. 
\end{lemma}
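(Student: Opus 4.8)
The plan is to reduce the claim to its single-step form: it suffices to show $z_i \le z_{i+1}$ for each $i$, since monotonicity of the whole sequence then follows by transitivity. Abbreviating $A = \sum_{j=i+1}^n a_j$ and $B = \sum_{j=i+1}^n b_j$, so that $z_{i+1} = A/B$ while $z_i = (a_i + A)/(b_i + B)$, I would clear the (strictly positive) denominators to rewrite $z_i \le z_{i+1}$ as the equivalent inequality $a_i B \le b_i A$, i.e.\
\[ \frac{a_i}{b_i} \le \frac{A}{B} = z_{i+1}. \]
Thus the entire lemma collapses to a single fact: the tail mediant $z_{i+1}$ is bounded below by the one ratio $a_i/b_i$.

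To establish that bound I would use the weighted-average (mediant) observation. Since every $b_j > 0$, the tail ratio can be written as a convex combination of the individual ratios,
\[ z_{i+1} = \frac{\sum_{j=i+1}^n a_j}{\sum_{j=i+1}^n b_j} = \sum_{j=i+1}^n \frac{b_j}{B}\cdot \frac{a_j}{b_j}, \]
with nonnegative weights $b_j/B$ summing to $1$. Because the input sequence is monotone increasing, each ratio with index $j \ge i+1$ satisfies $a_j/b_j \ge a_{i+1}/b_{i+1} \ge a_i/b_i$; a convex combination of numbers each at least $a_i/b_i$ is itself at least $a_i/b_i$, which gives $z_{i+1} \ge a_i/b_i$ and closes the argument. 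An equivalent packaging—perhaps cleaner for a fully written proof—first isolates the two-term mediant inequality as a micro-lemma (if $a/b \le c/d$ with all four quantities positive, then $a/b \le (a+c)/(b+d) \le c/d$) and then runs a short induction on the number of tail terms to conclude $a_i/b_i \le z_{i+1}$.

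I do not expect any genuine analytic obstacle here; the content is elementary. The only point deserving care is bookkeeping the positivity hypotheses so that every cross-multiplication preserves the direction of the inequality and every denominator that appears—each $b_j$, and the partial sums $B$ and $b_i + B$—is genuinely nonzero, all of which is guaranteed by the standing assumption $a_i, b_i > 0$. I would state the cross-multiplication step explicitly so that the reduction in the first paragraph reads as an equivalence rather than a one-directional implication.
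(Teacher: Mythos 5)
Your proposal is correct and follows essentially the same route as the paper: both arguments hinge on viewing the tail ratio as a convex combination of the individual ratios $a_j/b_j$ (with weights $b_j/\sum b_j$) and reducing $z_i \leq z_{i+1}$ to the single inequality $a_i/b_i \leq z_{i+1}$. The paper packages that last step as $z_i = y_i\,(a_i/b_i) + (1-y_i)\,z_{i+1}$ rather than by cross-multiplication, but the content is identical.
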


\begin{proof}
Note that $z_i$ is a weighted average of the values of $\frac{a_j}{b_j}$ for $j \geq i$: 
\[ z_i = \sum_{j=i}^n y_j \frac{a_j}{b_j} \]
\[ y_j = \frac{b_j}{\sum _{j=i}^n b_j} > 0 \]
\[ \sum _{j=i}^n y_j = 1 \Rightarrow y_j \leq 1 \]
\[ y_j = 1 \Leftrightarrow i=j=n \]
Observe that, because $\frac{a_i}{b_i}$ is monotone increasing, 
\[ z_i \geq \frac{a_i}{b_i} \]
with equality if and only if $i=n$. 
Suppose $i < n$. Then: 
\[ z_{i+1} \geq \frac{a_{i+1}}{b_{i+1}} > \frac{a_i}{b_i} \]
\[ z_i = y_i \frac{a_i}{b_i} + (1-y_i) z_{i+1} < z_{i+1} \] 
And $z_i$ is also monotone increasing. 
\end{proof}

\begin{lemma}
\label{thm:kmin_exists}
Given a strictly monotone increasing sequence: $x_1, x_2, \ldots x_n $ and some constant $A$, 
\[ \exists i_{min}~such~that~x_i \geq A \Leftrightarrow i \geq i_{min} \]
\end{lemma}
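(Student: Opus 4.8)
The plan is to define $i_{min}$ explicitly as the smallest index whose value reaches the threshold, and then verify the claimed biconditional using monotonicity. Concretely, I would first introduce the set $S = \{ i \in \{1, \ldots, n\} : x_i \geq A \}$, which is a (possibly empty) finite set of positive integers, and dispose of the degenerate case before handling the generic one.

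I would split into two cases according to whether $S$ is empty. If $S = \emptyset$, then every $x_i < A$, and setting $i_{min} = n+1$ makes the condition $i \geq i_{min}$ false for every admissible index $i \leq n$; the biconditional $x_i \geq A \Leftrightarrow i \geq i_{min}$ then holds because both sides are false for all such $i$. If instead $S \neq \emptyset$, then as a nonempty finite set of positive integers it has a least element by well-ordering, and I set $i_{min} = \min S$.

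The remaining work is to verify $x_i \geq A \Leftrightarrow i \geq i_{min}$ in the nonempty case. For the forward direction, $x_i \geq A$ means $i \in S$, hence $i \geq \min S = i_{min}$ immediately. For the reverse direction, if $i \geq i_{min}$ then monotonicity of the sequence gives $x_i \geq x_{i_{min}}$, while $x_{i_{min}} \geq A$ holds because $i_{min} \in S$; chaining these two inequalities yields $x_i \geq A$.

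There is no genuine obstacle here: the statement is elementary and follows directly from the well-ordering of the indices together with the monotonicity of the sequence. The only point requiring minor care is the boundary behavior, namely the empty case $S = \emptyset$ (and, symmetrically, the case $i_{min} = 1$ when all terms already exceed $A$), which the convention $i_{min} = n+1$ handles cleanly. I would also remark that only monotonicity, not strict monotonicity, is actually used, so the hypothesis is slightly stronger than needed.
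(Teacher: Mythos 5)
Your proof is correct and takes essentially the same approach as the paper: define $i_{min}$ as the least index attaining the threshold (with the convention $i_{min}=n+1$ when none does) and use monotonicity to propagate $x_i \geq A$ to all larger indices. You are slightly more explicit about verifying both directions of the biconditional, and your observation that only (non-strict) monotonicity is needed is accurate, but the argument is the same.
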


\begin{proof}
Let $i_{min}$ be the first index for which the sequence exceeds or equals $A$. That is, let $i_{min}$ be such that 
\[ x_{i_{min}} \geq A~~ x_j < A~~1 \leq j < i_{min}\]
Because the sequence is monotone increasing, 
\[ x_i > x_{i_{min}} \geq A ~~\forall i > i_{min}\]
If no elements in the sequence exceed or equal $A$, let $i_{min} = n+1$. 
\end{proof}

\begin{lemma}
\label{thm:sigma_increasing}
Given $p, n$, with $p > \frac{1}{2}$, $\sigma(k, p, n)$ is strictly monotone increasing as a function of $k$.  
\end{lemma}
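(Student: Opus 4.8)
The plan is to prove strict monotonicity by examining the \emph{ratio} of consecutive terms rather than their difference, since $\sigma$ is a product of powers and ratios behave far more cleanly than differences for such expressions. First I would fix $p$ and $n$ with $p > \frac{1}{2}$ and consider an arbitrary index $k$ with $0 \le k \le n-1$, so that both $\sigma(k, p, n)$ and $\sigma(k+1, p, n)$ are defined. Since $\sigma(k, p, n) = \frac{p^k (1-p)^{n-k}}{(1/2)^n} > 0$ for every such $k$ (because $0 < p < 1$), forming the quotient of consecutive terms is legitimate and the direction of any resulting inequality will be preserved when I clear the positive denominator.

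The key computation is that the factor $(1/2)^n$ does not depend on $k$, so it cancels in the ratio, leaving
\[ \frac{\sigma(k+1, p, n)}{\sigma(k, p, n)} = \frac{p^{k+1}(1-p)^{n-k-1}}{p^k (1-p)^{n-k}} = \frac{p}{1-p}. \]
From $p > \frac{1}{2}$ it follows that $1 - p < \frac{1}{2} < p$, and hence $\frac{p}{1-p} > 1$. Multiplying both sides of this inequality by the positive quantity $\sigma(k, p, n)$ then yields $\sigma(k+1, p, n) > \sigma(k, p, n)$. Since $k$ was arbitrary in the range $0 \le k \le n-1$, the sequence of values $\sigma(k, p, n)$ is strictly increasing in $k$, which is exactly the claim.

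There is essentially no obstacle here: the lemma is a one-line consequence of the observation that the ratio of successive terms is the constant $\frac{p}{1-p}$, which exceeds $1$ precisely because $p > \frac{1}{2}$. The only points requiring a modicum of care are verifying positivity of $\sigma$ (so the inequality direction survives clearing the denominator) and restricting to $k \le n-1$ so that $\sigma(k+1, p, n)$ is a genuine term of the sequence. This lemma is the elementary ingredient that, via Lemma \ref{thm:kmin_exists}, makes $k_{min}(\Bravo, n, p, \alpha)$ well-defined, and, via Lemma \ref{thm:tail_increasing}, drives the monotonicity of the \Minerva and \Athena tail ratios.
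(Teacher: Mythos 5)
Your proof is correct and is essentially the paper's own argument: both establish that consecutive terms differ by the constant factor $\frac{p}{1-p} > 1$ (the paper writes this as $\sigma(k,p,n) = \frac{1-p}{p}\,\sigma(k+1,p,n)$), with positivity of $\sigma$ justifying the conclusion. No substantive difference.
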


\begin{proof}
\[ p > \frac{1}{2} \Rightarrow p > 1-p \Rightarrow \frac{1-p}{p} < 1 \]
Let $0 \leq k < n$. Then: 
\[ \sigma(k, p, n) = \frac{p^k(1-p)^{n-k}}{(\frac{1}{2})^n} = \frac{1-p}{p} \cdot \frac{p^{k+1}(1-p)^{n-(k+1)}}{(\frac{1}{2})^n} = \frac{1-p}{p} \sigma(k+1, p, n) \]
Hence 
\[ \sigma(k, p, n) < \sigma(k+1, p, n) ~\forall~k~such~that~ 0 \leq k < n\]
\end{proof}

\subsection{Proofs of properties of the complementary cdf ratios}
\setcounter{theorem}{0}
\begin{theorem}
For the $(\alpha, p, (n_1, n_2, \ldots, n_j, \ldots) ) $-\Minerva test, if the round schedule is pre-determined (before the audit begins), the following are true for $j=1, 2, 3, \ldots$
\begin{enumerate}
\item 
\[\frac{s_j(k_j)}{r_j(k_j)} = \sigma(k_{j}, p, n_{j})\] when $r_j(k_j)$ and $s_j(k_j)$ are defined and non-zero. 
\item $\tau_{j}(k_{j}, p, (n_1, n_2, \ldots, n_j), \alpha )$ is strictly monotone increasing as a function of $k_{j}$. 
\item $\exists k_{min, j}(\Minerva,  (n_1, n_2, \ldots, n_j), p, \alpha)$ such that 
\[ {\mathcal{A}}(X_j) ~= Correct \Leftrightarrow K_j \geq k_{min, j}(\Minerva,  (n_1, n_2, \ldots, n_j), p, \alpha)\]
\end{enumerate}
\end{theorem}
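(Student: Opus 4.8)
The plan is to prove all three parts simultaneously by induction on the round index $j$, since part 1 (the per-round likelihood ratio) is the engine that forces parts 2 and 3 in every round. Throughout I would lean on the three elementary facts already in hand: that $\sigma(k,p,n)$ is strictly monotone increasing in $k$ (Lemma \ref{thm:sigma_increasing}), that the tail ratio of a strictly increasing sequence of ratios is itself strictly increasing (Lemma \ref{thm:tail_increasing}, whose proof actually delivers strictness), and that a strictly increasing sequence meets any threshold at a well-defined first index (Lemma \ref{thm:kmin_exists}).

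For the base case $j=1$ there is no prior round to condition on, so $s_1(k_1) = \binom{n_1}{k_1} p^{k_1}(1-p)^{n_1-k_1}$ and $r_1(k_1) = \binom{n_1}{k_1}(\tfrac12)^{n_1}$, whose ratio is exactly $\sigma(k_1,p,n_1)$; this is part 1. Part 2 follows because $\tau_1(k_1) = S_1(k_1)/R_1(k_1)$ is the tail ratio $\bigl(\sum_{k\geq k_1} s_1(k)\bigr)/\bigl(\sum_{k\geq k_1} r_1(k)\bigr)$ of the strictly increasing term-ratios $\sigma(k,p,n_1)$, so Lemma \ref{thm:tail_increasing} makes $\tau_1$ strictly increasing. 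Part 3 is then immediate from Lemma \ref{thm:kmin_exists} applied to $\tau_1$ against the constant $1/\alpha$, with $k_{min,1}$ the first index at which the \Minerva condition $\tau_1 \geq 1/\alpha$ holds.

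For the inductive step I would assume parts 1--3 in round $j$ and establish part 1 in round $j+1$; parts 2 and 3 then carry over verbatim from the base-case argument with $n_1$ replaced by $n_{j+1}$. The inductive part 1 lets me factor $s_j(k_j) = C(k_j)\,p^{k_j}(1-p)^{n_j-k_j}$ and $r_j(k_j) = C(k_j)\,(\tfrac12)^{n_j}$ for the common factor $C(k_j) := r_j(k_j)/(\tfrac12)^{n_j}$. By the inductive part 3, surviving round $j$ means $K_j \le k_{min,j}-1$, so the round-$(j+1)$ restricted mass is the convolution of this truncated distribution with the fresh binomial draw of $n_{j+1}-n_j$ ballots, as developed in Section \ref{sec:computing}. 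Performing the convolution, the hypothesis-dependent draw weights factor out as $p^{k_{j+1}}(1-p)^{n_{j+1}-k_{j+1}}$ under $H_a$ and $(\tfrac12)^{n_{j+1}}$ under $H_0$, each multiplied by the \emph{same} aggregate factor $D(k_{j+1}) = \sum_{k_j} C(k_j)\binom{n_{j+1}-n_j}{k_{j+1}-k_j}$ summed over $\max\{0,k_{j+1}-(n_{j+1}-n_j)\} \le k_j \le \min\{k_{min,j}-1, k_{j+1}\}$. Dividing yields $s_{j+1}(k_{j+1})/r_{j+1}(k_{j+1}) = \sigma(k_{j+1},p,n_{j+1})$, closing the induction.

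The crux, and the step I expect to demand the most care, is verifying that $D(k_{j+1})$ is identical under the two hypotheses, so that it cancels in the ratio. This is precisely where the pre-determined round schedule enters: because the sizes $n_i$ and the induced thresholds $k_{min,i}$ are fixed before any ballot is inspected, the truncation set and the draw size in the convolution are the same under $H_a$ and $H_0$, and the fresh draw contributes the identical combinatorial weight $\binom{n_{j+1}-n_j}{k_{j+1}-k_j}$ irrespective of hypothesis. I would therefore write out the convolution recursion explicitly, confirm that conditioning on survival restricts only $K_j$ while leaving the new ballots an independent binomial under sampling with replacement, and check that the factors $C(k_j)$ and $D(k_{j+1})$ cancel exactly; an adaptively chosen schedule would couple the truncation set to the realized sample path and break this cancellation, which is why the hypothesis on the schedule cannot be dropped.
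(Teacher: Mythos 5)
Your proposal is correct and follows essentially the same route as the paper's proof: induction on the round index, with the base case read off the binomial distributions, the inductive step factoring $s_j$ and $r_j$ through a common hypothesis-independent coefficient, truncating at $k_{min,j}$, convolving with the fresh binomial draw, and observing that the combinatorial aggregate (your $D$, the paper's $B$) cancels in the ratio; parts 2 and 3 then follow from the same three elementary lemmas. Your explicit remark on where the pre-determined schedule is used matches the paper's intent, so there is nothing further to add.
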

\begin{proof}
We show this by induction. 

Consider $j=1$. 
\begin{enumerate}
\item \[ \frac{s_1(k_1)}{r_1(k_1)} = \frac{Pr[K_1 = k_{1} \mid H_a, n_1]}{Pr[K_1 = k_1 \mid tie, n_{1}]} = \sigma(k_1, p, n_1)\]
\item \[ \tau _{1}(k_1, p, n_1) =  \frac{Pr[K_1 \geq k_{1} \mid H_a, n_1]}{Pr[K_1 \geq k_{1} \mid tie, n_1]} = \frac{S_1(k_1)}{R_1(k_1)} = \frac{\sum_{k =k_1}^{k_{max,1}} s_1({k})}{\sum _{k_{max,1}}^n r_1(k)} \]
where $k_{max,j}$ is the largest possible value for $k_j$. Note that $k_{max,1}=n_1$. 

is a weighted average of $\sigma(k, p, n_1)$, and, by Lemmas \ref{thm:tail_increasing} and \ref{thm:sigma_increasing}, is strictly monotone increasing as a function of $k_1$. 
\item From Lemma \ref{thm:kmin_exists}, $\exists k_{min, 1}(\Minerva,  (n_1), p, \alpha)$ such that 
\[ \tau _{1}(k_1, p, n_1) \geq \frac{1}{\alpha} \Leftrightarrow k_1 \geq k_{min, 1}(\Minerva,  (n_1), p, \alpha) \]
which is the Minerva stopping condition. 
\end{enumerate}
Thus the theorem is true for $j=1$. 

Suppose the theorem is true for $j=m$. We will show it is true for $j=m+1$. 

From property (3) of this theorem for $j=m$, we observe that, after the stopping decision is made and before the next round is drawn, the number of winner ballots in the sample is strictly smaller than $k_{min, m}(\Minerva,  (n_1, n_2, \ldots, n_m), p, \alpha)$. The distribution on the winner votes may be modeled as $s_m^*(k_m)$ and $r_m^*(k_m)$ where: 
\begin{equation*}
s_m^*(k_m) = \left\{ \begin{array}{ll} s_m(k_m) & k < k_{min, m}(\Minerva,  (n_1, n_2, \ldots, n_m), p, \alpha) \\
0 & else\\
\end{array}
\right .
\end{equation*}
and
\begin{equation*}
r_m^*(k_m) = \left\{ \begin{array}{ll} r_m(k_m) & k < k_{min, m}(\Minerva,  (n_1, n_2, \ldots, n_m), p, \alpha) \\
0 & else\\
\end{array}
\right .
\end{equation*}
When we draw the next round of ballots with replacement, the resulting distributions on the winner ballots are convolutions:
\[ s_{m+1}(k_{m+1}) = s_m^*(k_m) \circledast binomial(k_{new,m+1}, p, n_{m+1}-n_m) \]
and
\[ r_{m+1}(k_{m+1}) = r_m^*(k_m) \circledast binomial(k_{new,m+1}, 0.5, n_{m+1}-n_m) \]
where $\circledast$ represents the convolution operator and $binomial(k, p, n)$ the probability of drawing $k$ ballots for the winner in a sample of size $n$ from a distribution with fractional tally $p$ for the winner. 
Using property (1) of this theorem for $j=m$, we see that 
\[ s_m^*(k_m) = A(k_m) p^{k_m}(1-p)^{n_m-k_m}\]
and
\[r_m^*(k_m) = A(k_m) (\frac{1}{2})^{n_m} \]
for some $A$, a function of $k$, current and previous round sizes, $p$ and $\alpha$. 

Some book keeping demonstrates that 
\[ s_{m+1}(k_{m+1}) = B(k_{m+1}) p^{k_{m+1}}(1-p)^{n_{m+1}-k_{m+1}}\]
where 
\[ B(k_{m+1}) =  A(k_m) \circledast { {n_{m+1}-n_m} \choose k_{new, m+1} } \]
And 
\[ r_{m+1}(k_{m+1}) = B(k_{m+1}) (\frac{1}{2})^{n_{m+1}}\]
which proves property (1) for $j=m+1$. Properties (2) and (3) follow for $j=m+1$ by application of Lemmas \ref{thm:tail_increasing}-\ref{thm:sigma_increasing}. 

Thus the theorem is true for all $j \geq 1$. 

\end{proof}

\begin{theorem}
For the $ (\alpha, \delta, p, (n_1, n_2, \ldots, n_j, \ldots) ) $-\Athena test, if the round schedule is pre-determined (before the audit begins), the following are true for $j=1, 2, 3, \ldots$:
\begin{enumerate}
\item 
\[\frac{s_j(k_j)}{r_j(k_j)} = \sigma(k_{j}, p, n_{j})\] when $r_j(k_j)$ and $s_j(k_j)$ are defined and non-zero. 
\item $\omega_{j}(k_{j}, p, (n_1, n_2, \ldots, n_j), \alpha, \delta)$ is strictly monotone increasing. 
\item $\exists k_{min, j}(\Athena,  (n_1, n_2, \ldots, n_j), p, \alpha, \delta)$ such that 
\[ {\mathcal{A}}(X_j) ~= Correct \Leftrightarrow k_j \geq k_{min, j}(\Athena,  (n_1, n_2, \ldots, n_j), p, \alpha)\]
\end{enumerate}
\end{theorem}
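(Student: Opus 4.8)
The plan is to prove the three claims simultaneously by induction on $j$, reusing almost verbatim the machinery of the preceding (Minerva) theorem and isolating the single new ingredient---the fact that \Athena stops on a \emph{conjunction} of two conditions rather than on a single ratio test---in the argument for property (3). The engine of the whole induction is property (1): once we know that the within-round likelihood ratio is exactly $\sigma(k_j,p,n_j)$, properties (2) and (3) follow purely from the algebraic Lemmas \ref{thm:tail_increasing}--\ref{thm:sigma_increasing}, and property (1) is what must be propagated across rounds.

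For the base case $j=1$, property (1) is immediate because $s_1(k_1)/r_1(k_1)$ is a ratio of binomial probabilities whose common factor $\binom{n_1}{k_1}$ cancels, leaving $\sigma(k_1,p,n_1)$. For property (2), the first \Athena ratio $\omega_1(k_1,p,n_1)=S_1(k_1)/R_1(k_1)$ is a tail ratio of the sequence $s_1(k)/r_1(k)=\sigma(k,p,n_1)$, which is strictly increasing in $k$ by Lemma \ref{thm:sigma_increasing}; hence by Lemma \ref{thm:tail_increasing} $\omega_1$ is strictly increasing. Property (3) is where \Athena departs from \Minerva: the stopping condition is $\omega_1(k_1,p,n_1)\geq\frac{1}{\alpha}$ \emph{and} $\sigma(k_1,p,n_1)\geq\frac{1}{\delta}$. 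Applying Lemma \ref{thm:kmin_exists} separately to the strictly increasing sequence $\omega_1$ (with constant $\frac{1}{\alpha}$) and to the strictly increasing sequence $\sigma$ (with constant $\frac{1}{\delta}$, using Lemma \ref{thm:sigma_increasing}) yields thresholds $k^{(\omega)}$ and $k^{(\sigma)}$. The conjunction holds iff $k_1\geq\max\{k^{(\omega)},k^{(\sigma)}\}$, which I take to be $k_{min,1}$.

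For the induction step, assume the three properties hold at $j=m$. Property (3) at $j=m$ tells us surviving samples satisfy $k_m<k_{min,m}$, so the post-decision distributions are the truncations $s_m^*,r_m^*$ that zero the \emph{same} index set in both numerator and denominator; this is the only place the conjunction structure enters, and it enters harmlessly because a conjunction of threshold tests is still a threshold test. By property (1) at $j=m$ these truncations share a common factor, $s_m^*(k_m)=A(k_m)\,p^{k_m}(1-p)^{n_m-k_m}$ and $r_m^*(k_m)=A(k_m)\,(\tfrac{1}{2})^{n_m}$. Convolving each with the appropriate binomial for the $n_{m+1}-n_m$ new draws and collecting terms yields $s_{m+1}(k_{m+1})=B(k_{m+1})\,p^{k_{m+1}}(1-p)^{n_{m+1}-k_{m+1}}$ and $r_{m+1}(k_{m+1})=B(k_{m+1})\,(\tfrac{1}{2})^{n_{m+1}}$ for a common $B$, establishing property (1) at $m+1$; properties (2) and (3) then follow exactly as in the base case.

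The main obstacle---and the only genuine novelty relative to the \Minerva argument---is property (3) under a two-part stopping rule. The key observation is that the intersection of two conditions each of the form $k\geq\text{const}$ is again a condition of that form, with threshold the maximum of the two; since both $\omega_j$ and $\sigma$ are strictly increasing, Lemma \ref{thm:kmin_exists} applies to each and the larger threshold serves as $k_{min,j}$. Everything else is mechanical: in particular, the propagation of the common-factor structure through convolution (the crux of property (1)) is completely insensitive to \emph{which} monotone comparison test was used to truncate the previous round, so the same induction carries through unchanged for $\delta$ arbitrary.
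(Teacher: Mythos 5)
Your proposal is correct and follows essentially the same route as the paper: the paper's own proof simply says it ``proceeds exactly as that for Theorem 1,'' with the only new point being that the conjunction of the two monotone threshold tests yields $k_{min,j}(\Athena, \cdot) = \max\bigl(k_{min,j}(\omega,\cdot),\, k_{min}(\Bravo,\cdot)\bigr)$, which is exactly your key observation. Your write-up is in fact more explicit than the paper's about why the conjunction structure does not disturb the common-factor propagation through the convolution, but the underlying argument is identical.
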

\begin{proof}
The proof proceeds exactly as that for Theorem \ref{thm:sigma_minerva}, except that there are two stopping conditions, which may be represented as: 
\[ \omega_{j}(k_{j}, p, (n_1, n_2, \ldots, n_j) ) \geq \frac{1}{\alpha} \Leftrightarrow k_j \geq k_{min, j}(\omega,  (n_1, n_2, \ldots, n_j), p, \alpha) \]
and
\[ \sigma(k_{j}, p, n_j ) \geq \frac{1}{\delta} \Leftrightarrow k_j \geq k_{min}(\Bravo,  n, p, \alpha) \]
Hence
\[ k_{min, j}(\Athena,  (n_1, n_2, \ldots, n_j), p, \alpha) = max(k_{min, j}(\omega,  (n_1, n_2, \ldots, n_j), p, \alpha), k_{min}(\Bravo,  n, p, \alpha) )\]
\end{proof}

\subsection{Proof of risk-limiting property of \Minerva}

\begin{theorem}
$ (\alpha, H_a, (n_1, n_2, \ldots, n_j,  n_{j+1}, n_{j+2}, \ldots) ) $-\Minerva is an $\alpha$-{\em RLA} if the round schedule is pre-determined (before the audit begins). 
\end{theorem}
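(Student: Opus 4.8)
The plan is to decompose the total risk of the audit into a sum of per-round risk contributions, bound each contribution using the Minerva stopping condition, and then sum. Since the audit outputs \textit{Correct} in at most one round, the event that Minerva ever stops under $H_0$ partitions over the round in which it first stops; hence the total risk $P_M = Pr[\mathcal{A}(X) = \textit{Correct} \mid H_0]$ equals $\sum_{j \geq 1} R_j$, where $R_j$ is the risk of round $j$ from Definition \ref{def:risk_j}. Analogously, $\sum_{j \geq 1} S_j = S$, the overall stopping probability under $H_a$, because the same events partition the stopping event under $H_a$.

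First I would invoke Theorem \ref{thm:sigma_minerva}, whose property (3) guarantees that the round-$j$ stopping condition is equivalent to $K_j \geq k_{min,j}$, where $k_{min,j} = k_{min,j}(\Minerva, (n_1, \ldots, n_j), p, \alpha)$ is well-defined precisely because the round schedule is pre-determined. With the shorthand notation this lets me write $R_j = R_j(k_{min,j})$ and $S_j = S_j(k_{min,j})$, and it identifies the Minerva ratio as $\tau_j(k_{min,j}) = S_j(k_{min,j})/R_j(k_{min,j})$.

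Next I would use the defining property of $k_{min,j}$ from Lemma \ref{thm:kmin_exists}: it is the smallest index at which the (strictly) monotone increasing ratio $\tau_j$ meets or exceeds $1/\alpha$, so $\tau_j(k_{min,j}) \geq 1/\alpha$ holds (and if no such index exists, $k_{min,j} = n_j + 1$, forcing $R_j = S_j = 0$). Rearranging $S_j(k_{min,j})/R_j(k_{min,j}) \geq 1/\alpha$ gives the per-round bound $R_j \leq \alpha S_j$, valid in both cases. Summing over $j$ then yields $P_M = \sum_j R_j \leq \alpha \sum_j S_j = \alpha S \leq \alpha$, where the final inequality is just $S \leq 1$ since $S$ is a probability. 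This establishes the $\alpha$-RLA property per Definition \ref{def:rla}.

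The main obstacle, and the step deserving the most care, is justifying the two partition identities $P_M = \sum_j R_j$ and $S = \sum_j S_j$, rather than the algebra that follows. This requires arguing that the events $\{\mathcal{A}(X_j) = \textit{Correct} \wedge \forall_{i < j}\, \mathcal{A}(X_i) \neq \textit{Correct}\}$ are mutually exclusive across $j$ and that their union is exactly the event that the audit ever stops, which in turn relies on the convention (from Definition \ref{def:bravo}) that once a prefix satisfies the stopping condition the audit halts, so that no sequence is double counted. A secondary point to check is convergence of the infinite sums, which is immediate since the nonnegative terms satisfy $\sum_j S_j \leq 1$; and the edge case $k_{min,j} = n_j + 1$ must be dispatched so that the bound $R_j \leq \alpha S_j$ holds vacuously there.
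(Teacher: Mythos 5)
Your proposal is correct and follows essentially the same route as the paper's proof: invoke Theorem \ref{thm:sigma_minerva} to express the round-$j$ stopping event as $K_j \geq k_{min,j}$, derive the per-round bound $R_j \leq \alpha S_j$ from the stopping condition, and sum over rounds using $\sum_j S_j = S \leq 1$. Your additional care with the partition identities and the vacuous case $k_{min,j} = n_j + 1$ is left implicit in the paper but does not change the argument.
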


\begin{proof}
From Definition \ref{def:risk_j} and Theorem \ref{thm:sigma_minerva}, we have 
\begin{equation*}
\begin{split}
R_j & = Pr[ K_j \geq  k_{min, j}(\Minerva,  (n_1, n_2, \ldots, n_j), p, \alpha) \mid H_0, n_{j}] \\
& \leq \alpha  \cdot Pr[ K_j \geq  k_{min, j}(\Minerva,  (n_1, n_2, \ldots, n_j), p, \alpha) \mid H_a, n_{j}] \\
& = \alpha \cdot S_j \\
\end{split}
\end{equation*}
because $k_{min, j}(\Minerva,  (n_1, n_2, \ldots, n_j), p, \alpha)$ satisfies the \Minerva stopping condition. 

Define the total stopping probability of the audit as follows:
\[ S = Pr[({\mathcal{A}}(X) ~= Correct)  \mid H_a] \]

Then, 
\begin{equation}
\label{eqn:stopping}
S = \sum _{j} S_j \leq 1
\end{equation}

The risk of the audit is defined as: 
\[ R = Pr[({\mathcal{A}}(X) ~= Correct)  \mid H_0] = \sum _{j} R_j \leq \alpha \cdot \sum_j S_j = \alpha \cdot S \leq \alpha \]
from Equation (\ref{eqn:stopping}). 
\end{proof}

\subsection{Properties of \B versions of \Minerva and \Athena}

\setcounter{theorem}{4}
\begin{theorem}
The \B $(\alpha, p)$-\Bravo audit stops for a sample of size $n_j$ with $k_j$ ballots for the winner, if and only if the $(\alpha, H_a, (1, 2, 3, \ldots, j, \ldots))$-\Minerva audit stops.  
\end{theorem}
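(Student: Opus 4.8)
The plan is to reduce the biconditional to a single statement about per-round thresholds. For the round schedule $(1,2,\ldots)$ I would prove by induction on $j$ that the \Minerva threshold $k_{min,j}(\Minerva,(1,\ldots,j),p,\alpha)$ coincides with the \B \Bravo threshold $k_{min}(\Bravo, j, p, \alpha)$ on every reachable tally, so that the two audits discard exactly the same sequences after each round and stop on precisely the same samples. One inclusion is free: Theorem \ref{thm:efficiency} already gives that any sample satisfying \B \Bravo satisfies \Minerva, i.e. $k_{min,j}(\Minerva)\le k_{min}(\Bravo,j)$. The genuine content is the reverse inequality, namely that \Minerva never stops strictly before \B \Bravo, and the induction is the natural vehicle because the surviving distributions $s_j,r_j$ are defined recursively.

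Three facts drive the step. First, because the round size is one, a sequence surviving round $j-1$ has at most $k_{min}(\Bravo,j-1)-1$ winner ballots (the thresholds agree there by the inductive hypothesis), so after the single round-$j$ draw the reachable values satisfy $k_j\le k_{max,j}:=k_{min}(\Bravo,j-1)$. Second, at the top of the support the tail $\tau_j(k_{max,j})$ collapses to the one-term ratio $s_j(k_{max,j})/r_j(k_{max,j})$, which by Theorem \ref{thm:sigma_minerva}(1) equals $\sigma(k_{max,j},p,j)$; thus at $k_{max,j}$ the \Minerva and \Bravo comparison values coincide exactly. Third, since $\sigma$ has slope $m\in(0,1)$, consecutive \Bravo thresholds obey $k_{min}(\Bravo,j)-k_{min}(\Bravo,j-1)\in\{0,1\}$. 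In the jump case $k_{min}(\Bravo,j)=k_{min}(\Bravo,j-1)+1$ we have $k_{max,j}=k_{min}(\Bravo,j)-1$, so $\tau_j(k_{max,j})=\sigma(k_{max,j},p,j)<\tfrac1\alpha$; monotonicity of $\tau_j$ (Theorem \ref{thm:sigma_minerva}(2)) forces $\tau_j<\tfrac1\alpha$ on the whole support, neither audit stops, and the surviving sets again agree.

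The remaining (main) obstacle is the flat case $k_{min}(\Bravo,j)=k_{min}(\Bravo,j-1)=:k^{*}$, where $k_{max,j}=k^{*}$ and \B \Bravo stops exactly at $k_j=k^{*}$. \Minerva stops there too, since $\tau_j(k^{*})=\sigma(k^{*},p,j)\ge\tfrac1\alpha$; the crux is to show \Minerva does \emph{not} stop at $k^{*}-1$, which is not automatic because $\tau_j(k^{*}-1)$ is a weighted average of $\sigma(k^{*}-1,p,j)<\tfrac1\alpha$ and $\sigma(k^{*},p,j)\ge\tfrac1\alpha$ and could a priori straddle the threshold. I would expand $\tau_j(k^{*}-1)=S_j(k^{*}-1)/R_j(k^{*}-1)$ through the single-ballot convolution from round $j-1$, using that the round-$(j-1)$ support ends at $k^{*}-1$ (so $s_{j-1}(k^{*})=0$). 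Writing $s_{j-1}(k)=\sigma(k,p,j-1)\,r_{j-1}(k)$ and invoking the elementary identity
\[ p\,\sigma(k-1,p,n)=(1-p)\,\sigma(k,p,n), \]
the numerator factors as $\sigma(k^{*}-1,p,j-1)\bigl[(1-p)\,r_{j-1}(k^{*}-2)+r_{j-1}(k^{*}-1)\bigr]$ while the denominator is $\tfrac12 r_{j-1}(k^{*}-2)+r_{j-1}(k^{*}-1)$. Since $p>\tfrac12$ gives $1-p<\tfrac12$, the bracket is at most the denominator, so
\[ \tau_j(k^{*}-1)\le \sigma(k^{*}-1,p,j-1)<\tfrac1\alpha, \]
the last step because $k^{*}-1<k_{min}(\Bravo,j-1)$. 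Monotonicity of $\tau_j$ then yields $\tau_j(k_j)<\tfrac1\alpha$ for all $k_j<k^{*}$, so $k_{min,j}(\Minerva)=k^{*}=k_{min}(\Bravo,j)$.

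I expect this inequality to be the only real difficulty: it is exactly where the biconditional could fail, and it is resolved by the identity $p\,\sigma(k-1)=(1-p)\,\sigma(k)$ together with $p>\tfrac12$, which makes the round-$j$ tail inherit the sub-threshold behaviour of the round-$(j-1)$ likelihood ratio. The base case $j=1$ is immediate, since for $n_1=1$ both stopping conditions reduce to $\sigma(k_1,p,1)\ge\tfrac1\alpha$, and the degenerate situations $r_{j-1}(k^{*}-2)=0$ or $r_{j-1}(k^{*}-1)=0$ are covered by the same display with one term dropped. Finally, because the per-round thresholds and hence the surviving sets coincide at every $j$, a run of \B \Bravo reaches and stops at a tally $(n_j,k_j)$ exactly when the $(\alpha,H_a,(1,2,\ldots))$-\Minerva audit does, which is the claimed equivalence.
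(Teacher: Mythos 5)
Your proposal is correct and follows essentially the same route as the paper's proof: both reduce the claim to showing that with unit round sizes the \Minerva ratio can only clear $\frac{1}{\alpha}$ at the top of the surviving support (where $\tau_j$ collapses to $\sigma$), and both establish that $\tau_j$ one step below the top is bounded by $\sigma(\cdot,p,j-1)<\frac{1}{\alpha}$ via the single-ballot convolution together with the identity $p\,\sigma(k-1,p,n)=(1-p)\,\sigma(k,p,n)$ and $1-p<\frac{1}{2}$. Your explicit jump/flat case split on $k_{min}(\Bravo,j)-k_{min}(\Bravo,j-1)\in\{0,1\}$ is a cosmetic reorganization of the same argument, not a different method.
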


 \begin{proof}
Consider the $j^{th}$ round of the \Minerva audit: the $j^{th}$ ballot draw. Suppose that, before the $j^{th}$ round is drawn, and after the stopping condition is tested for the $(j-1)^{th}$ round and the audit stopped if it is satisfied, $k$ is the largest value of winner ballots possible. It is strictly smaller than the corresponding  $k_{min, j-1}$, because the audit has stopped for all other values. Further, because at most one winner ballot will be drawn in the $j^{th}$ round, the largest possible number of winner ballots in the $j^{th}$ round is $k+1$. 
  
More formally, let the largest value of $k_{j-1}$ for which $s_{j-1}^*(k_{j-1}) \neq 0$ be $k$, where $s_j^*$ is as defined in the proof of Theorem \ref{thm:sigma_minerva}. Then 
\[ k < k_{min, j-1}(\Minerva,  (n_1, n_2, \ldots, n_{j-1}), p, \alpha)\]
 by the definition of  $k_{min, j-1}$, Theorem \ref{thm:sigma_minerva}.  Further, the largest value of $k_{j}$ for which $s_{j}(k_{j}) \neq 0$ is $k+1$. 
 
We now show that if the $j^{th}$ round stops at all, it will be for $k_j = k+1$ and no other values of $k_j$. 

We observe that the only way to obtain $k+1$ ballots in the $j^{th}$ round is if the existing number of winner ballots is $k$ and the new ballot drawn is for the winner. The probability is:
\[ s_{j}(k+1) = ps_{j-1}(k) \]
On the other hand, $k$ ballots arise in the $j^{th}$ round if the existing number is $k-1$ and a winner ballot is drawn, or the existing number is $k$ and the ballot drawn is not for the winner. 
\[ s_{j}(k) = (1-p) s_{j-1}(k) + ps_{j-1}(k-1)\]
Similarly: 
\[ r_{j}(k+1) = \frac{1}{2} r_{j-1}(k) \]
and 
\[ r_{j}(k) =  \frac{1}{2} r_{j-1}(k) +  \frac{1}{2} r_{j-1}(k-1)\]

If the condition is satisfied by values other than $k+1$, because $\tau$ is monotone increasing, it is satisfied by $k$: 
\[ \tau _{j} (k) = \frac{s_j(k+1) + s_j(k)}{r_j(k+1) + r_j(k) } = \frac{s_{j-1}({k}) + ps_{j-1}(k-1)}{r_{j-1}({k}) +  \frac{1}{2} r_{j-1}(k-1)} \geq \frac{1}{\alpha}\]

Thus $\tau_{j}(k)$  is a weighted average of $\sigma(k, p, j-1)$ and $\frac{p}{\frac{1}{2}}\sigma(k-1, p, j-1)$ and:
\[ \frac{p}{\frac{1}{2}}\sigma(k-1, p, j-1) =  \frac{(1-p)}{\frac{1}{2}}\sigma(k, p, j-1)  < \sigma(k, p, j-1) < \tau (k, p, j-1) < \frac{1}{\alpha} \]
as $k < k_{min, j-1}(\Minerva,  (1, 2, \ldots, j-1), p, \alpha)$. And hence, $\tau_{j}(k)$ does not pass the stopping condition. 

Thus, if $\mathcal{A}_M$ and $\mathcal{A}_B$ denote the \B \Minerva and \B \Bravo audits respectively, 
\[ \mathcal{A}_M(X_j) ~=~ Correct \Leftrightarrow \tau_j(k, p, j) \geq \frac{1}{\alpha} \Leftrightarrow \sigma(k, p, j) \geq \frac{1}{\alpha} \Leftrightarrow \mathcal{A}_B (X_j) ~=~ Correct \]

Samples that do satisfy the stopping condition have the same \Minerva and \Bravo p-values, which are otherwise not the same.  
 \end{proof}
 
\setcounter{corollary}{0}
\begin{corollary}
Given $\alpha, p, \delta$ such that $\delta \geq \alpha$, the \B $(\alpha, p)$-\Bravo audit stops for a sample of size $n_j$ with $k_j$ ballots for the winner, if and only if the $(\alpha, \delta, H_a, (1, 2, 3, \ldots, j, \ldots))$-\Athena audit stops.  
\end{corollary}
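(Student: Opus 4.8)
The plan is to deduce this directly from Theorem~\ref{thm:B2Minerva}, the equivalence of \B \Bravo and \B \Minerva, by observing that in the \B setting with $\delta \geq \alpha$ the extra \Athena condition is automatically implied whenever the \Minerva condition fires. Recall from Definition~\ref{def:athena} that the \Athena stopping rule is the conjunction of two tests: the complementary-cdf ratio test $\omega_{j} \geq \frac{1}{\alpha}$, which has the same functional form as the \Minerva test $\tau_{j} \geq \frac{1}{\alpha}$ of Definition~\ref{def:minerva}, and the additional likelihood-ratio test $\sigma(k_{j}, p, n_{j}) \geq \frac{1}{\delta}$. Thus \Athena stops in a round exactly when both tests pass, whereas \Minerva stops when the first passes alone.

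First I would argue by induction on the round index that, for the \B schedule $(1, 2, \ldots)$ and $\delta \geq \alpha$, the \Athena and \Minerva audits make identical decisions in every round. The induction hypothesis is that the two audits have stopped in exactly the same earlier rounds; this makes the conditioning event $\forall_{i<j}(\mathcal{A}(X_i) \neq Correct)$ identical for both, so the surviving distributions $s_{j}, r_{j}$ coincide and therefore $\omega_{j} = \tau_{j}$. In round $j$ the first \Athena test then agrees with the \Minerva test. When it passes, Theorem~\ref{thm:B2Minerva} tells us the sample also satisfies the \B \Bravo condition $\sigma(k_{j}, p, n_{j}) \geq \frac{1}{\alpha}$; since $\delta \geq \alpha$ gives $\frac{1}{\delta} \leq \frac{1}{\alpha} \leq \sigma(k_{j}, p, n_{j})$, the second \Athena test passes as well. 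Hence \Athena stops precisely when \Minerva does, which preserves the induction hypothesis. Combining this with Theorem~\ref{thm:B2Minerva} yields that \B \Athena stops if and only if \B \Bravo stops, in both directions.

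The only subtle point, and hence the step I would write out most carefully, is the coincidence of the conditioning histories used to form $\omega_{j}$ and $\tau_{j}$: the two ratios share a formula but are each taken over the sequences that survived \emph{their own} audit's earlier rounds, so one must confirm inductively that these surviving sets never diverge. This is exactly what the hypothesis $\delta \geq \alpha$ guarantees, by rendering the second \Athena test redundant relative to the first. I would also remark that the inequality $\frac{1}{\delta} \leq \frac{1}{\alpha}$ is essential: were $\delta < \alpha$, the likelihood-ratio test could fail on a sample for which \Bravo stops, and the stated equivalence would no longer hold. No new analytic estimates are needed beyond those already established in the proof of Theorem~\ref{thm:B2Minerva}.
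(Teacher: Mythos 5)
Your proof is correct and follows essentially the same route as the paper: both reduce the claim to Theorem~\ref{thm:B2Minerva} and observe that, since $\delta \geq \alpha$ implies $\frac{1}{\delta} \leq \frac{1}{\alpha} \leq \sigma(k_j,p,n_j)$ whenever the first condition fires, the second \Athena test is redundant. Your explicit induction showing that the \Athena and \Minerva conditioning histories never diverge (so that $\omega_j = \tau_j$) is a point the paper leaves implicit, but it is the same underlying argument.
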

 \begin{proof}
As in Theorem \ref{thm:B2Minerva}, the $j^{th}$ audit round stops only for the largest possible number of winner votes if it does at all. Thus, it stops if and only if it satisfies the $(\alpha,p)$-\Bravo stopping condition. Additionally, the second stopping condition for \Athena is also a \Bravo condition, and is always satisfied when the first one is satisfied because $\delta \geq \alpha$. 
\end{proof}

\subsection{Proof of efficiency property of \Minerva and \Athena}

\begin{theorem}
Given sample $X$ of size $n_j$ with $k_j$ samples for the winner, 
\[ \mathcal{A}_B(X) ~=~ Correct \Rightarrow \mathcal{A}_A(X) ~=~Correct \]
where $\mathcal{A}_B$ denotes the $(\alpha, p)$-\Bravo test and $\mathcal{A}_A$ the $(\alpha, p, (n_1, n_2, \ldots, n_j,  n_{j+1}, n_{j+2}, \ldots) ) $-\Minerva test or the \\ $(\alpha, \delta, p, (n_1, n_2, \ldots, n_j,  n_{j+1}, n_{j+2}, \ldots) ) $-\Athena test for $\delta \geq \alpha$ if the round schedule is pre-determined (before the audit begins). 
\end{theorem}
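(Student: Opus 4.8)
The plan is to reduce the claim to a single pointwise inequality between the \Bravo likelihood ratio and the \Minerva/\Athena complementary-cdf ratio, both evaluated at the observed $(k_j, n_j)$. Recall that \B \Bravo applied end-of-round stops on $X$ exactly when $\sigma(k_j, p, n_j) \geq \frac{1}{\alpha}$, whereas \Minerva stops when $\tau_j(k_j, p, (n_1,\ldots,n_j), \alpha) \geq \frac{1}{\alpha}$. So it suffices to establish $\tau_j(k_j) \geq \sigma(k_j, p, n_j)$ for every $k_j$ at which the \Bravo ratio is defined. First I would invoke property (1) of Theorem \ref{thm:sigma_minerva}, which identifies the pointwise likelihood ratio of the live (not-yet-stopped) distribution in round $j$ as $\frac{s_j(k)}{r_j(k)} = \sigma(k, p, n_j)$, together with Lemma \ref{thm:sigma_increasing}, which gives that $\sigma(\cdot, p, n_j)$ is strictly increasing in its first argument.

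Next I would write the \Minerva tail ratio as $\tau_j(k_j) = \frac{S_j(k_j)}{R_j(k_j)} = \frac{\sum_{k=k_j}^{k_{max,j}} s_j(k)}{\sum_{k=k_j}^{k_{max,j}} r_j(k)}$ and apply Lemma \ref{thm:tail_increasing} with $a_k = s_j(k)$ and $b_k = r_j(k)$. Since the pointwise ratios $a_k/b_k = \sigma(k, p, n_j)$ form a monotone increasing sequence, that lemma exhibits the tail ratio as a weighted average of them, hence at least the smallest term in the tail, namely $\sigma(k_j, p, n_j)$, with equality only when $k_j$ is the largest value of non-zero probability. Thus $\tau_j(k_j) \geq \sigma(k_j, p, n_j)$, so $\sigma(k_j, p, n_j) \geq \frac{1}{\alpha}$ forces $\tau_j(k_j) \geq \frac{1}{\alpha}$ and \Minerva stops. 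For \Athena the first condition, $\omega_j(k_j) \geq \frac{1}{\alpha}$, is handled by the identical weighted-average argument, now using property (1) of Theorem \ref{thm:sigma_athena} on \Athena's own live distribution; the second condition is $\sigma(k_j, p, n_j) \geq \frac{1}{\delta}$, which follows immediately from $\sigma(k_j, p, n_j) \geq \frac{1}{\alpha} \geq \frac{1}{\delta}$ whenever $\delta \geq \alpha$. Hence both \Athena conditions hold and \Athena stops.

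The main obstacle I anticipate is the bookkeeping around the conditioning on not having stopped in earlier rounds, which is built into the definitions of $s_j$, $r_j$, $S_j$ and $R_j$: the weighted-average argument only describes the round-$j$ test on samples still live at round $j$. I would therefore split into two cases. If $X$ already satisfied the \Minerva (resp. \Athena) condition in some round $i < j$, then $\mathcal{A}_A(X) = Correct$ by the convention that every extension of a stopped sample is treated as having passed; otherwise $X$ is live at round $j$, the distributions $s_j, r_j$ apply, and the inequality above gives stopping in round $j$. Equivalently, via property (3) of Theorems \ref{thm:sigma_minerva} and \ref{thm:sigma_athena}, $\tau_j(k_j) \geq \sigma(k_j, p, n_j)$ yields $k_{min,j}(\Minerva) \leq k_{min}(\Bravo, n_j, p, \alpha)$, so $k_j \geq k_{min}(\Bravo, n_j, p, \alpha)$ implies $k_j \geq k_{min,j}(\Minerva)$. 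I would also dispose of the degenerate case $R_j(k_j) = 0$, where the round-$j$ tail carries no mass under $H_0$ and stopping is immediate, and I would note explicitly that the pre-determined round schedule is exactly the hypothesis under which Theorems \ref{thm:sigma_minerva} and \ref{thm:sigma_athena} (and thus property (1)) are available.
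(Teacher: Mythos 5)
Your proposal is correct and follows essentially the same route as the paper's own proof: both reduce the claim to the pointwise inequality $\tau_j(k_j) \geq \sigma(k_j,p,n_j)$ (and likewise for $\omega_j$) by viewing the tail ratio as a weighted average of the monotone increasing terms $\sigma(k,p,n_j)$ for $k \geq k_j$, and dispose of \Athena's second condition via $\delta \geq \alpha$. Your additional bookkeeping (the already-stopped case, the $R_j(k_j)=0$ degeneracy, and the explicit appeal to property (1) of Theorems \ref{thm:sigma_minerva} and \ref{thm:sigma_athena}) only makes explicit what the paper leaves implicit.
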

\begin{proof}
Note that for a fixed election and fixed round sizes, each of $\tau$ and $\omega$, the complementary cdf stopping conditions for \Minerva and \Athena respectively, is a weighted sum of $\sigma$, the monotone increasing \Bravo stopping condition. Further, if $k$ is the number of winner ballots, the elements in the weighted sum are at least as large as $\sigma$. In fact, equality for $\tau$ occurs only when $k$ is the largest possible number of winner ballots in the round. Thus 
\[ \sigma(k_j, p, n_j) \geq \frac{1}{\alpha} \Rightarrow \tau_j(k_j, p, (n_1, n_2, \ldots, n_j) ), \omega_j(k_j, p, (n_1, n_2, \ldots, n_j, \ldots) ) \geq \frac{1}{\alpha} \]
Note that \Athena has a second condition, which is also satisfied
\[  \sigma(k_j, p, n_j) \geq \frac{1}{\alpha} \Rightarrow \sigma(k_j, p, n_j)  \geq \frac{1}{\delta} \]
\end{proof}

\end{document}